\documentclass[letterpaper,12pt,oneside]{article}


\RequirePackage{amsfonts,amssymb,amsmath,amsthm}
\RequirePackage[hang,small,bf]{caption}
\RequirePackage{fancyhdr}
\RequirePackage{achicago}
\RequirePackage{graphicx,graphics,epsfig}
\RequirePackage{appendix}
\RequirePackage{times}

\RequirePackage[english]{babel}
\newtheorem{theorem}{Theorem}
\newtheorem{remark}{Remark}

\setlength\hoffset{0in}
\setlength\oddsidemargin{0.46cm}
\setlength\evensidemargin{0.46cm}
\setlength\textwidth{15.59cm}

\setlength\voffset{0in}
\setlength\topmargin{0.46cm}
\setlength\headheight{0cm}
\setlength\headsep{0cm}
\setlength\textheight{21.94cm}

\setcounter{tocdepth}{1}
\setcounter{secnumdepth}{3}

\usepackage{achicago,boxedminipage}

\begin{document}

\begin{titlepage}
\begin{center}
\setlength{\fboxrule}{0.1cm} 
\setlength{\fboxsep}{1cm} 

\framebox[15.59cm]{
\begin{minipage}[21.94cm]{15.59cm}
\begin{center}
\parbox{13.59cm}{
\begin{center}
\renewcommand{\baselinestretch}{1.5}
{\small\Large\textbf{Combining cluster sampling and link-tracing sampling to estimate 
the size of a hidden population: asymptotic properties of the estimators}}
\normalsize
\end{center}
}

\vspace{1.0cm}

Mart{\'\i}n H. F\'elix Medina

\vspace{2.0cm}
 Technical report 
 
\vspace{1.0cm} 
 
Number: FCFM-UAS-2015-01
 
\vspace{0.5cm}
 Class: Research
 
\vspace{0.5cm}
 June 10, 2015
 
\vspace{2.5cm}

Facultad de Ciencias F{\'\i}sico-Matem\'aticas

\vspace{0.5cm}
 Universidad Aut\'onoma de Sinaloa
 
\vspace{0.5cm}
Ciudad Universitaria, Culiac\'an Sinaloa

\vspace{0.5cm}
M\'exico

\end{center}

\end{minipage}

}

\end{center}

\end{titlepage}

\title{{\Large\textbf{Combining cluster sampling and link-tracing sampling to estimate 
the size of a hidden population: asymptotic properties of the estimators}}}

\author{Mart{\'\i}n H. F\'elix Medina\thanks{mhfelix@uas.edu.mx}\\Facultad de Ciencias 
 F{\'\i}sico-Matem\'aticas de la\\Universidad Aut\'onoma de Sinaloa}

\date{}

\maketitle

\vspace{-1ex}

\begin{abstract}
F\'{e}lix-Medina and Thompson (2004) proposed a variant of link-tracing
sampling to estimate the size of a hidden population such as drug users,
sexual workers or homeless people. In their variant a sampling frame of
sites where the members of the population tend to gather is constructed.
The frame is not assumed to cover the whole population, but only a portion
of it. A simple random sample of sites is selected; the people in the 
sampled sites are identified and are asked to name other members of the 
population which are added to the sample. Those authors proposed maximum 
likelihood estimators of the population size which derived from a 
multinomial model for the numbers of people found in the sampled sites 
and a model that considers that the probability that a person is named 
by any element in a particular sampled site (link-probability) does not 
depend on the named person, that is, that the probabilities are homogeneous. 
Later, F\'{e}lix-Medina et al. (2015) proposed unconditional and conditional
maximum likelihood estimators of the population size which derived from a 
model that takes into account the heterogeneity of the link-probabilities. 
In this work we consider this sampling design and set conditions for a general 
model for the link-probabilities that guarantee the consistency and asymptotic 
normality of the estimators of the population size and of the estimators of 
the parameters of the model for the link-probabilities. In particular we showed
that both the unconditional and conditional maximum likelihood estimators of
the population size are consistent and have asymptotic normal distributions
which are different from each other.
\end{abstract}
 
\vspace{1ex}
 
\textbf{Key words}: \textit{Asymptotic normality, capture-recapture, chain referral 
 sampling, hard-to-detect population, maximum likelihood estimator, snowball sampling}

\newpage

\section{Introduction}

Conventional sampling methods are not appropriate for sampling hidden or
hard-to-reach human populations, such as drug users, sexual-workers and
homeless people, because of the lack of suitable sampling frames. For this
reason, several specific sampling methods for this type of population have
been proposed. See Magnani et al. (2005) and Kalton (2009) for reviews of
some of them. One of this methods is snowball sampling, also known as
link-tracing sampling (LTS) or chain referral sampling. In LTS an initial
sample of members of the population is selected and the sample size is
increased by asking the people in the initial sample to name other members
of the populations. The named people who are not in the initial sample are
added to the sample and they are asked to name other members of the
population. The sampling process might continue in this way until a stopping
rule is satisfied. For reviews of several variants of LTS see Spreen (1992),
Thompson and Frank (2000) and Johnston and Sabin (2010).

F\'{e}lix-Medina and Thompson (2004) proposed a variant of link-tracing
sampling (LTS) to estimate the size of a hidden population. In their variant
they supposed that a sampling frame of sites where the members of the target
population tend to gather can be constructed. As a examples of sites are
public parks, bars and blocks. It is worth nothing that they do not supposed
that the frame covers the whole population, but only a portion of it. Then
an initial sample of sites is selected by a simple random sampling without
replacement design and the members of the population who belong to the
sampled sites are identified. Finally the people in the initial sample are
asked to named other members of the population and the named persons who are
not in the initial sample are included in the sample. Those authors proposed
models to describe the number of members of the population who belong to each
site in the frame and to describe the probability that a person is linked
to a sampled site, that is, that he or she was named by at least one person
who belongs to that site. From those models they derived maximum likelihood
estimators of the population size. In that work those authors considered
that the probability that a person is linked to a site (link-probability)
does not depend on the person, but does on the site, that is, they consider
homogeneous link-probabilities.

F\'{e}lix-Medina and Monjardin (2006) considered this same variant of LTS
and derived estimators of the population size using a Bayesian-assisted
approach, that is, they derived the estimators using the Bayesian approach,
but the inferences were made under a frequentist approach. Those authors
considered an homogeneous two-stage normal model for the logits of the
link-probabilities.

Later F\'{e}lix-Medina et al. (2015) extended the work by F\'{e}lix-Medina
and Thompson (2004) to the case in which the link-probabilities are
heterogeneous, that is, that they depend on the named people. Those authors
modeled the heterogeneity of the link-probabilities by means of a mixed
logistic normal model proposed by Coull and Agresti (1999) in the context of
capture-recapture studies. From this model they derived unconditional and
conditional maximum likelihood estimators of the population size.

In this work we consider the variant of the LTS proposed by F\'{e}lix-Medina
and Thompson (2004) and a general model for the link-probabilities from
which we derive the forms of the unconditional and conditional maximum
likelihood estimators of the population size. We state conditions that
guarantee the consistency and asymptotic normality of both types of
estimators, and we proposed estimators of the variances of the estimators of
the population size. It is worth noting that our work is based on that by
Sanathanan (1972) in which she derived asymptotic properties of both
unconditional and conditional maximum likelihood estimators of the size of a
multinomial distribution from an incomplete observation of the cell totals
which is a situation that occurs in capture-recapture studies. Thus, our
work is basically an adaptation of that by Sanathanan (1972) to the estimators
used in the sampling variant proposed by F\'{e}lix-Medina and Thompson
(2004).

The structure of this document is the following. In section 2 we describe
the variant of LTS proposed by F\'{e}lix-Medina and Thompson (2004). In
section 3 we present probability models that describe the numbers of people
that belong to the sites in the frame and the probabilities of links between
the members of the population and the sites. From these models we construct
the likelihood function that allows us to derive the unconditional and
conditional maximum likelihood estimators of the parameters of the assumed 
model for the link-probabilities and of the population size. In addition, 
we present conditions that guarantee the consistency of the proposed estimators. 
In section 4, which is the central part of this paper, we define the asymptotic
framework under which are derived the asymptotic properties of the proposed
estimators. In section 5 we proposed a method for estimating the variance-covariance
matrices of the estimators of the different vectors of parameters that appear in
the assumed models. Finally, in section 6 we discuss some points to be considered 
whenever the results of this paper want to be used in actual situations. 

\section{Link-tracing sampling design}

In this section we will describe the LTS variant proposed by F\'{e}%
lix-Medina and Thompson (2004). Thus, let $U$ be a finite population of $%
\tau $ people. Let $U_{1}$ be the portion of $U$ that is covered by a
sampling frame of $N$ sites $A_{1},\ldots ,A_{N}$, which are places where
members of the population tend to gather. We will assume that each one of
the\ $\tau _{1}$ persons who are in $U_{1}$ belongs to only one site $A_{i}$
in the frame. Notice that this does not imply that a person cannot be found
in distinct places, but that, as in ordinary cluster sampling, the
researcher has a criterion that allows him or her to assign a person to only
one site. Let $M_{i}$ be the number of people in $U_{1}$ that belong to the
site $A_{i}$, $i=1,\ldots ,N$. The previous assumption implies that $\tau
_{1}=\sum_{1}^{N}M_{i}$. Let $\tau _{2}=\tau -\tau _{1}$ be the number of
people that belong to the portion $U_{2}=U-U_{1}$ of $U$ that is not covered
by the sampling frame.

The sampling procedure is as follows. An initial simple random sample
without replacement (SRSWOR) $S_{A}$ of $n$ sites $A_{1},\ldots ,A_{n}$ is
selected from the frame and the members of the population who belong to each
sampled site are identified. Let $S_{0}$ be the set of people in the initial
sample. Notice that the size of $S_{0}$ is $M=\sum_{1}^{n}M_{i}$. Then from
each sampled site $A_{i}$, $i=1,\ldots ,n$, the people who belong to that
site are asked to name other members of the population. A person and a
sampled site are said to be linked if any of the persons who belong to that
site names that person. Let $S_{1}$ and $S_{2}$ be the sets of people in $%
U_{1}-S_{0}$ and in $U_{2}$, respectively, who are linked to at least one
site in $S_{A}$. Finally, from each named person the following information
is obtained: the portion of $U$ where that person is located, that is, $%
U_{1}-S_{0}$, $A_{i}\in S_{A}$ or $U_{2}$, and the subset of sampled sites
that are linked to him or her.

\section{Unconditional and conditional maximum likelihood estimators}

\subsection{Probability models}

As in F\'{e}lix-Medina and Thompson (2004), we will suppose that the numbers 
$M_{1},\ldots ,$ $M_{N}$ of people who belong to the sites $A_{1},\ldots
,A_{N}$ are independent Poisson random variables with mean $\lambda _{1}$.
Therefore, the joint conditional distribution of $(M_{1},\ldots ,M_{n},\tau
_{1}-M)$ given that $\sum_{1}^{N}M_{i}=\tau _{1}$ is multinomial with
probability mass function (pmf): 
\begin{equation}
f(m_{1},\ldots ,m_{n},\tau _{1}-m|\tau _{1})=\frac{\tau _{1}!}{%
\prod_{1}^{n}m_{i}!(\tau _{1}-m)!}\left( \frac{1}{N}\right) ^{m}\left( 1-%
\frac{n}{N}\right) ^{\tau _{1}-m}.  \label{mult}
\end{equation}%
To model the links between the members of the population and the sampled
sites we will define for person $j$ in $U_{k}-S_{0}$ the vector of
link-indicator variables $\mathbf{X}_{j}^{(k)}=(X_{1j}^{(k)},\ldots
,X_{nj}^{(k)})$, where $X_{ij}^{(k)}=1$ if person $j$ is linked to site $%
A_{i}$ and $X_{ij}^{(k)}=0$ otherwise. Notice that $\mathbf{X}_{j}^{(k)}$
indicates which sites in $S_{A}$ are linked to person $j$. We will suppose
that given $S_{A}$, and consequently the values $M_{i}$s of the sampled
sites, the $X_{ij}^{(k)}$s are Bernoulli random variables with means $%
p_{ij}^{(k)}$s and that the vectors $\mathbf{X}_{j}^{(k)}$ are independent.
Let $\Omega =\{(x_{1},\ldots ,x_{n}):x_{i}=0,1;$ $i=1,\ldots ,n\}$, that is,
the set of all the $n$-dimensional vectors such that each one of their
elements is $0$ or $1$. For $\mathbf{x}=(x_{1},\ldots ,x_{n})\in \Omega $ we
will denote by $\pi _{\mathbf{x}}^{(k)}$ the probability that the vector of
link-indicator variables associated with a randomly selected person from $%
U_{k}-S_{0}$ equals $\mathbf{x}$, that is, the probability that the person
is linked only to the sites $A_{i}$ such that the $i$-th element $x_{i}$ of $%
\mathbf{x}$ equals $1$. We will suppose that $\pi _{\mathbf{x}}^{(k)}$%
depends on a $q_{k}$-dimensional parameter $\boldsymbol{\theta }_{k}=(\theta
_{1}^{(k)},\ldots ,\theta _{q_{k}}^{(k)})\in \mathbf{\Theta }_{k}
\subseteq \mathbb{R}^{q_{k}}$, that is, $\pi _{\mathbf{x}}^{(k)}=
\pi_{\mathbf{x}}^{(k)}(\boldsymbol{\theta }_{k})$, $k=1,2$. In this work we will 
assume that $\boldsymbol{\theta }_{k}$ does not depend on the observed $M_{i}$s.

Similarly, for person $j$ in $A_{i}\in S_{A}$, we will define the vector of
link-indicator variables $\mathbf{X}_{j}^{(A_{i})}=(X_{1j}^{(A_{i})},\ldots
, $ $X_{i-1j}^{(A_{i})},X_{i+1j}^{(A_{i})},\ldots ,X_{nj}^{(A_{i})})$, where 
$X_{i^{\prime }j}^{(A_{i})}=1$ if person $j$ is linked to site $A_{i^{\prime
}}$, $i^{\prime }=1,\ldots ,n$, $i^{\prime }\neq i$ and $X_{i^{\prime
}j}^{(k)}=0$ otherwise. We will suppose that given $S_{A}$ the $X_{i^{\prime
}j}^{(A_{i})}$s are Bernoulli random variables with means $p_{i^{\prime
}j}^{(1)}$s and that the vectors $\mathbf{X}_{j}^{(A_{i})}$ are
independent. For each $A_{i}\in S_{A}$, let $\Omega _{-i}=\{(x_{1},\ldots
,x_{i-1},x_{i+1},\ldots ,x_{n}):x_{i^{\prime }}=0,1;$ $i^{\prime }\neq i,$ $%
i^{\prime }=1,\ldots ,n\}$, that is, the set of all $(n-1)$-dimensional
vectors obtained from the vectors in $\Omega $ by omitting their $i$-th
coordinate. For $\mathbf{x}=(x_{1},\ldots ,x_{i-1},x_{i+1},\ldots ,x_{n})\in
\Omega _{-i}$ we will denote by $\pi _{\mathbf{x}}^{(A_{i})}$ the
probability that the vector of link-indicator variables associated with a
randomly selected person from $A_{i}$ equals $\mathbf{x}$. We will suppose
that $\pi _{\mathbf{x}}^{(A_{i})}$depends on the $q_{1}$-dimensional
parameter $\boldsymbol{\theta }_{1}=(\theta _{1}^{(1)},\ldots ,\theta
_{q_{1}}^{(1)})\in \mathbf{\Theta }_{1}$, that is, $\pi _{\mathbf{x}%
}^{(A_{i})}=\pi _{\mathbf{x}}^{(A_{i})}(\boldsymbol{\theta }_{1})$, $i=1,\ldots
,n$.

For instance, F\'{e}lix-Medina and Monjardin (2006) modeled the
link-probability between person $j$ in $U_{k}-$ $A_{i}$ and site $A_{i}\in
S_{A}$ by $p_{ij}^{(k)}\!=\!\Pr\!\left(\! X_{ij}^{(k)}\! =\! 1|S_{A}\!\right)\!
=\!\exp\! \left(\!\alpha_{i}^{(k)}\!\right)\! /\!\left[\! 1+\!\exp\! \left(\! 
\alpha _{i}^{(k)}\!\right) %
\!\right] $, where the conditional distribution of $\alpha _{i}^{(k)}$ given $%
\psi _{k}$ is normal with mean $\psi _{k}$ and variance $\sigma _{k}^{2}$,
which we denote by $\alpha _{i}^{(k)}|\psi _{k}\sim N\left( \psi _{k},\sigma
_{k}^{2}\right) $ and $\psi _{k}\sim N\left( \mu _{k},\gamma _{k}^{2}\right) 
$. Thus, in this case $\boldsymbol{\theta }_{k}=\left( \mu _{k},\gamma
_{k},\sigma _{k}\right) \in \mathbf{\Theta }_{k}=\mathbb{R}
\times (0,\infty )\times (0,\infty )$, and
{\setlength\arraycolsep{1pt}
\begin{eqnarray*}
\pi _{\mathbf{x}}^{(k)}(\boldsymbol{\theta }_{k})&=&\left[ \int \int \frac{\exp
(\alpha )}{1+\exp (\alpha )}f_{k}(\alpha |\psi )f_{k}(\psi )d\alpha d\psi  %
\right] ^{t}  \\
&&\times\left[ \int \int \frac{1}{1+\exp (\alpha )}f_{k}(\alpha |\psi
)f_{k}(\psi )d\alpha d\psi \right] ^{n-t},
\end{eqnarray*}}%
where $\mathbf{x}=(x_{1},\ldots ,x_{n})\mathbf{\in \Omega }$, $%
t=\sum_{1}^{n}x_{i}$, and $f_{k}(\alpha |\psi )$ and $f_{k}(\psi )$ denote
the probability density functions of the distributions $N\left( \psi
_{k},\sigma _{k}^{2}\right) $ and $N\left( \mu _{k},\gamma _{k}^{2}\right) $%
, respectively. It is worth noting that those authors did not compute $\pi _{%
\mathbf{x}}^{(k)}(\boldsymbol{\theta }_{k})$ because they followed a Bayesian
approach and focused on computing the posterior distribution of the
parameters.

As another example, F\'{e}lix-Medina et al. (2015) modeled the
link-probability between person $j$ in $U_{k}-$ $A_{i}$ and site $A_{i}\in
S_{A}$ by the following Rasch model: $p_{ij}^{(k)}\!=\!\Pr \left(
X_{ij}^{(k)}=1|S_{A}\right)$ $=\exp \left( \alpha _{i}^{(k)}+\beta
_{j}^{(k)}\right) /\left[1+\exp \left( \alpha _{i}^{(k)}+\beta
_{j}^{(k)}\right)\right]$, where $\alpha _{i}^{(k)}$ is a fixed (not random)
effect associated with the site $A_{i}$ and $\beta _{j}^{(k)}$ is a normal
random effect with mean zero and variance $\sigma _{k}^{2}$ associated with
person $j$ in $U_{k}-$ $A_{i}$. Therefore%
\begin{equation*}
\pi _{\mathbf{x}}^{(k)}(\boldsymbol{\theta }_{k})=\int \prod_{i=1}^{n}\frac{\exp %
\left[ x_{i}\left( \alpha _{i}^{(k)}+\sigma _{k}z\right) \right] }{1+\exp
\left( \alpha _{i}^{(k)}+\sigma _{k}z\right) }\phi (z)dz,
\end{equation*}%
where $\mathbf{x}=(x_{1},\ldots ,x_{n})\mathbf{\in \Omega }$, $\boldsymbol{%
\theta }_{k}=(\alpha _{1}^{(k)},\ldots ,\alpha _{n}^{(k)},\sigma _{k})\in 
\mathbf{\Theta }_{k}=\mathbb{R}
^{n}\times (0,\infty )$ and $\phi (\cdot )$ denotes the probability density
function of the standard normal distribution. Those authors compute $\pi _{%
\mathbf{x}}^{(k)}(\boldsymbol{\theta }_{k})$ by means of Gaussian quadrature
formula.

Notice that in the first example the parameter $\boldsymbol{\theta }_{k}$ is
defined previously to the selection of the initial sample because the $%
\alpha _{i}^{(k)}$s are a random sample from a probability distribution
indexed by $\boldsymbol{\theta }_{k}$ and consequently this parameter does not
represent characteristics of the particular selected sample. On the other
hand, in the second example the parameter $\boldsymbol{\theta }_{k}$ is defined
once the initial sample of sites is selected because the $\alpha _{i}^{(k)}$%
s represent characteristics of the particular sites in $S_{A}$. Therefore,
as long as $\boldsymbol{\theta }_{k}$ does not depend on the $M_{i}$s the
results derived in this work are valid for both cases.

\subsection{Likelihood function}

To compute the likelihood function we will factorize it into different
components. One component, $L_{MULT}(\tau _{1})$, is given by the
probability of observing the particular sizes $m_{1},\ldots ,m_{n}$ of the
sites in $S_{A}$; therefore, it is specified by the multinomial distribution
(\ref{mult}). Two additional factors are given by the probabilities of the
configurations of the links between the people in $U_{k}-S_{0}$, $k=1,2$,
and the sites $A_{i}\in S_{A}$. To obtain those factors we will denote by $%
R_{\mathbf{x}}^{(k)}$, $\mathbf{x}=(x_{1},\ldots ,x_{n})\in \Omega $, the
random variable that indicates the number of distinct people in $U_{k}-S_{0}$
whose vectors of link-indicator variables are equal to $\mathbf{x}$, and by $%
R_{k}$ the random variable that indicates the number of distinct people in $%
U_{k}-S_{0}$ who are linked to at least one site $A_{i}\in S_{A}$. Notice
that $R_{k}=\sum_{\mathbf{x}\in \Omega -\{\mathbf{0}\}}R_{\mathbf{x}}^{(k)}$%
, where $\mathbf{0}$ denotes the $n$-dimensional vector of zeros.

Because of the assumptions we made about the vectors $\mathbf{X}_{j}^{(k)}$
of link-indicator variables we have that the conditional joint probability
distribution of the variables $\{R_{\mathbf{x}}^{(1)}\}_{\mathbf{x}\in
\Omega }$ given $S_{A}$ is a multinomial distribution with parameter of size 
$\tau _{1}-m$ and probabilities $\{\pi _{\mathbf{x}}^{(1)}(\boldsymbol{\theta }%
_{1})\}_{\mathbf{x}\in \Omega }$, whereas that of the variables $\{R_{%
\mathbf{x}}^{(2)}\}_{\mathbf{x}\in \Omega }$ is a multinomial distribution
with parameter of size $\tau _{2}$ and probabilities $\{\pi _{\mathbf{x}%
}^{(2)}(\boldsymbol{\theta }_{2})\}_{\mathbf{x}\in \Omega }$. Therefore, the
factors of the likelihood function associated with the probabilities of the
configurations of links between the people in $U_{k}-S_{0}$, $k=1,2$, and
the sites $A_{i}\in S_{A}$ are 
\begin{equation}
L_{1}(\tau _{1},\boldsymbol{\theta }_{1})=\frac{(\tau _{1}-m)!}{(\tau
_{1}-m-r_{1})!\prod_{\mathbf{x}\in \Omega }r_{\mathbf{x}}^{(1)}!}\prod_{%
\mathbf{x}\in \Omega }\left[ \pi _{\mathbf{x}}^{(1)}(\boldsymbol{\theta }_{1})%
\right] ^{r_{\mathbf{x}}^{(1)}}  \label{L1}
\end{equation}%
and 
\begin{equation*}
L_{2}(\tau _{2},\boldsymbol{\theta }_{2})=\frac{\tau _{2}!}{(\tau
_{2}-r_{2})!\prod_{\mathbf{x}\in \Omega }r_{\mathbf{x}}^{(2)}!}\prod_{%
\mathbf{x}\in \Omega }\left[ \pi _{\mathbf{x}}^{(2)}(\boldsymbol{\theta }_{2})%
\right] ^{r_{\mathbf{x}}^{(2)}}.
\end{equation*}%
Notice that $r_{\mathbf{0}}^{(1)}=\tau _{1}-m-r_{1}$ and $r_{\mathbf{0}%
}^{(2)}=\tau _{2}-r_{2}$.

The last factor of the likelihood function is given by the probability of
the configuration of links between the people in $S_{0}$ and the sites $%
A_{i}\in S_{A}$. To obtain this factor, we will denote by $R_{\mathbf{x}%
}^{(A_{i})}$, $\mathbf{x}=(x_{1},\ldots ,x_{i-1},x_{i+1},\ldots ,x_{n})\in
\Omega _{-i}$, the random variable that indicates the number of distinct
people in $A_{i}\in S_{A}$ such that their vectors of link-indicator
variables equal $\mathbf{x}$ and by $R^{(A_{i})}$ the random variable that
indicates the number of distinct people in $A_{i}\in S_{A}$ who are linked
to at least one site $A_{j}\in S_{A}$, $j\neq i$. Notice that $%
R^{(A_{i})}=\sum_{\mathbf{x}\in \Omega _{-i}-\{\mathbf{0}\}}R_{\mathbf{x}%
}^{(A_{i})}$, where $\mathbf{0}$ denotes the $(n-1)$-dimensional vector of
zeros and $R_{\mathbf{0}}^{(A_{i})}=m_{i}-R^{(A_{i})}$. Then, as in the
previous cases, the conditional joint probability distribution of the
variables $\{R_{\mathbf{x}}^{(A_{i})}\}_{\mathbf{x}\in \Omega _{-i}}$ given $%
S_{A}$ is a multinomial distribution with parameter of size $m_{i}$ and
probabilities $\{\pi _{\mathbf{x}}^{(A_{i})}(\boldsymbol{\theta }_{1})\}_{%
\mathbf{x}\in \Omega _{-i}}$. Therefore, the probability of the
configuration of links between the people in $S_{0}$ and the sites $A_{i}\in
S_{A}$ is given by the product of the previous multinomial probabilities
(one for each $A_{i}\in S_{A}$), and consequently the factor of the
likelihood function associated with that probability is 
\begin{equation*}
L_{0}(\boldsymbol{\theta }_{1})=\prod_{i=1}^{n}\frac{m_{i}!}{\prod_{\mathbf{x}%
\in \Omega }r_{\mathbf{x}}^{(A_{i})}!}\prod_{\mathbf{x}\in \Omega _{-i}}%
\left[ \pi _{\mathbf{x}}^{(A_{i})}(\boldsymbol{\theta }_{1})\right] ^{r_{\mathbf{%
x}}^{(A_{i})}}\left[ \pi _{\mathbf{0}}^{(A_{i})}(\boldsymbol{\theta }_{1})\right]
^{m_{i}-r^{(A_{i})}}.
\end{equation*}

From the previous results we have that the maximum likelihood function is
given by 
\begin{equation*}
L(\tau _{1},\tau _{2},\boldsymbol{\theta }_{1},\boldsymbol{\theta }%
_{2})=L_{(1)}(\tau _{1},\boldsymbol{\theta }_{1})L_{(2)}(\tau _{2},\boldsymbol{%
\theta }_{2}),
\end{equation*}%
where{\ 
\begin{eqnarray}
L_{(1)}(\tau _{1},\boldsymbol{\theta }_{1}) &=&L_{MULT}(\tau _{1})L_{1}(\tau
_{1},\boldsymbol{\theta }_{1})L_{0}(\boldsymbol{\theta }_{1})\quad \text{and}
\label{L_(1)} \\
L_{(2)}(\tau _{2},\boldsymbol{\theta }_{2}) &=&L_{2}(\tau _{2},\boldsymbol{\theta }%
_{2}).  \notag
\end{eqnarray}%
}

\subsection{Unconditional and conditional maximum likelihood estimators of $(%
\protect\tau _{k},\boldsymbol{\protect\theta }_{k}^{\ast })$}

In this section we will derive unconditional and conditional maximum
likelihood estimators of the parameters of the previously specified models.
Henceforth we will suppose that conditional on the initial sample $S_{A}$ of
sites the following \textquotedblleft regularity
conditions\textquotedblright\ are satisfied:

\begin{description}
\item[(1)] $\boldsymbol{\theta }_{k}^{\ast }$ is the true value of $\boldsymbol{%
\theta }_{k}$.

\item[(2)] $\boldsymbol{\theta }_{k}^{\ast }$ is an interior point of $\mathbf{%
\Theta }_{k}$.

\item[(3)] $\pi _{\mathbf{x}}^{(k)}(\boldsymbol{\theta }_{k}^{\ast })>0$, $%
\mathbf{x}\in \Omega $ and $\pi _{\mathbf{x}}^{(A_{i})}(\boldsymbol{\theta }%
_{1}^{\ast })>0$, $\mathbf{x}\in \Omega _{-i}$, $i=1,\ldots ,n.$

\item[(4)] $\partial \pi _{\mathbf{x}}^{(k)}(\boldsymbol{\theta }_{k})/\partial
\theta _{j}^{(k)}$, $\mathbf{x}\in \Omega $ and $\partial \pi _{\mathbf{x}%
}^{(A_{i})}(\boldsymbol{\theta }_{1})/\partial \theta _{j}^{(1)}$, $\mathbf{x}%
\in \Omega _{-i}$, $i=1,\ldots ,n$; $j=1,\ldots ,q_{k}$, exist at any $%
\boldsymbol{\theta }_{k}\in \mathbf{\Theta }_{k}$ and $\boldsymbol{\theta }_{1}\in 
\mathbf{\Theta }_{1}$, and are continuous in neighborhoods of $\boldsymbol{%
\theta }_{k}^{\ast }$ and $\boldsymbol{\theta }_{1}^{\ast }$, respectively.

\item[(5)] Given a $\delta _{1}>0$, it is possible to find an $\varepsilon
_{1}>0$ such that{\setlength\arraycolsep{1pt} 
\begin{eqnarray*}
&&\inf_{\left\Vert \boldsymbol{\theta }_{1}-\boldsymbol{\theta }_{1}^{\ast
}\right\Vert >\delta _{1}}\sum\limits_{\mathbf{x}\in \Omega -\{\mathbf{0}\}}%
\frac{\pi _{\mathbf{x}}^{(1)}(\boldsymbol{\theta }_{1}^{\ast })}{1-\pi _{\mathbf{%
0}}^{(1)}(\boldsymbol{\theta }_{1}^{\ast })}\ln \left\{ \frac{\pi _{\mathbf{x}%
}^{(1)}(\boldsymbol{\theta }_{1}^{\ast })/\left[ 1-\pi _{\mathbf{0}}^{(1)}(%
\boldsymbol{\theta }_{1}^{\ast })\right] }{\pi _{\mathbf{x}}^{(1)}(\boldsymbol{%
\theta }_{1})/\left[ 1-\pi _{\mathbf{0}}^{(1)}(\boldsymbol{\theta }_{1})\right] }%
\right\} \\
&&+\frac{1}{(N-n)\left[ 1-\pi _{\mathbf{0}}^{(1)}(\boldsymbol{\theta }_{1}^{\ast
})\right] }\sum_{i=1}^{n}\sum_{\mathbf{x}\in \Omega _{-i}}\pi _{\mathbf{x}%
}^{(A_{i})}(\boldsymbol{\theta }_{1}^{\ast })\ln \left[ \frac{\pi _{\mathbf{x}%
}^{(A_{i})}(\boldsymbol{\theta }_{1}^{\ast })}{\pi _{\mathbf{x}}^{(A_{i})}(%
\boldsymbol{\theta }_{1})}\right] \geq \varepsilon _{1}.
\end{eqnarray*}%
}

\item[(6)] Given a $\delta _{2}>0$, it is possible to find an $\varepsilon
_{2}>0$ such that%
\begin{equation*}
\inf_{\left\Vert \boldsymbol{\theta }_{2}-\boldsymbol{\theta }_{2}^{\ast
}\right\Vert >\delta _{2}}\sum_{\mathbf{x}\in \Omega -\{\mathbf{0}\}}\frac{%
\pi _{\mathbf{x}}^{(2)}(\boldsymbol{\theta }_{2}^{\ast })}{1-\pi _{\mathbf{0}%
}^{(2)}(\boldsymbol{\theta }_{2}^{\ast })}\ln \left\{ \frac{\pi _{\mathbf{x}%
}^{(2)}(\boldsymbol{\theta }_{2}^{\ast })/\left[ 1-\pi _{\mathbf{0}}^{(2)}(%
\boldsymbol{\theta }_{2}^{\ast })\right] }{\pi _{\mathbf{x}}^{(2)}(\boldsymbol{%
\theta }_{2})/\left[ 1-\pi _{\mathbf{0}}^{(2)}(\boldsymbol{\theta }_{2})\right] }%
\right\} \geq \varepsilon _{2}.
\end{equation*}
\end{description}

\begin{remark}
For a differentiable function $f:\mathbb{R}^{q}\rightarrow \mathbb{R}$, the
notation $\partial f(\mathbf{x}_{0})/\partial x_{j}$ represents $\partial f(%
\mathbf{x})/$ $\partial x_{j}|_{\mathbf{x}=\mathbf{x}_{0}}$.
\end{remark}

The regularity conditions ({\bf 1})-({\bf 4}) and ({\bf 6}) or conditions 
equivalent to them have been assumed by several authors such as Birch (1964), 
Rao (1973, Ch. 5), Bishop et al. (1975, Ch. 14), Sanathanan (1972) and Agresti 
(2002, Ch. 14), among others, in the context of deriving asymptotic properties 
of estimators of the parameters of models for the probabilities of a
multinomial distribution. The particular form of condition ({\bf 6}) comes from
Sanathanan (1972) who took it from the first edition of Rao (1973, Ch. 5) and
it is known as a \textit{strong identifiability} condition. Condition ({\bf 5}) is
a modification of ({\bf 6}) to meet the requirements of our particular sampling
design. In general, these conditions imply the existence and consistency of
the UMLEs and CMLEs of $\boldsymbol{\theta }_{1}^{\ast }$ and $\boldsymbol{\theta }%
_{2}^{\ast }$, and that they can be obtained deriving the likelihood
function with respect to $\boldsymbol{\theta }_{1}$ and $\boldsymbol{\theta }_{2}$.

\subsubsection{Unconditional and conditional maximum likelihood estimators
of $\protect\tau _{1}$ and $\boldsymbol{\protect\theta }_{1}^{\ast }$}

Let us firstly consider the unconditional maximum likelihood estimators
(UMLEs) $\hat{\tau}_{1}^{(U)}$ and $\boldsymbol{\hat{\theta}}_{1}^{(U)}$ of $%
\tau _{1}$ and $\boldsymbol{\theta }_{1}^{\ast }$. The log-likelihood function
of $\tau _{1}$ and $\boldsymbol{\theta }_{1}$ is%
\begin{eqnarray*}
l_{(1)}(\tau _{1},\boldsymbol{\theta }_{1}) &=&\ln [L_{(1)}(\tau _{1},\boldsymbol{%
\theta }_{1})] \\
&=&\ln (\tau _{1}!)-\ln [(\tau _{1}-m-r_{1})!]+\tau _{1}\ln (1-n/N) \\
&&+\sum\nolimits_{\mathbf{x}\in \Omega }r_{\mathbf{x}}^{(1)}\ln \left[ \pi
_{\mathbf{x}}^{(1)}(\boldsymbol{\theta }_{1})\right] +\sum\nolimits_{i=1}^{n}%
\sum\nolimits_{\mathbf{x}\in \Omega _{-i}}r_{\mathbf{x}}^{(A_{i})}\ln \left[
\pi _{\mathbf{x}}^{(A_{i})}(\boldsymbol{\theta }_{1})\right] +C,
\end{eqnarray*}%
where $C$ does not depend on $\tau _{1}$ and $\boldsymbol{\theta }_{1}$, and
recall that $r_{\mathbf{0}}^{(1)}=\tau _{1}-m-r_{1}$ and $r_{\mathbf{0}%
}^{(A_{i})}=m_{i}-r^{(A_{i})}$. Then, the UMLE $\boldsymbol{\hat{\theta}}%
_{1}^{(U)}$ of $\boldsymbol{\theta }_{1}^{\ast }$ is the solution to the
following equations:%
\begin{equation}
\frac{\partial l_{(1)}(\tau _{1},\boldsymbol{\theta}_{1})}{\partial\theta_{j}^{(1)}}\!
=\!\sum\limits_{\mathbf{x}\in \Omega }\frac{r_{\mathbf{x}}^{(1)}}{\pi _{%
\mathbf{x}}^{(1)}(\boldsymbol{\theta }_{1})}\frac{\partial \pi _{\mathbf{x}%
}^{(1)}(\boldsymbol{\theta }_{1})}{\partial \theta _{j}^{(1)}}%
+\sum\limits_{i=1}^{n}\sum\limits_{\mathbf{x}\in \Omega _{-i}}\frac{r_{%
\mathbf{x}}^{(A_{i})}}{\pi _{\mathbf{x}}^{(A_{i})}(\boldsymbol{\theta }_{1})}%
\frac{\partial \pi _{\mathbf{x}}^{(A_{i})}(\boldsymbol{\theta }_{1})}{\partial
\theta _{j}^{(1)}}\!=\!0,\text{ }j=1,\ldots,q_{1}.  \label{derivlnl1/thetaj}
\end{equation}

Since $\tau _{1}$ is an integer we will use the \textquotedblleft ratio
method\textquotedblright\ to maximize $L_{(1)}(\tau _{1},\boldsymbol{\theta }%
_{1})$. [See Feller (1968, Ch. 3).] Thus%
\begin{equation*}
\frac{L_{(1)}(\tau _{1},\boldsymbol{\theta }_{1})}{L_{(1)}(\tau _{1}-1,\boldsymbol{%
\theta }_{1})}=\frac{\tau _{1}(1-n/N)\pi _{\mathbf{0}}^{(1)}(\boldsymbol{\theta }%
_{1})}{(\tau _{1}-m-r_{1})}.
\end{equation*}%
Since this ratio is greater than or equal to $1$ if $\tau _{1}\leq $ $%
(m+r_{1})/\left[ 1-(1-n/N)\pi _{\mathbf{0}}^{(1)}(\boldsymbol{\theta }%
_{1})\right] $ and it is smaller than or equal to $1$ if $\tau _{1}$ is
greater than or equal to that quantity, it follows that $\hat{\tau}%
_{1}^{(U)} $ is given by%
\begin{equation}
\hat{\tau}_{1}^{(U)}=\left\lfloor \frac{M+R_{1}}{1-(1-n/N)\pi _{\mathbf{0}%
}^{(1)}\left( \boldsymbol{\hat{\theta}}_{1}^{(U)}\right) }\right\rfloor ,
\label{umlet1}
\end{equation}%
where $\left\lfloor x\right\rfloor $ denotes the largest integer not greater
than $x$. Notice that the right hand-side of (\ref{umlet1}) is not a closed
form for $\hat{\tau}_{1}^{(U)}$ since this expression depends on $\boldsymbol{%
\hat{\theta}}_{1}^{(U)}$. In fact, $\hat{\tau}_{1}^{(U)}$ and $\boldsymbol{\hat{%
\theta}}_{1}^{(U)}$ are obtained by simultaneously solving the set of
equations (\ref{derivlnl1/thetaj}) and (\ref{umlet1}), which is generally
done by numerical methods.

Let us now consider the conditional maximum likelihood estimators (CMLEs) $%
\hat{\tau}_{1}^{(C)}$ and $\boldsymbol{\hat{\theta}}_{1}^{(C)}$ of $\tau _{1}$
and $\boldsymbol{\theta }_{1}^{\ast }$. It is worth noting that this type of
estimators was proposed by Sanathanan (1972) in the context of estimating
the parameter of size of a multinomial distribution from an incomplete
observation of the cell frequencies. The approach we will follow to derive $%
\hat{\tau}_{1}^{(C)}$ and $\boldsymbol{\hat{\theta}}_{1}^{(C)}$ is an adaptation
of Sanathanan's (1972) approach to our case. Thus, from (\ref{L1}) we
have that{\setlength\arraycolsep{1pt}%
\begin{eqnarray}
L_{1}(\tau _{1},\boldsymbol{\theta }_{1}) &=&f\left( \{r_{\mathbf{x}}^{(1)}\}_{%
\mathbf{x}\in \Omega }|\{m_{i}\},\tau _{1},\boldsymbol{\theta }_{1}\right) 
\notag \\
&=&f\left( \{r_{\mathbf{x}}^{(1)}\}_{\mathbf{x}\in \Omega -\{\mathbf{0}%
\}}|r_{1},\{m_{i}\},\tau _{1},\boldsymbol{\theta }_{1}\right) f\left(
r_{1}|\{m_{i}\},\tau _{1},\boldsymbol{\theta }_{1}\right)  \notag \\
&=&\frac{r_{1}!}{\prod_{\mathbf{x}\in \Omega -\{\mathbf{0}\}}r_{\mathbf{x}%
}^{(1)}!}\prod_{\mathbf{x}\in \Omega -\{\mathbf{0}\}}\left[ \frac{\pi _{%
\mathbf{x}}^{(1)}(\boldsymbol{\theta }_{1})}{1-\pi _{\mathbf{0}}^{(1)}(\boldsymbol{%
\theta }_{1})}\right] ^{r_{\mathbf{x}}^{(1)}}  \notag \\
&&\times \frac{(\tau _{1}-m)!}{(\tau _{1}-m-r_{1})!r_{1}!}\left[ 1-\pi _{%
\mathbf{0}}^{(1)}(\boldsymbol{\theta }_{1})\right] ^{r_{1}}\left[ \pi _{\mathbf{0%
}}^{(1)}(\boldsymbol{\theta }_{1})\right] ^{\tau _{1}-m-r_{1}}  \notag \\
&=&L_{11}(\boldsymbol{\theta }_{1})L_{12}(\tau _{1},\boldsymbol{\theta }_{1})
\label{L_1}
\end{eqnarray}%
Notice that the first factor }$L_{11}(\boldsymbol{\theta }_{1})$ is given by the
joint pmf of the multinomial distribution with parameter of size $r_{1}$ and
probabilities $\left\{ \pi _{\mathbf{x}}^{(1)}(\boldsymbol{\theta }_{1})/\left[
1-\pi _{\mathbf{0}}^{(1)}(\boldsymbol{\theta }_{1})\right] \right\} _{\mathbf{x}%
\in \Omega -\{\mathbf{0}\}}$ and that this distribution does not depend on $%
\tau _{1}$. Note also that the second factor $L_{12}(\tau _{1},\boldsymbol{%
\theta }_{1})$ is given by the pmf of the binomial distribution with
parameter of size $\tau _{1}-m$ and probability $1-\pi _{\mathbf{0}}^{(1)}(%
\boldsymbol{\theta }_{1})$. Thus, the CMLE $\boldsymbol{\hat{\theta}}_{1}^{(C)}$ of $%
\boldsymbol{\theta }_{1}^{\ast }$ is the solution to the following system of
equations:%
\begin{eqnarray}
\frac{\partial }{\partial \theta _{j}^{(1)}}\ln [L_{11}(\boldsymbol{\theta }%
_{1})L_{0}(\boldsymbol{\theta }_{1})] &=&\sum\limits_{\mathbf{x}\in \Omega -\{%
\mathbf{0}\}}\frac{r_{\mathbf{x}}^{(1)}}{\pi _{\mathbf{x}}^{(1)}(\boldsymbol{%
\theta }_{1})}\frac{\partial \pi _{\mathbf{x}}^{(1)}(\boldsymbol{\theta }_{1})}{%
\partial \theta _{j}^{(1)}}+\frac{r_{1}}{1-\pi _{\mathbf{0}}^{(1)}(\boldsymbol{%
\theta }_{1})}\frac{\partial \pi _{\mathbf{0}}^{(1)}(\boldsymbol{\theta }_{1})}{%
\partial \theta _{j}^{(1)}}  \notag \\
&&+\sum\limits_{i=1}^{n}\sum\limits_{\mathbf{x}\in \Omega _{-i}}\frac{r_{%
\mathbf{x}}^{(A_{i})}}{\pi _{\mathbf{x}}^{(A_{i})}(\boldsymbol{\theta }_{1})}%
\frac{\partial \pi _{\mathbf{x}}^{(A_{i})}(\boldsymbol{\theta }_{1})}{\partial
\theta _{j}^{(1)}}=0,\text{ }j=1,\ldots ,q_{1}.  \label{derivlnL11L0}
\end{eqnarray}%
The CMLE $\hat{\tau}_{1}^{(C)}$ of $\tau _{1}$ is obtained by the ratio
method. Thus, since%
\begin{equation*}
\frac{L_{MULT}(\tau _{1})L_{12}\left( \tau _{1},\boldsymbol{\theta }_{1}\right) 
}{L_{MULT}(\tau _{1}-1)L_{12}\left( \tau _{1}-1,\boldsymbol{\theta }_{1}\right) }%
=\frac{\tau _{1}(1-n/N)\pi _{\mathbf{0}}^{(1)}\left( \boldsymbol{\theta }%
_{1}\right) }{(\tau _{1}-m-r_{1})},
\end{equation*}%
it follows that 
\begin{equation}
\hat{\tau}_{1}^{(C)}=\left\lfloor \frac{M+R_{1}}{1-(1-n/N)\pi _{\mathbf{0}%
}^{(1)}\left( \boldsymbol{\hat{\theta}}_{1}^{(C)}\right) }\right\rfloor .
\label{cmlet1}
\end{equation}%
Note that (\ref{cmlet1}) is a closed form for $\hat{\tau}_{1}^{(C)}$ since $%
\boldsymbol{\hat{\theta}}_{1}^{(C)}$ is firstly obtained from (\ref{derivlnL11L0}%
).

\subsubsection{Unconditional and conditional maximum likelihood estimators
of $\protect\tau_{2}$ and $\boldsymbol{\protect\theta }_{2}^{\ast }$}

By a similar analysis as that conducted in the previous subsection we have
that the UMLEs $\hat{\tau}_{2}^{(U)}$ and $\boldsymbol{\hat{\theta}}_{2}^{(U)}$
of $\tau _{2}$ and $\boldsymbol{\theta }_{2}^{\ast }$ are the solution to the
following equations:%
\begin{equation*}
\sum\limits_{\mathbf{x}\in \Omega }\frac{r_{\mathbf{x}}^{(2)}}{\pi _{%
\mathbf{x}}^{(2)}(\boldsymbol{\theta }_{2})}\frac{\partial \pi _{\mathbf{x}%
}^{(2)}(\boldsymbol{\theta }_{2})}{\partial \theta _{j}^{(2)}}=0,\text{ }%
j=1,\ldots ,q_{2}
\end{equation*}%
and%
\begin{equation}
\hat{\tau}_{2}^{(U)}=\left\lfloor \frac{R_{2}}{1-\pi _{\mathbf{0}%
}^{(2)}\left( \boldsymbol{\hat{\theta}}_{2}^{(U)}\right) }\right\rfloor .
\label{CMLEt2}
\end{equation}%
where recall that $r_{\mathbf{0}}^{(2)}=\tau _{2}-r_{2}$.

With respect to the conditional estimators, we have that the CMLE $\boldsymbol{%
\hat{\theta}}_{2}^{(C)}$ of $\boldsymbol{\theta }_{2}^{\ast }$ is the solution
to the following equations:%
\begin{equation*}
\sum\limits_{\mathbf{x}\in \Omega -\{\mathbf{0}\}}\frac{r_{\mathbf{x}}^{(2)}%
}{\pi _{\mathbf{x}}^{(2)}(\boldsymbol{\theta }_{2})}\frac{\partial \pi _{\mathbf{%
x}}^{(2)}(\boldsymbol{\theta }_{2})}{\partial \theta _{j}^{(2)}}+\frac{r_{2}}{%
1-\pi _{\mathbf{0}}^{(2)}(\boldsymbol{\theta }_{2})}\frac{\partial \pi _{\mathbf{%
0}}^{(2)}(\boldsymbol{\theta }_{2})}{\partial \theta _{j}^{(2)}}=0,\text{ }%
j=1,\ldots ,q_{2}.
\end{equation*}%
The CMLE $\hat{\tau}_{2}^{(C)}$ of $\tau _{2}$ is given by (\ref{CMLEt2}),
but replacing $\boldsymbol{\hat{\theta}}_{2}^{(U)}$ by $\boldsymbol{\hat{\theta}}%
_{2}^{(C)}$. Note that in this case (\ref{CMLEt2}) is a closed form for $%
\hat{\tau}_{2}^{(C)}$.

\subsubsection{Unconditional and conditional maximum likelihood estimators
of $\protect\tau =\protect\tau _{1}+\protect\tau _{2}$}

The UMLE and CMLE of $\tau =\tau _{1}+\tau _{2}$ are given by $\hat{\tau}%
^{(U)}=\hat{\tau}_{1}^{(U)}+\hat{\tau}_{2}^{(U)}$ and $\hat{\tau}^{(C)}=\hat{%
\tau}_{1}^{(C)}+\hat{\tau}_{2}^{(C)}$, respectively.

\section{Asymptotic properties of the unconditional and conditional maximum
likelihood estimators}

The structure of this section is as follows. Firstly we will define the
asymptotic framework under which we will derive the asymptotic properties of
the estimators. Next we will state and proof a theorem that guarantees the
asymptotic multivariate normal distribution of any estimator of $(\tau_{1},%
\boldsymbol{\theta}_{1}^{\ast})$ that satisfies the conditions expressed in
the theorem. Since not any estimator of $(\tau_{1},\boldsymbol{\theta }%
_{1}^{\ast})$ satisfies the conditions of the theorem, in particular the
CMLE does not, we will state and proof another theorem that guarantees the 
asymptotic multivariate normal distribution of any estimator of $\boldsymbol{\theta}%
_{1}^{\ast}$ that satisfies the conditions of that theorem. Then, we will
prove that the UMLE of $(\tau_{1},\boldsymbol{\theta }_{1}^{\ast})$
satisfies the conditions of the first theorem, whereas the CMLE of 
$\boldsymbol{\theta}_{1}^{\ast}$ satisfies those of the second one. In addition, we
will prove that in spite of that result, the CMLE $\hat{\tau}_{1}^{(C)}$
does have an asymptotic normal distribution although it is not the same as that of 
$\hat{\tau}_{1}^{(U)}$. After that we will consider the
asymptotic properties of estimators of $(\tau _{2},\boldsymbol{\theta}%
_{2}^{\ast})$. Since this problem is exactly the same as that considered by
Sanathanan (1972), we will only state a theorem that guarantees the
asymptotic multivariate normal distribution of any estimator of $(\tau_{2},%
\boldsymbol{\theta}_{2}^{\ast})$ that satisfies the conditions expressed in
the theorem, but we will omit its proof, as well as the proofs that both the
UMLE and the CMLE of $(\tau _{2},\boldsymbol{\theta}_{2}^{\ast})$ satisfy the
conditions of that theorem. Finally, we will obtain the asymptotic
properties of the estimators $\hat{\tau}^{(U)}$ and $\hat{\tau}^{(C)}$ of $%
\tau $.

\subsection{Basic assumptions}

To derive the asymptotic properties of the UMLEs and CMLEs of $\tau_{k}$ and
$\boldsymbol{\theta}_{k}^{\ast}$, $k=1,2$, we will make the following
assumptions:

\begin{description}
\item[A.] $\tau _{k}\rightarrow \infty $, $k=1,2.$

\item[B.] $\tau _{k}/\tau \rightarrow \alpha _{k}$, $0<\alpha _{k}<1$, $%
k=1,2 $.

\item[C.] $N$ and $n$ are fixed positive integer numbers.
\end{description}

For convenience of notation, we will put $\tau _{k}$ either as a subscript
or a superscript of every term that depends on $\tau _{k}$, $k=1,2$. In
addition, convergence in distribution will be denoted by $\overset{D}{%
\rightarrow }$ and convergence in probability by $\overset{P}{\rightarrow }$.

Notice that from (\ref{mult}) it follows that the conditional distribution
of $M_{i}^{(\tau _{1})}$ given $\tau _{1}$ is binomial with parameter of
size $\tau _{1}$ and probability $1/N$, that is $M_{i}^{(\tau _{1})}|\tau
_{1}\sim $ Bin$(\tau _{1},1/N)$; consequently $M_{i}^{(\tau _{1})}/\tau _{1}$
is stochastically bounded, that is, $M_{i}^{(\tau _{1})}=O_{p}(\tau _{1})$.
This means that the size of $U_{1}^{(\tau _{1})}$ is increased by increasing
the sizes of the clusters, even though their number $N$ is kept fixed. In
the same manner, the number of people in the initial sample $S_{0}^{(\tau
_{1})}$, given by $M^{(\tau _{1})}=\sum_{1}^{n}M_{i}^{(\tau _{1})}|\tau
_{1}\sim $ Bin$(\tau _{1},n/N)$, is increased because of the increasing of $%
M_{i}^{(\tau _{1})}$, $i=1,\ldots ,n$, even though $n$ is kept fixed. On the
other hand, since $\tau _{1}-M^{(\tau _{1})}|\tau _{1}\sim $ Bin$(\tau
_{1},1-n/N)$, $R_{1}^{(\tau _{1})}|S_{A}^{(\tau _{1})}\sim $ Bin$(\tau
_{1}-M^{(\tau _{1})},1-\pi _{\mathbf{0}}^{(1)})$ and $R_{2}^{(\tau
_{2})}|S_{A}^{(\tau _{1})}\sim $ Bin$(\tau _{2},1-\pi _{\mathbf{0}}^{(2)})$,
it follows that $R_{1}^{(\tau _{1})}|\tau _{1}\sim $ Bin$\left[ \tau
_{1},\left( 1-n/N\right) \left( 1-\pi _{\mathbf{0}}^{(1)}\right) \right] $
and $R_{2}^{(\tau _{2})}|\tau _{2}\sim $ Bin$(\tau _{2},1-\pi _{\mathbf{0}%
}^{(2)})$; therefore $R_{1}^{(\tau _{1})}=O_{p}(\tau _{1})$ and $%
R_{2}^{(\tau _{2})}=O_{p}(\tau _{2})$. Thus, the \ sizes of the sets $%
S_{1}^{(\tau _{1})}$ and $S_{2}^{(\tau _{2})}$ are increased because $\tau
_{1}$ and $\tau _{2}$ are increased even though the probabilities $\{\pi _{%
\mathbf{x}}^{(1)}\}_{\mathbf{x}\in \Omega }$ and $\{\pi _{\mathbf{x}%
}^{(2)}\}_{\mathbf{x}\in \Omega }$ are kept fixed.

We will end this subsection presenting the conditional and unconditional
distributions of the variables $R_{\mathbf{x}}^{(\tau _{1})}$, $R_{\mathbf{x}%
}^{(A_{i})}$ and $R_{\mathbf{x}}^{(\tau _{2})}$ which will be used later in
this work. Thus, from the multinomial distributions indicated in Subsection
3.1 it follows that $R_{\mathbf{x}}^{(\tau _{1})}|S_{A}^{(\tau _{1})}\sim $
Bin$(\tau _{1}-M^{(\tau _{1})},\pi _{\mathbf{x}}^{(1)})$, $R_{\mathbf{x}%
}^{(A_{i})}|M_{i}^{(\tau _{1})}\!\!\sim\! $ Bin$(M_{i}^{(\tau _{1})},\pi _{\mathbf{%
x}}^{(A_{i})})$ and $R_{\mathbf{x}}^{(\tau _{2})}|S_{A}^{(\tau _{1})}\!\!\sim\! $
Bin$(\tau _{2},\pi _{\mathbf{x}}^{(2)});$ therefore $R_{\mathbf{x}}^{(\tau
_{1})}|\tau _{1}$ $\sim$Bin$\left[ \tau _{1},\left( 1-n/N\right) \pi _{\mathbf{%
x}}^{(1)}\right] $, $R_{\mathbf{x}}^{(A_{i})}|\tau _{1}\sim $Bin$\left( \tau
_{1},\pi _{\mathbf{x}}^{(A_{i})}/N\right) $ and $R_{\mathbf{x}}^{(\tau
_{2})}|\tau _{2}\sim $Bin$(\tau _{2},\pi _{\mathbf{x}}^{(2)})$.

\subsection{Asymptotic multivariate normal distribution of estimators of $(%
\protect\tau_{1},\!\boldsymbol{\protect\theta}_{1}^{\ast})$}

\begin{theorem}
Let $\boldsymbol{\theta }_{1}^{\ast }=(\theta_{1}^{\ast },\ldots ,\theta
_{q_{1}}^{\ast })$ be the true value of $\boldsymbol{\theta }_{1}$. Let $\hat{%
\tau}_{1}^{(\tau _{1})}$ and $\boldsymbol{\hat{\theta}}_{1}^{(\tau _{1})}=(%
\hat{\theta}_{11}^{(\tau _{1})},\ldots ,$ $\hat{\theta}%
_{1q_{1}}^{(\tau _{1})})$ be estimators of $\tau _{1}$ and $\boldsymbol{\theta }%
_{1}^{\ast }$, such that

\begin{description}
\item[(i)] $\boldsymbol{\hat{\theta}}_{1}^{(\tau _{1})}\overset{P}{\rightarrow }%
\boldsymbol{\theta }_{1}^{\ast }.$

\item[(ii)] $\tau _{1}^{-1/2}\left\{ \hat{\tau}_{1}^{(\tau _{1})}-\left(
M^{(\tau _{1})}+R_{1}^{(\tau _{1})}\right) /\left[ 1-(1-n/N)\pi _{\mathbf{0}%
}^{(1)}\left( \boldsymbol{\hat{\theta}}_{1}^{(\tau _{1})}\right) \right]
\right\} \overset{P}{\rightarrow }0.$

\item[(iii)] $\tau _{1}^{-1/2}\left[ \frac{\partial }{\partial \theta
_{j}^{(1)}}l_{(1)}^{(\tau _{1})}(\hat{\tau}_{1}^{(\tau _{1})},\boldsymbol{\hat{%
\theta}}_{1}^{(\tau _{1})})\right] \overset{P}{\rightarrow }0$, $j=1,\ldots
,q_{1}.$
\end{description}

In addition, let $\mathbf{\Sigma }_{1}^{-1}$ be the $(q_{1}+1)\times
(q_{1}+1)$ matrix whose elements are
{\setlength\arraycolsep{0pt} 
\begin{eqnarray*}
\left[ \mathbf{\Sigma }_{1}^{-1}\right] _{1,1} &=&\left[ 1-(1-n/N)\pi _{%
\mathbf{0}}^{(1)}(\boldsymbol{\theta }_{1}^{\ast })\right] /\left[ (1-n/N)\pi _{%
\mathbf{0}}^{(1)}(\boldsymbol{\theta }_{1}^{\ast })\right] , \\
\left[ \mathbf{\Sigma }_{1}^{-1}\right] _{1,j+1} &=&\left[ \mathbf{\Sigma }%
_{1}^{-1}\right] _{j+1,1}=-\left[ 1/\pi _{\mathbf{0}}^{(1)}(\boldsymbol{\theta }%
_{1}^{\ast })\right] \left[ \partial \pi _{\mathbf{0}}^{(1)}(\boldsymbol{\theta }%
_{1}^{\ast })/\partial \theta _{j}^{(1)}\right] \text{, \ }j=1,\ldots ,q_{1},
\\
\left[ \mathbf{\Sigma }_{1}^{-1}\right] _{i+1,j+1} &=&\left[ \mathbf{\Sigma }%
_{1}^{-1}\right] _{j+1,i+1}\!=\!\left(1-\frac{n}{N}\right)\!\!\sum_{\mathbf{x}\in \Omega }\!%
\left[ 1/\pi _{\mathbf{x}}^{(1)}(\boldsymbol{\theta }_{1}^{\ast })\right]\!\! \left[
\partial \pi _{\mathbf{x}}^{(1)}(\boldsymbol{\theta }_{1}^{\ast })/\partial
\theta _{i}^{(1)}\right]\!\! \left[ \partial \pi _{\mathbf{x}}^{(1)}(\boldsymbol{%
\theta }_{1}^{\ast })/\partial \theta _{j}^{(1)}\right] \\
&&+\frac{1}{N}\sum\nolimits_{l=1}^{n}\sum\nolimits_{\mathbf{x}\in \Omega
_{-l}}\left[ 1/\pi _{\mathbf{x}}^{(A_{l})}(\boldsymbol{\theta }_{1}^{\ast })%
\right] \left[ \partial \pi _{\mathbf{x}}^{(A_{l})}(\boldsymbol{\theta }%
_{1}^{\ast })/\partial \theta _{i}^{(1)}\right] \left[ \partial \pi _{%
\mathbf{x}}^{(A_{l})}(\boldsymbol{\theta }_{1}^{\ast })/\partial \theta
_{j}^{(1)}\right] , \\
&&\hspace{50ex}i,j=1,\ldots ,q_{1},
\end{eqnarray*}}%
and which is assumed to be a non-singular matrix. Then%
\begin{equation*}
\left[ \tau _{1}^{-1/2}\left( \hat{\tau}_{1}^{(\tau _{1})}-\tau _{1}\right)
,\tau _{1}^{1/2}\left( \boldsymbol{\hat{\theta}}_{1}^{(\tau _{1})}-\boldsymbol{%
\theta }_{1}^{\ast }\right) \right] \overset{D}{\rightarrow }%
N_{q_{1}+1}\left( \mathbf{0},\mathbf{\Sigma }_{1}\right) ,
\end{equation*}%
where $\mathbf{\Sigma }_{1}$ is the inverse of $\mathbf{\Sigma }_{1}^{-1}$
and $\mathbf 0=(0,\ldots,0)\in\mathbb{R}^{q_{1}+1}$.
\end{theorem}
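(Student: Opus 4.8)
The plan is to follow the Cramér-style argument used by Sanathanan (1972), adapted to the extra structure coming from the multinomial frame component $L_{MULT}$ and the $L_0$ factor. First I would reparametrize conveniently: write $\hat{\tau}_1^{(\tau_1)}$ via condition (ii) in terms of the observable count $M^{(\tau_1)}+R_1^{(\tau_1)}$, so that up to a $o_p(\tau_1^{1/2})$ term the estimator of $\tau_1$ is a smooth function of $\boldsymbol{\hat\theta}_1^{(\tau_1)}$ and of that count. Using the distributional facts recorded at the end of Subsection~4.1 --- in particular $M^{(\tau_1)}|\tau_1\sim\mathrm{Bin}(\tau_1,n/N)$, $R_1^{(\tau_1)}|\tau_1\sim\mathrm{Bin}[\tau_1,(1-n/N)(1-\pi_{\mathbf 0}^{(1)})]$ and the joint multinomial structure of $\{R_{\mathbf x}^{(\tau_1)}\}$ and $\{R_{\mathbf x}^{(A_i)}\}$ --- I would establish a joint CLT for the vector of normalized centered counts $\tau_1^{-1/2}(R_{\mathbf x}^{(\tau_1)}-\mathbb{E}R_{\mathbf x}^{(\tau_1)})$, $\tau_1^{-1/2}(R_{\mathbf x}^{(A_i)}-\mathbb{E}R_{\mathbf x}^{(A_i)})$, $\tau_1^{-1/2}(M^{(\tau_1)}-\tau_1 n/N)$. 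Assumption~A drives the CLT; the covariances are computed from the binomial/multinomial moments.

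Next I would linearize the two defining relations. The estimating equation for $\boldsymbol{\hat\theta}_1^{(\tau_1)}$ is, by condition (iii), $\tau_1^{-1/2}\,\partial_{\theta_j^{(1)}}l_{(1)}^{(\tau_1)}(\hat\tau_1^{(\tau_1)},\boldsymbol{\hat\theta}_1^{(\tau_1)})=o_p(1)$. Taylor-expanding the score $\partial_{\theta_j^{(1)}}l_{(1)}$ about $(\tau_1,\boldsymbol{\theta}_1^\ast)$ and using the consistency statements (i) and (ii), together with the regularity conditions (1)--(4) to control the second derivatives and to justify the usual $\frac1{\tau_1}(\text{observed information})\to$ (a constant matrix) convergence, I would write
\begin{equation*}
\mathbf{0}=\tau_1^{-1/2}\,\nabla_{\!\boldsymbol\theta}\,l_{(1)}^{(\tau_1)}(\tau_1,\boldsymbol{\theta}_1^\ast)-\mathbf{J}\cdot\tau_1^{1/2}\big(\boldsymbol{\hat\theta}_1^{(\tau_1)}-\boldsymbol{\theta}_1^\ast\big)+o_p(1),
\end{equation*}
where $\mathbf{J}$ is the (probability) limit of $-\frac1{\tau_1}$ times the $\boldsymbol\theta_1$-block of the Hessian; a direct computation using $R_{\mathbf x}^{(\tau_1)}/\tau_1\to(1-n/N)\pi_{\mathbf x}^{(1)}$ and $R_{\mathbf x}^{(A_i)}/\tau_1\to\pi_{\mathbf x}^{(A_i)}/N$ identifies $\mathbf J$ as the lower $q_1\times q_1$ block of $\mathbf\Sigma_1^{-1}$. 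Likewise, differentiating $l_{(1)}$ with respect to $\tau_1$ (treating it as continuous via the ratio/digamma expansion, as in Feller) and using condition (ii) to pin down $\hat\tau_1^{(\tau_1)}$ gives the first component of the joint linear system, with the $(1,1)$ and $(1,j+1)$ entries of $\mathbf\Sigma_1^{-1}$ appearing as the corresponding coefficients. Stacking, I obtain
\begin{equation*}
\mathbf\Sigma_1^{-1}\cdot\big[\tau_1^{-1/2}(\hat\tau_1^{(\tau_1)}-\tau_1),\ \tau_1^{1/2}(\boldsymbol{\hat\theta}_1^{(\tau_1)}-\boldsymbol{\theta}_1^\ast)\big]^{\!\top}=\mathbf{W}^{(\tau_1)}+o_p(1),
\end{equation*}
where $\mathbf{W}^{(\tau_1)}$ is a fixed linear image of the normalized centered counts above.

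The final step is to show $\mathbf{W}^{(\tau_1)}\overset{D}{\to}N_{q_1+1}(\mathbf 0,\mathbf\Sigma_1^{-1})$, i.e.\ that the asymptotic covariance of the score/estimating-function vector equals $\mathbf\Sigma_1^{-1}$ itself --- the usual information identity --- after which the non-singularity assumption on $\mathbf\Sigma_1^{-1}$ lets me invert and apply the continuous mapping theorem to conclude $[\tau_1^{-1/2}(\hat\tau_1^{(\tau_1)}-\tau_1),\tau_1^{1/2}(\boldsymbol{\hat\theta}_1^{(\tau_1)}-\boldsymbol{\theta}_1^\ast)]\overset{D}{\to}N_{q_1+1}(\mathbf 0,\mathbf\Sigma_1)$ with $\mathbf\Sigma_1=(\mathbf\Sigma_1^{-1})^{-1}$. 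The main obstacle I anticipate is bookkeeping the covariance structure correctly: unlike the plain multinomial setting of Sanathanan, here the information matrix is a sum of three pieces with different scalings --- the $L_{MULT}$ contribution (governing the $\tau_1$ direction and coupling it to $\pi_{\mathbf 0}^{(1)}$), the $L_1$ contribution from the $\{R_{\mathbf x}^{(1)}\}$ (weighted by $1-n/N$), and the $L_0$ contribution from the within-site counts $\{R_{\mathbf x}^{(A_i)}\}$ (weighted by $1/N$) --- and one has to verify that the cross-covariances between $M^{(\tau_1)}$, $R_1^{(\tau_1)}$, the $R_{\mathbf x}^{(\tau_1)}$'s and the $R_{\mathbf x}^{(A_i)}$'s assemble exactly into the displayed $\mathbf\Sigma_1^{-1}$, so that the information identity holds and no extra off-block terms survive. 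A secondary technical point is the rigorous justification of treating the integer parameter $\tau_1$ as continuous when differentiating and Taylor-expanding; this is handled, as in Sanathanan (1972), by the ratio method together with the floor in the definition of $\hat\tau_1^{(\tau_1)}$ contributing only an $O(1)=o_p(\tau_1^{1/2})$ error.
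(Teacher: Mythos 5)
Your architecture is essentially the paper's: linearize the two estimating relations around $(\tau_1,\boldsymbol{\theta}_1^\ast)$, show the coefficient matrix converges in probability to $\mathbf{\Sigma}_1^{-1}$, prove that the resulting score-type vector converges in distribution to $N_{q_1+1}(\mathbf{0},\mathbf{\Sigma}_1^{-1})$, and invert. But one step, as you describe it, would fail under the stated hypotheses. You propose to Taylor-expand the score $\partial_{\theta_j^{(1)}}l_{(1)}$ about $\boldsymbol{\theta}_1^\ast$ and to invoke "conditions (1)--(4) to control the second derivatives." Condition (4) guarantees only existence and continuity of the \emph{first} partial derivatives of $\pi_{\mathbf{x}}^{(1)}$ and $\pi_{\mathbf{x}}^{(A_i)}$; since the score already contains those first derivatives, a first-order expansion of the score needs second derivatives of the cell probabilities, which the hypotheses do not provide. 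The paper (following Sanathanan and Birch) avoids this: it evaluates the score at $\boldsymbol{\hat\theta}_1^{(\tau_1)}$, subtracts the centered counts, and applies the mean value theorem only to the differences $\pi_{\mathbf{x}}^{(1)}(\boldsymbol{\hat\theta}_1^{(\tau_1)})-\pi_{\mathbf{x}}^{(1)}(\boldsymbol{\theta}_1^\ast)$ and $\pi_{\mathbf{x}}^{(A_l)}(\boldsymbol{\hat\theta}_1^{(\tau_1)})-\pi_{\mathbf{x}}^{(A_l)}(\boldsymbol{\theta}_1^\ast)$ themselves, so only first derivatives at intermediate points appear, and these are handled by continuity together with consistency (i). Your route goes through only if you add a second-differentiability assumption that the theorem does not make.

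On the final CLT, your plan (joint CLT for the centered counts, then verify that the cross-covariances assemble into $\mathbf{\Sigma}_1^{-1}$) is workable but underestimates the bookkeeping: $M^{(\tau_1)}$, the $R_{\mathbf{x}}^{(\tau_1)}$ and the $R_{\mathbf{x}}^{(A_l)}$ are linked through a two-stage conditional multinomial structure, so their unconditional joint covariance is not immediate. The paper discharges this in one stroke by attaching to each population element $t\in U_1$ an i.i.d. random vector $\mathbf{V}_t^{(1)}$ whose normalized sum is exactly $\mathbf{Z}^{(\tau_1)}$; the classical i.i.d. CLT then gives the limit, and the covariance matrix of a single $\mathbf{V}_t^{(1)}$ is computed directly to be $\mathbf{\Sigma}_1^{-1}$ --- precisely the "information identity" you flag as your main anticipated obstacle. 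Adopting that device is the cleanest way to close that part of your argument.
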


\begin{proof}
Evaluating equation (\ref{derivlnl1/thetaj}) at $(\hat{\tau}_{1}^{(\tau
_{1})},\boldsymbol{\hat{\theta}}_{1}^{(\tau _{1})})$ we get%
{\setlength\arraycolsep{0.0pt} 
\begin{eqnarray}
&&\frac{\partial }{\partial \theta _{j}^{(1)}}l_{(1)}^{(\tau _{1})}\left( 
\hat{\tau}_{1}^{(\tau _{1})},\boldsymbol{\hat{\theta}}_{1}^{(\tau _{1})}\right) 
\notag \\
&=&\sum\limits_{\mathbf{x}\in \Omega -\mathbf{0}}\frac{R_{\mathbf{x}%
}^{(\tau _{1})}}{\pi _{\mathbf{x}}^{(1)}(\boldsymbol{\hat{\theta}}_{1}^{(\tau
_{1})})}\frac{\partial \pi _{\mathbf{x}}^{(1)}(\boldsymbol{\hat{\theta}}%
_{1}^{(\tau _{1})})}{\partial \theta _{j}^{(1)}}+\frac{\hat{\tau}_{1}^{(\tau
_{1})}-M^{(\tau _{1})}-R_{1}^{(\tau _{1})}}{\pi _{\mathbf{0}}^{(1)}(\boldsymbol{%
\hat{\theta}}_{1}^{(\tau _{1})})}\frac{\partial \pi _{\mathbf{0}}^{(1)}(%
\boldsymbol{\hat{\theta}}_{1}^{(\tau _{1})})}{\partial \theta _{j}^{(1)}}  \notag
\\
&&+\sum\limits_{l=1}^{n}\sum\limits_{\mathbf{x}\in \Omega _{-l}}\frac{R_{%
\mathbf{x}}^{(A_{l},\tau _{1})}}{\pi _{\mathbf{x}}^{(A_{l})}(\mathbf{\hat{%
\theta}}_{1}^{(\tau _{1})})}\frac{\partial \pi _{\mathbf{x}}^{(A_{l})}(%
\boldsymbol{\hat{\theta}}_{1}^{(\tau _{1})})}{\partial \theta _{j}^{(1)}}  \notag
\\
&=&\sum\limits_{\mathbf{x}\in \Omega }\frac{R_{\mathbf{x}}^{(\tau _{1})}}{%
\pi _{\mathbf{x}}^{(1)}(\boldsymbol{\hat{\theta}}_{1}^{(\tau _{1})})}\frac{%
\partial \pi _{\mathbf{x}}^{(1)}(\boldsymbol{\hat{\theta}}_{1}^{(\tau _{1})})}{%
\partial \theta _{j}^{(1)}}+\frac{\left[ \hat{\tau}_{1}^{(\tau
_{1})}\!-\! M^{(\tau _{1})}\!-\!R_{1}^{(\tau _{1})}\right] -\left[ \tau _{1}\!-M^{(\tau
_{1})}\!-\!R_{1}^{(\tau _{1})}\right] }{\pi _{\mathbf{0}}^{(1)}(\mathbf{\hat{%
\theta}}_{1}^{(\tau _{1})})}\frac{\partial \pi _{\mathbf{0}}^{(1)}(\boldsymbol{%
\hat{\theta}}_{1}^{(\tau _{1})})}{\partial \theta _{j}^{(1)}}  \notag \\
&&+\sum\limits_{l=1}^{n}\sum\limits_{\mathbf{x}\in \Omega _{-l}}\frac{R_{%
\mathbf{x}}^{(A_{l},\tau _{1})}}{\pi _{\mathbf{x}}^{(A_{l})}(\mathbf{\hat{%
\theta}}_{1}^{(\tau _{1})})}\frac{\partial \pi _{\mathbf{x}}^{(A_{l})}(%
\boldsymbol{\hat{\theta}}_{1}^{(\tau _{1})})}{\partial \theta _{j}^{(1)}}  \notag
\\
&=&\sum\limits_{\mathbf{x}\in \Omega }\frac{R_{\mathbf{x}}^{(\tau _{1})}}{%
\pi _{\mathbf{x}}^{(1)}(\boldsymbol{\hat{\theta}}_{1}^{(\tau _{1})})}\frac{%
\partial \pi _{\mathbf{x}}^{(1)}(\boldsymbol{\hat{\theta}}_{1}^{(\tau _{1})})}{%
\partial \theta _{j}^{(1)}}\! +\!\frac{\hat{\tau}_{1}^{(\tau _{1})}\! -\!\tau _{1}}{%
\pi _{\mathbf{0}}^{(1)}(\boldsymbol{\hat{\theta}}_{1}^{(\tau _{1})})}\frac{%
\partial \pi _{\mathbf{0}}^{(1)}(\boldsymbol{\hat{\theta}}_{1}^{(\tau _{1})})}{%
\partial \theta _{j}^{(1)}}\! +\!\sum\limits_{l=1}^{n}\sum\limits_{\mathbf{x}%
\in \Omega _{-l}}\frac{R_{\mathbf{x}}^{(A_{l},\tau _{1})}}{\pi _{\mathbf{x}%
}^{(A_{l})}(\boldsymbol{\hat{\theta}}_{1}^{(\tau _{1})})}\frac{\partial \pi _{%
\mathbf{x}}^{(A_{l})}(\boldsymbol{\hat{\theta}}_{1}^{(\tau _{1})})}{\partial
\theta _{j}^{(1)}}.  \notag \\
&&  \label{deriv-l1}
\end{eqnarray}}

Since%
\begin{equation}
\sum\limits_{\mathbf{x}\in \Omega }\partial \pi _{\mathbf{x}}^{(1)}(\boldsymbol{%
\hat{\theta}}_{1}^{(\tau _{1})})/\partial \theta _{j}^{(1)}=0\text{ \ and \ }%
\sum\limits_{\mathbf{x}\in \Omega _{-l}}\partial \pi _{\mathbf{x}%
}^{(A_{l})}(\boldsymbol{\hat{\theta}}_{1}^{(\tau _{1})})/\partial \theta
_{j}^{(1)}=0,  \label{sum-deriv0}
\end{equation}
from (\ref{deriv-l1}) we get that%
\begin{eqnarray}
&&\tau _{1}^{-1/2}\left\{ \sum\limits_{\mathbf{x}\in \Omega }\frac{R_{%
\mathbf{x}}^{(\tau _{1})}-(\tau _{1}-M^{(\tau _{1})})\pi _{\mathbf{x}}^{(1)}(%
\boldsymbol{\theta }_{1}^{\ast })}{\pi _{\mathbf{x}}^{(1)}(\boldsymbol{\hat{\theta}}%
_{1}^{(\tau _{1})})}\frac{\partial \pi _{\mathbf{x}}^{(1)}(\boldsymbol{\hat{%
\theta}}_{1}^{(\tau _{1})})}{\partial \theta _{j}^{(1)}}\right.  \notag \\
&&+\left. \sum\limits_{l=1}^{n}\sum\limits_{\mathbf{x}\in \Omega _{-l}}%
\frac{R_{\mathbf{x}}^{(A_{l},\tau _{1})}-M_{l}^{(\tau _{1})}\pi _{\mathbf{x}%
}^{(A_{l})}(\boldsymbol{\theta }_{1}^{\ast })}{\pi _{\mathbf{x}}^{(A_{l})}(%
\boldsymbol{\hat{\theta}}_{1}^{(\tau _{1})})}\frac{\partial \pi _{\mathbf{x}%
}^{(A_{l})}(\boldsymbol{\hat{\theta}}_{1}^{(\tau _{1})})}{\partial \theta
_{j}^{(1)}}\right\} -\tau _{1}^{-1/2}\frac{\partial }{\partial \theta
_{j}^{(1)}}l_{(1)}^{(\tau _{1})}\left( \hat{\tau}_{1}^{(\tau _{1})},\boldsymbol{%
\hat{\theta}}_{1}^{(\tau _{1})}\right)  \notag \\
&=&-\tau _{1}^{-1/2}(\hat{\tau}_{1}^{(\tau _{1})}-\tau _{1})\left[ \frac{1}{%
\pi _{\mathbf{0}}^{(1)}(\boldsymbol{\hat{\theta}}_{1}^{(\tau _{1})})}\frac{%
\partial \pi _{\mathbf{0}}^{(1)}(\boldsymbol{\hat{\theta}}_{1}^{(\tau _{1})})}{%
\partial \theta _{j}^{(1)}}\right]  \notag \\
&&+\tau _{1}^{1/2}\left\{ \frac{\tau _{1}-M^{(\tau _{1})}}{\tau _{1}}%
\sum\limits_{\mathbf{x}\in \Omega }\frac{\pi _{\mathbf{x}}^{(1)}(\boldsymbol{%
\hat{\theta}}_{1}^{(\tau _{1})})-\pi _{\mathbf{x}}^{(1)}(\boldsymbol{\theta }%
_{1}^{\ast })}{\pi _{\mathbf{x}}^{(1)}(\boldsymbol{\hat{\theta}}_{1}^{(\tau
_{1})})}\frac{\partial \pi _{\mathbf{x}}^{(1)}(\boldsymbol{\hat{\theta}}%
_{1}^{(\tau _{1})})}{\partial \theta _{j}^{(1)}}\right.  \notag \\
&&+\left. \sum\limits_{l=1}^{n}\frac{M_{l}^{(\tau _{1})}}{\tau _{1}}%
\sum\limits_{\mathbf{x}\in \Omega _{-l}}\frac{\pi _{\mathbf{x}}^{(A_{l})}(%
\boldsymbol{\hat{\theta}}_{1}^{(\tau _{1})})-\pi _{\mathbf{x}}^{(A_{l})}(\boldsymbol{%
\theta }_{1}^{\ast })}{\pi _{\mathbf{x}}^{(A_{l})}(\boldsymbol{\hat{\theta}}%
_{1}^{(\tau _{1})})}\frac{\partial \pi _{\mathbf{x}}^{(A_{l})}(\mathbf{\hat{%
\theta}}_{1}^{(\tau _{1})})}{\partial \theta _{j}^{(1)}}\right\} .
\label{expre-med}
\end{eqnarray}

Let $Y_{\mathbf{x}}^{(\tau _{1})}=R_{\mathbf{x}}^{(\tau _{1})}-(\tau
_{1}-M^{(\tau _{1})})\pi _{\mathbf{x}}^{(1)}(\boldsymbol{\theta }_{1}^{\ast })$, 
$Y_{\mathbf{x}}^{(A_{l},\tau _{1})}=R_{\mathbf{x}}^{(A_{l},\tau
_{1})}-M_{l}^{(\tau _{1})}\pi _{\mathbf{x}}^{(A_{l})}(\boldsymbol{\theta }%
_{1}^{\ast })$ and%
\begin{eqnarray*}
Z_{j+1}^{(\tau _{1})} &=&\tau _{1}^{-1/2}\left[ \sum\limits_{\mathbf{x}\in
\Omega }\frac{R_{\mathbf{x}}^{(\tau _{1})}}{\pi _{\mathbf{x}}^{(1)}(\boldsymbol{%
\theta }_{1}^{\ast })}\frac{\partial \pi _{\mathbf{x}}^{(1)}(\boldsymbol{\theta }%
_{1}^{\ast })}{\partial \theta _{j}^{(1)}}+\sum\limits_{l=1}^{n}\sum%
\limits_{\mathbf{x}\in \Omega _{-l}}\frac{R_{\mathbf{x}}^{(A_{l},\tau _{1})}%
}{\pi _{\mathbf{x}}^{(A_{l})}(\boldsymbol{\theta }_{1}^{\ast })}\frac{\partial
\pi _{\mathbf{x}}^{(A_{l})}(\boldsymbol{\theta }_{1}^{\ast })}{\partial \theta
_{j}^{(1)}}\right] \\
&=&\tau _{1}^{-1/2}\left[ \sum\limits_{\mathbf{x}\in \Omega }\frac{Y_{%
\mathbf{x}}^{(\tau _{1})}}{\pi _{\mathbf{x}}^{(1)}(\boldsymbol{\theta }%
_{1}^{\ast })}\frac{\partial \pi _{\mathbf{x}}^{(1)}(\boldsymbol{\theta }%
_{1}^{\ast })}{\partial \theta _{j}^{(1)}}+\sum\limits_{l=1}^{n}\sum%
\limits_{\mathbf{x}\in \Omega _{-l}}\frac{Y_{\mathbf{x}}^{(A_{l},\tau _{1})}%
}{\pi _{\mathbf{x}}^{(A_{l})}(\boldsymbol{\theta }_{1}^{\ast })}\frac{\partial
\pi _{\mathbf{x}}^{(A_{l})}(\boldsymbol{\theta }_{1}^{\ast })}{\partial \theta
_{j}^{(1)}}\right] ,
\end{eqnarray*}%
where the last equality is obtained using (\ref{sum-deriv0}) but replacing $%
\boldsymbol{\hat{\theta}}_{1}^{(\tau _{1})}$ by $\boldsymbol{\theta }_{1}^{\ast }$.
Then, the difference between the left-hand side of (\ref{expre-med}) and $%
Z_{j+1}^{(\tau _{1})}$ is given by%
\begin{eqnarray}
&&\tau _{1}^{-1/2}\left\{ \sum\limits_{\mathbf{x}\in \Omega }\frac{Y_{%
\mathbf{x}}^{(\tau _{1})}}{\pi _{\mathbf{x}}^{(1)}(\boldsymbol{\hat{\theta}}%
_{1}^{(\tau _{1})})}\frac{\partial \pi _{\mathbf{x}}^{(1)}(\boldsymbol{\hat{%
\theta}}_{1}^{(\tau _{1})})}{\partial \theta _{j}^{(1)}}+\sum%
\limits_{l=1}^{n}\sum\limits_{\mathbf{x}\in \Omega _{-l}}\frac{Y_{\mathbf{x}%
}^{(A_{l},\tau _{1})}}{\pi _{\mathbf{x}}^{(A_{l})}(\boldsymbol{\hat{\theta}}%
_{1}^{(\tau _{1})})}\frac{\partial \pi _{\mathbf{x}}^{(A_{l})}(\boldsymbol{\hat{%
\theta}}_{1}^{(\tau _{1})})}{\partial \theta _{j}^{(1)}}\right\}  \notag \\
&&-\tau _{1}^{-1/2}\frac{\partial }{\partial \theta _{j}^{(1)}}%
l_{(1)}^{(\tau _{1})}\left( \hat{\tau}_{1}^{(\tau _{1})},\boldsymbol{\hat{\theta}%
}_{1}^{(\tau _{1})}\right) -Z_{j+1}^{(\tau _{1})}  \notag \\
&=&\tau _{1}^{-1/2}\left\{ \sum\limits_{\mathbf{x}\in \Omega }Y_{\mathbf{x}%
}^{(\tau _{1})}\left[ \frac{1}{\pi _{\mathbf{x}}^{(1)}(\boldsymbol{\hat{\theta}}%
_{1}^{(\tau _{1})})}\frac{\partial \pi _{\mathbf{x}}^{(1)}(\boldsymbol{\hat{%
\theta}}_{1}^{(\tau _{1})})}{\partial \theta _{j}^{(1)}}-\frac{1}{\pi _{%
\mathbf{x}}^{(1)}(\boldsymbol{\theta }_{1}^{\ast })}\frac{\partial \pi _{\mathbf{%
x}}^{(1)}(\boldsymbol{\theta }_{1}^{\ast })}{\partial \theta _{j}^{(1)}}\right]
\right.  \notag \\
&&+\left. \sum\limits_{l=1}^{n}\sum\limits_{\mathbf{x}\in \Omega _{-l}}Y_{%
\mathbf{x}}^{(A_{l},\tau _{1})}\left[ \frac{1}{\pi _{\mathbf{x}}^{(A_{l})}(%
\boldsymbol{\hat{\theta}}_{1}^{(\tau _{1})})}\frac{\partial \pi _{\mathbf{x}%
}^{(A_{l})}(\boldsymbol{\hat{\theta}}_{1}^{(\tau _{1})})}{\partial \theta
_{j}^{(1)}}-\frac{1}{\pi _{\mathbf{x}}^{(A_{l})}(\boldsymbol{\theta }_{1}^{\ast
})}\frac{\partial \pi _{\mathbf{x}}^{(A_{l})}(\boldsymbol{\theta }_{1}^{\ast })}{%
\partial \theta _{j}^{(1)}}\right] \right\}  \notag \\
&&-\tau _{1}^{-1/2}\frac{\partial }{\partial \theta _{j}^{(1)}}%
l_{(1)}^{(\tau _{1})}\left( \hat{\tau}_{1}^{(\tau _{1})},\boldsymbol{\hat{\theta}%
}_{1}^{(\tau _{1})}\right) .  \label{left-hand-side-zj+1}
\end{eqnarray}%
Since unconditionally $E(Y_{\mathbf{x}}^{(\tau _{1})})=0$ and $V(Y_{\mathbf{x%
}}^{(\tau _{1})})=\tau _{1}(1-n/N)\pi _{\mathbf{x}}^{(1)}(\boldsymbol{\theta }%
_{1}^{\ast })[1-\pi _{\mathbf{x}}^{(1)}(\boldsymbol{\theta }_{1}^{\ast })]$, and
also $E(Y_{\mathbf{x}}^{(A_{l},\tau _{1})})=0$ and $V(Y_{\mathbf{x}%
}^{(A_{l},\tau _{1})})=\tau _{1}(1/N)\pi _{\mathbf{x}}^{(A_{l})}(\boldsymbol{%
\theta }_{1}^{\ast })[1-\pi _{\mathbf{x}}^{(A_{l})}(\boldsymbol{\theta }%
_{1}^{\ast })]$, it follows that $\tau _{1}^{-1/2}Y_{\mathbf{x}}^{(\tau
_{1})}=O_{p}(1)$ and $\tau _{1}^{-1/2}Y_{\mathbf{x}}^{(A_{l},\tau
_{1})}=O_{p}(1)$. Consequently, these results along with conditions \textbf{%
(3)-(4)} and conditions (\textbf{i}) and (\textbf{iii}) of the theorem imply
that (\ref{left-hand-side-zj+1}) converges to zero in probability.

On the other hand, by the mean value theorem for functions of several variables 
we have that%
\begin{eqnarray}
\pi _{\mathbf{x}}^{(1)}(\boldsymbol{\hat{\theta}}_{1}^{(\tau _{1})})-\pi _{%
\mathbf{x}}^{(1)}(\boldsymbol{\theta }_{1}^{\ast })
&=&\sum\nolimits_{i=1}^{q_{1}}\left( \hat{\theta}_{1i}^{(\tau _{1})}-\theta
_{1i}^{\ast }\right) \partial \pi _{\mathbf{x}}^{(1)}(\boldsymbol{\theta }_{1%
\mathbf{x}}^{(\tau _{1})})/\partial \theta _{i}^{(1)}\text{ \ \ and}
\label{dif-pix} \\
\pi _{\mathbf{x}}^{(A_{l})}(\boldsymbol{\hat{\theta}}_{1}^{(\tau _{1})})-\pi _{%
\mathbf{x}}^{(A_{l})}(\boldsymbol{\theta }_{1}^{\ast })
&=&\sum\nolimits_{i=1}^{q_{1}}\left( \hat{\theta}_{1i}^{(\tau _{1})}-\theta
_{1i}^{\ast }\right) \partial \pi _{\mathbf{x}}^{(A_{l})}(\boldsymbol{\theta }%
_{A_{l}\mathbf{x}}^{(\tau _{1})})/\partial \theta _{i}^{(1)},  \notag
\end{eqnarray}%
where $\boldsymbol{\theta }_{1\mathbf{x}}^{(\tau _{1})}$ and $\boldsymbol{\theta }%
_{A_{l}\mathbf{x}}^{(\tau _{1})}$ are between $\boldsymbol{\hat{\theta}}%
_{1}^{(\tau _{1})}$ and $\boldsymbol{\theta }_{1}^{\ast }$. Since the difference
between the right-hand side of (\ref{expre-med}) and $Z_{j+1}^{(\tau _{1})}$
also converges to zero in probability, we have that%
\begin{eqnarray}
&&-\tau _{1}^{-1/2}(\hat{\tau}_{1}^{(\tau _{1})}-\tau _{1})\left[ \frac{1}{%
\pi _{\mathbf{0}}^{(1)}(\boldsymbol{\hat{\theta}}_{1}^{(\tau _{1})})}\frac{%
\partial \pi _{\mathbf{0}}^{(1)}(\boldsymbol{\hat{\theta}}_{1}^{(\tau _{1})})}{%
\partial \theta _{j}^{(1)}}\right]  \notag \\
&&+\tau _{1}^{1/2}\left\{ \frac{\tau _{1}-M^{(\tau _{1})}}{\tau _{1}}%
\sum\limits_{\mathbf{x}\in \Omega }\frac{1}{\pi _{\mathbf{x}}^{(1)}(\boldsymbol{%
\hat{\theta}}_{1}^{(\tau _{1})})}\frac{\partial \pi _{\mathbf{x}}^{(1)}(%
\boldsymbol{\hat{\theta}}_{1}^{(\tau _{1})})}{\partial \theta _{j}^{(1)}}%
\sum_{i=1}^{q_{1}}\left( \hat{\theta}_{1i}^{(\tau _{1})}-\theta _{1i}^{\ast
}\right) \frac{\partial \pi _{\mathbf{x}}^{(1)}(\boldsymbol{\theta }_{1\mathbf{x}%
}^{(\tau _{1})})}{\partial \theta _{i}^{(1)}}\right.  \notag \\
&&+\left. \sum\limits_{l=1}^{n}\frac{M_{l}^{(\tau _{1})}}{\tau _{1}}%
\sum\limits_{\mathbf{x}\in \Omega _{-l}}\frac{1}{\pi _{\mathbf{x}%
}^{(A_{l})}(\boldsymbol{\hat{\theta}}_{1}^{(\tau _{1})})}\frac{\partial \pi _{%
\mathbf{x}}^{(A_{l})}(\boldsymbol{\hat{\theta}}_{1}^{(\tau _{1})})}{\partial
\theta _{j}^{(1)}}\sum\limits_{i=1}^{q_{1}}\left( \hat{\theta}_{1i}^{(\tau
_{1})}-\theta _{1i}^{\ast }\right) \frac{\partial \pi _{\mathbf{x}%
}^{(A_{l})}(\boldsymbol{\theta }_{A_{l}\mathbf{x}}^{(\tau _{1})})}{\partial
\theta _{i}^{(1)}}\right\} -Z_{j+1}^{(\tau _{1})}  \notag \\
&=&\left[ \mathbf{\hat{\Sigma}}_{1}^{-1}\right] _{j+1,1}\left[ \tau
_{1}^{-1/2}(\hat{\tau}_{1}^{(\tau _{1})}-\tau _{1})\right]
+\sum_{i=1}^{q_{1}}\left[ \mathbf{\hat{\Sigma}}_{1}^{-1}\right] _{j+1,i+1}%
\left[ \tau _{1}^{1/2}\left( \hat{\theta}_{1i}^{(\tau _{1})}-\theta
_{1i}^{\ast }\right) \right] -Z_{j+1}^{(\tau _{1})}\overset{P}{\rightarrow }%
0,  \notag \\
&& \label{right-hand-side-zj+1}
\end{eqnarray}%
where%
\begin{eqnarray}
\left[ \mathbf{\hat{\Sigma}}_{1}^{-1}\right] _{j+1,1} &=&-\frac{1}{\pi _{%
\mathbf{0}}^{(1)}(\boldsymbol{\hat{\theta}}_{1}^{(\tau _{1})})}\frac{\partial
\pi _{\mathbf{0}}^{(1)}(\boldsymbol{\hat{\theta}}_{1}^{(\tau _{1})})}{\partial
\theta _{j}^{(1)}}\text{ \ \ \ and }  \notag \\
\left[ \mathbf{\hat{\Sigma}}_{1}^{-1}\right] _{j+1,i+1} &=&\frac{\tau
_{1}-M^{(\tau _{1})}}{\tau _{1}} \sum\limits_{\mathbf{x}\in \Omega }%
\frac{1}{\pi _{\mathbf{x}}^{(1)}(\boldsymbol{\hat{\theta}}_{1}^{(\tau _{1})})}%
\frac{\partial \pi _{\mathbf{x}}^{(1)}(\boldsymbol{\hat{\theta}}_{1}^{(\tau
_{1})})}{\partial \theta _{j}^{(1)}}\frac{\partial \pi _{\mathbf{x}}^{(1)}(%
\boldsymbol{\theta }_{1\mathbf{x}}^{(\tau _{1})})}{\partial \theta _{i}^{(1)}} 
\notag \\
&&+\sum\limits_{l=1}^{n}\frac{M_{l}^{(\tau _{1})}}{\tau _{1}}\sum\limits_{%
\mathbf{x}\in \Omega _{-l}}\frac{1}{\pi _{\mathbf{x}}^{(A_{l})}(\boldsymbol{\hat{%
\theta}}_{1}^{(\tau _{1})})}\frac{\partial \pi _{\mathbf{x}}^{(A_{l})}(%
\boldsymbol{\hat{\theta}}_{1}^{(\tau _{1})})}{\partial \theta _{j}^{(1)}}\frac{%
\partial \pi _{\mathbf{x}}^{(A_{l})}(\boldsymbol{\theta }_{A_{l}\mathbf{x}%
}^{(\tau _{1})})}{\partial \theta _{i}^{(1)}}.  \label{hat-sigma-matrix11}
\end{eqnarray}

Expression (\ref{umlet1}) suggests the following equality in terms of $\hat{%
\tau}_{1}^{(\tau _{1})}-\tau _{1}$ and $\pi _{\mathbf{0}}^{(1)}(\boldsymbol{\hat{%
\theta}}_{1}^{(\tau _{1})})-\pi _{\mathbf{0}}^{(1)}(\boldsymbol{\theta }%
_{1}^{\ast }):$ 
{\setlength\arraycolsep{1pt}
\begin{eqnarray*}
&&\tau _{1}^{-1/2}\left\{ \hat{\tau}_{1}^{(\tau _{1})}\left[ 1-(1-n/N)\pi _{%
\mathbf{0}}^{(1)}(\boldsymbol{\hat{\theta}}_{1}^{(\tau _{1})})\right] -\left(
M^{(\tau _{1})}+R_{1}^{(\tau _{1})}\right) \right\} =\tau _{1}^{-1/2}\left( 
\hat{\tau}_{1}^{(\tau _{1})}-\tau _{1}\right) \\
&&\times \left[ 1\! -\! (1 \!- \! n/N)\pi _{\mathbf{0}}^{(1)}(\boldsymbol{\hat{\theta}}%
_{1}^{(\tau _{1})})\right]\! -\!\tau _{1}^{-1/2}\left\{ \left( M^{(\tau
_{1})}\!+\! R_{1}^{(\tau _{1})}\right)\! -\!\tau _{1}\left[ 1\! -\! (1\! -\! n/N)\pi _{\mathbf{0}%
}^{(1)}(\boldsymbol{\theta }_{1}^{\ast })\right] \right\} \\
&&-\tau _{1}^{1/2}(1-n/N)\left[ \pi _{\mathbf{0}}^{(1)}(\boldsymbol{\hat{\theta}}%
_{1}^{(\tau _{1})})-\pi _{\mathbf{0}}^{(1)}(\boldsymbol{\theta }_{1}^{\ast })%
\right] .
\end{eqnarray*}}

By condition (ii) of the theorem it follows that the left hand-side of the
previous equation converges to zero in probability. Therefore, if we divide
the right hand-side of this equation by $(1-n/N)\pi _{\mathbf{0}}^{(1)}(%
\boldsymbol{\theta }_{1}^{\ast })$ and use (\ref{dif-pix}), we will get that the
following expression also converges to zero in probability, that is

{\setlength\arraycolsep{0pt} 
\begin{eqnarray}
&&\tau _{1}^{-1/2}\left( \hat{\tau}_{1}^{(\tau _{1})}-\tau _{1}\right) \frac{%
 1-(1-n/N)\pi _{\mathbf{0}}^{(1)}(\boldsymbol{\hat{\theta}}_{1}^{(\tau
_{1})}) }{(1-n/N)\pi _{\mathbf{0}}^{(1)}(\boldsymbol{\theta }_{1}^{\ast })%
}  \notag \\
&&-\tau _{1}^{-1/2}\frac{\left( M^{(\tau _{1})}+R_{1}^{(\tau _{1})}\right)
-\tau _{1}\left[ 1-(1-n/N)\pi _{\mathbf{0}}^{(1)}(\boldsymbol{\theta }_{1}^{\ast
})\right] }{(1-n/N)\pi _{\mathbf{0}}^{(1)}(\boldsymbol{\theta }_{1}^{\ast })} 
\notag \\
&&-\sum_{i=1}^{q_{1}}\tau _{1}^{1/2}\left( \hat{\theta}_{1i}^{(\tau
_{1})}-\theta _{1i}^{\ast }\right) \frac{1}{\pi _{\mathbf{0}}^{(1)}(\boldsymbol{%
\theta }_{1}^{\ast })}\frac{\partial \pi _{\mathbf{0}}^{(1)}(\boldsymbol{\theta }%
_{1\mathbf{0}}^{(\tau _{1})})}{\partial \theta _{i}^{(1)}},  \notag \\
&=&\left[ \mathbf{\hat{\Sigma}}_{1}^{-1}\right]_{1,1}\!\left[ \tau
_{1}^{-1/2}\left( \hat{\tau}_{1}^{(\tau _{1})}\! -\!\tau _{1}\right) \right]
\! +\!\sum\nolimits_{i=1}^{q_{1}}\!\left[ \mathbf{\hat{\Sigma}}_{1}^{-1}\right]
_{1,i+1}\!\left[ \tau _{1}^{1/2}\left( \hat{\theta}_{1i}^{(\tau _{1})}\! -\!\theta
_{1i}^{\ast }\right) \right]\! -\! Z_{1}^{(\tau _{1})}\overset{P}{\rightarrow }0,
\label{expr-z1}
\end{eqnarray}}
where $\boldsymbol{\theta }_{1\mathbf{0}}^{(\tau _{1})}$ is between $\boldsymbol{%
\hat{\theta}}_{1}^{(\tau _{1})}$ and $\boldsymbol{\theta }_{1}^{\ast }$ and%
\begin{equation}
\left[ \mathbf{\hat{\Sigma}}_{1}^{-1}\right] _{1,1}=\frac{1-(1-n/N)\pi _{%
\mathbf{0}}^{(1)}(\boldsymbol{\hat{\theta}}_{1}^{(\tau _{1})})}{(1-n/N)\pi _{%
\mathbf{0}}^{(1)}(\boldsymbol{\theta }_{1}^{\ast })}\text{, \ }\left[ \mathbf{%
\hat{\Sigma}}_{1}^{-1}\right] _{1,i+1}=-\frac{1}{\pi _{\mathbf{0}}^{(1)}(%
\boldsymbol{\theta }_{1}^{\ast })}\frac{\partial \pi _{\mathbf{0}}^{(1)}(\boldsymbol{%
\theta }_{1\mathbf{0}}^{(\tau _{1})})}{\partial \theta _{i}^{(1)}}
\label{hat-sigma-matrix-12}
\end{equation}%
and%
\begin{equation*}
Z_{1}^{(\tau _{1})}=\tau _{1}^{-1/2}\frac{\left( M^{(\tau
_{1})}+R_{1}^{(\tau _{1})}\right) -\tau _{1}\left[ 1-(1-n/N)\pi _{\mathbf{0}%
}^{(1)}(\boldsymbol{\theta }_{1}^{\ast })\right] }{(1-n/N)\pi _{\mathbf{0}%
}^{(1)}(\boldsymbol{\theta }_{1}^{\ast })}.
\end{equation*}

Let $\mathbf{W}_{1}^{(\tau _{1})}=[\tau _{1}^{-1/2}(\hat{\tau}_{1}^{(\tau
_{1})}-\tau _{1}),\tau _{1}^{1/2}(\boldsymbol{\hat{\theta}}_{1}^{(\tau _{1})}-%
\boldsymbol{\theta }_{1}^{\ast })]^{\prime }$ and $\mathbf{Z}^{(\tau
_{1})}=[Z_{1}^{(\tau _{1})},Z_{2}^{(\tau _{1})},\ldots ,Z_{q_{1}+1}^{(\tau
_{1})}]^{\prime }$, by the previous results we have that%
\begin{equation}
\mathbf{\hat{\Sigma}}_{1}^{-1}\mathbf{W}_{1}^{(\tau _{1})}-\mathbf{Z}^{(\tau
_{1})}\overset{P}{\rightarrow }\mathbf{0,}  \label{difsigmauz}
\end{equation}%
where $\mathbf{\hat{\Sigma}}_{1}^{-1}$ is the $(q_{1}+1)\times (q_{1}+1)$
matrix whose elements are defined in (\ref{hat-sigma-matrix11}) and (\ref%
{hat-sigma-matrix-12}). Notice that from the definitions of the matrices $%
\mathbf{\Sigma }_{1}^{-1}$ and $\mathbf{\hat{\Sigma}}_{1}^{-1}$, conditions 
\textbf{(3)}-\textbf{(4)} and condition (\textbf{i}) of the theorem along
with the fact that $(\tau _{1}-M^{(\tau _{1})})/\tau _{1}\overset{P}{%
\rightarrow }1-n/N$ and $M_{l}^{(\tau _{1})}/\tau _{1}\overset{P}{%
\rightarrow }1/N$, it follows that $\mathbf{\hat{\Sigma}}_{1}^{-1}\overset{P}%
{\rightarrow }\mathbf{\Sigma }_{1}^{-1}$.

We will show that $\mathbf{Z}^{(\tau _{1})}\overset{D}{\rightarrow }\mathbf{Z%
}\sim N_{q_{1}+1}(\mathbf{0},\mathbf{\Sigma }_{1}^{-1})$ as $\tau
_{1}\rightarrow \infty $. To do this, we will associate with each element $%
t\in U_{1}$, $t=1,\ldots ,\tau _{1}$, a random vector $\mathbf{V}%
_{t}^{(1)}=[V_{t,1}^{(1)},\ldots ,V_{t,q_{1}+1}^{(1)}]^{\prime }$ such that

\begin{description}
\item[(a)] $V_{t,1}^{(1)}=1$ and $V_{t,j+1}^{(1)}=[\pi _{\mathbf{x}}^{(1)}(%
\boldsymbol{\theta }_{1}^{\ast })]^{-1}\partial \pi _{\mathbf{x}}^{(1)}(\boldsymbol{%
\theta }_{1}^{\ast })/\partial \theta _{j}^{(1)}$, $j=1,\ldots ,q_{1}$, if $%
t\in $ $U_{1}-S_{0}$ and its associated vector $X_{t}^{(1)}$ of link-indicator 
variables equals the vector $\mathbf{x}\in \Omega -\{\mathbf{0}\}$;

\item[(b)] $V_{t,1}^{(1)}=-\left[ 1 -(1 - n/N)\pi _{\mathbf{0}}^{(1)}(\boldsymbol{%
\theta }_{1}^{\ast })\right] / \left[ (1 - n/N)\pi _{\mathbf{0}}^{(1)}(%
\boldsymbol{\theta }_{1}^{\ast })\right] $ and $V_{t,j+1}^{(1)}=[\pi _{\mathbf{0}%
}^{(1)}(\boldsymbol{\theta }_{1}^{\ast })]^{-1}$ $\times\partial \pi _{\mathbf{0}}^{(1)}(%
\boldsymbol{\theta }_{1}^{\ast })/\partial \theta _{j}^{(1)}$, $j=1,\ldots
,q_{1}$, if $t\in $ $U_{1}-S_{0}$ and its associated vector $X_{t}^{(1)}$ of
link-indicator variables equals the vector $\mathbf{0}\in \Omega $, and

\item[(c)] $V_{t,1}^{(1)}=1$ and $V_{t,j+1}^{(1)}=[\pi _{\mathbf{x}%
}^{(A_{l})}(\boldsymbol{\theta }_{1}^{\ast })]^{-1}\partial \pi _{\mathbf{x}%
}^{(A_{l})}(\boldsymbol{\theta }_{1}^{\ast })/\partial \theta _{j}^{(1)}$, $%
j=1,\ldots ,q_{1}$, if $t\in $ $A_{l}\in S_{A}$ and its associated vector $%
X_{t}^{(1)}$ of link-indicator variables equals the vector $\mathbf{x}\in
\Omega _{-l}$.
\end{description}

Since%
{\setlength\arraycolsep{1pt} 
\begin{eqnarray}
\tau _{1}^{-1/2}\sum_{t=1}^{\tau _{1}}V_{t1}^{(1)}&=&\tau _{1}^{-1/2}\left[
\left( M^{(\tau _{1})}+R_{1}^{(\tau _{1})}\right) -\left( \tau _{1}-M^{(\tau
_{1})}-R_{1}^{(\tau _{1})}\right) \frac{1-(1-n/N)\pi _{\mathbf{0}}^{(1)}(%
\boldsymbol{\theta }_{1}^{\ast })}{(1-n/N)\pi _{\mathbf{0}}^{(1)}(\boldsymbol{\theta 
}_{1}^{\ast })}\right] 
\nonumber \\
&=&Z_{1}^{(\tau _{1})},
\nonumber
\end{eqnarray}}%
and
{\setlength\arraycolsep{1pt} 
\begin{eqnarray*}
\tau _{1}^{-1/2}\sum_{t=1}^{\tau _{1}}V_{t,j+1}^{(1)} &=&\tau _{1}^{-1/2} 
\left[ \sum_{\mathbf{x}\in \Omega }\frac{R_{\mathbf{x}}^{(\tau _{1})}}{\pi _{%
\mathbf{x}}^{(1)}(\boldsymbol{\theta }_{1}^{\ast })}\frac{\partial \pi _{\mathbf{%
x}}^{(1)}(\boldsymbol{\theta }_{1}^{\ast })}{\partial \theta _{j}^{(1)}}%
+\sum\limits_{l=1}^{n}\sum\limits_{\mathbf{x}\in \Omega _{-l}}\frac{R_{%
\mathbf{x}}^{(A_{l},\tau _{1})}}{\pi _{\mathbf{x}}^{(A_{l})}(\boldsymbol{\theta }%
_{1}^{\ast })}\frac{\partial \pi _{\mathbf{x}}^{(A_{l})}(\boldsymbol{\theta }%
_{1}^{\ast })}{\partial \theta _{j}^{(1)}}\right]\!\! =Z_{j+1}^{(\tau _{1})} \\
& &\hspace{9.5cm} j =1,\ldots ,q_{1};
\end{eqnarray*}}%
it follows that $\mathbf{Z}^{(\tau _{1})}=\tau _{1}^{-1/2}\sum_{t=1}^{\tau
_{1}}\mathbf{V}_{t}^{(1)}$.

From the definition of $V_{t,j}^{(1)}$ we have that%
\begin{equation*}
\Pr \left\{ V_{t,1}^{(1)}=1\right\} =(1-n/N)\left[ 1-\pi _{\mathbf{0}}^{(1)}(%
\boldsymbol{\theta }_{1}^{\ast })\right] +n/N,
\end{equation*}%
\begin{equation*}
\Pr \left\{ V_{t,1}^{(1)}=-\left[ 1-(1-n/N)\pi _{\mathbf{0}}^{(1)}(\boldsymbol{%
\theta }_{1}^{\ast })\right] /\left[ (1-n/N)\pi _{\mathbf{0}}^{(1)}(\boldsymbol{%
\theta }_{1}^{\ast })\right] \right\} =(1-n/N)\pi _{\mathbf{0}}^{(1)}(%
\boldsymbol{\theta }_{1}^{\ast }),
\end{equation*}%
\begin{equation*}
\Pr \left\{ V_{t,j+1}^{(1)}=[\pi _{\mathbf{x}}^{(1)}(\boldsymbol{\theta }%
_{1}^{\ast })]^{-1}\partial \pi _{\mathbf{x}}^{(1)}(\boldsymbol{\theta }%
_{1}^{\ast })/\partial \theta _{j}^{(1)}\right\} =(1-n/N)\pi _{\mathbf{x}%
}^{(1)}(\boldsymbol{\theta }_{1}^{\ast }),\text{ }\mathbf{x}\in \Omega \text{, }%
j=1,\ldots ,q_{1},\text{ and}
\end{equation*}%
{\setlength\arraycolsep{1pt} 
\begin{eqnarray}
\Pr \left\{ V_{t,j+1}^{(1)}=[\pi _{\mathbf{x}}^{(A_{l})}(\boldsymbol{\theta }%
_{1}^{\ast })]^{-1}\partial \pi _{\mathbf{x}}^{(A_{l})}(\boldsymbol{\theta }%
_{1}^{\ast })/\partial \theta _{j}^{(1)}\right\}& =&(1/N)\pi _{\mathbf{x}%
}^{(A_{l})}(\boldsymbol{\theta }_{1}^{\ast })\text{, }\mathbf{x}\in \Omega _{-l}%
\text{, }j=1,\ldots,q_{1},
\nonumber \\
& & \hspace{4.6cm}l=1,\ldots,n;
\nonumber
\end{eqnarray}}%
therefore, the expected values of the variables $V_{t,j}^{(1)}$ are%
\begin{equation*}
E\left( V_{t,1}^{(1)}\right) =(1-n/N)\left[ 1-\pi _{\mathbf{0}}^{(1)}(%
\boldsymbol{\theta }_{1}^{\ast })\right] +n/N-\left[ 1-(1-n/N)\pi _{\mathbf{0}%
}^{(1)}(\boldsymbol{\theta }_{1}^{\ast })\right] =0
\end{equation*}%
and
\begin{eqnarray*}
E\left( V_{t,j+1}^{(1)}\right)&=&\sum_{\mathbf{x}\in \Omega }\partial \pi _{%
\mathbf{x}}^{(1)}(\boldsymbol{\theta }_{1}^{\ast })/\partial \theta
_{j}^{(1)}(1-n/N)+\sum\limits_{l=1}^{n}\sum\limits_{\mathbf{x}\in \Omega
_{-l}}\partial \pi _{\mathbf{x}}^{(A_{l})}(\boldsymbol{\theta }_{1}^{\ast
})/\partial \theta _{j}^{(1)}(1/N)=0,  \\
& &\hspace{50ex} j=1,\ldots,q_{1},
\end{eqnarray*}%
because of (\ref{sum-deriv0}). Thus, $E\left( \mathbf{V}_{t}^{(1)}\right) =%
\mathbf{0},$ $t=1,\ldots ,\tau _{1}$. Furthermore, their variances are%
\begin{eqnarray*}
V\left( V_{t,1}^{(1)}\right)&=&(1-n/N)\!\left[ 1-\pi _{\mathbf{0}}^{(1)}(%
\boldsymbol{\theta }_{1}^{\ast })\right] +n/N+\frac{\left[ 1-(1-n/N)\pi _{%
\mathbf{0}}^{(1)}(\boldsymbol{\theta }_{1}^{\ast })\right] ^{2}}{(1-n/N)\pi _{%
\mathbf{0}}^{(1)}(\boldsymbol{\theta }_{1}^{\ast })}  \\
&=& \frac{1-(1-n/N)\pi _{\mathbf{0}}^{(1)}(\boldsymbol{\theta }_{1}^{\ast })}
{(1-n/N)\pi _{\mathbf{0}}^{(1)}(\boldsymbol{\theta }_{1}^{\ast })}
\end{eqnarray*}
and%
{\setlength\arraycolsep{0pt}
\begin{eqnarray*}
V\left( V_{t,j+1}^{(1)}\right)&=&(1-n/N)\sum_{\mathbf{x}\in \Omega }\frac{1%
}{\pi _{\mathbf{x}}^{(1)}(\boldsymbol{\theta }_{1}^{\ast })}\left[ \frac{%
\partial \pi _{\mathbf{x}}^{(1)}(\boldsymbol{\theta }_{1}^{\ast })}{\partial
\theta _{j}^{(1)}}\right] ^{2}   \\
& &+\frac{1}{N}\sum\limits_{l=1}^{n}\sum\limits_{\mathbf{x}\in \Omega_{-l}}
\frac{1}{\pi _{\mathbf{x}}^{(A_{l})}(\boldsymbol{\theta }_{1}^{\ast })}
\left[ \frac{\partial \pi _{\mathbf{x}}^{(A_{l})}(\boldsymbol{\theta}_{1}^{\ast })}
{\partial \theta _{j}^{(1)}}\right]^{2}, \, j=1,\ldots ,q_{1},
\end{eqnarray*}}%
and their covariances are%
{\setlength\arraycolsep{0pt}
\begin{eqnarray*}
Cov\left( V_{t,1}^{(1)},V_{t,j+1}^{(1)}\right) &=&\sum_{\mathbf{x}\in \Omega
-\{\mathbf{0}\}}\frac{\partial \pi _{\mathbf{x}}^{(1)}(\boldsymbol{\theta }%
_{1}^{\ast })}{\partial \theta _{j}^{(1)}}(1-n/N)-\frac{1-(1-n/N)\pi _{%
\mathbf{0}}^{(1)}(\boldsymbol{\theta }_{1}^{\ast })}{(1-n/N)\pi _{\mathbf{0}%
}^{(1)}(\boldsymbol{\theta }_{1}^{\ast })}\frac{\partial \pi _{\mathbf{0}}^{(1)}(%
\boldsymbol{\theta }_{1}^{\ast })}{\partial \theta _{j}^{(1)}} \\
&&\times(1-n/N)+\sum\limits_{l=1}^{n}\sum\limits_{\mathbf{x}\in \Omega _{-l}}\frac{%
\partial \pi _{\mathbf{x}}^{(A_{l})}(\boldsymbol{\theta }_{1}^{\ast })}{\partial
\theta _{j}^{(1)}}\frac{1}{N} \\
&=&-\frac{1}{\pi _{\mathbf{0}}^{(1)}(\boldsymbol{\theta }_{1}^{\ast })}\frac{%
\partial \pi _{\mathbf{0}}^{(1)}(\boldsymbol{\theta }_{1}^{\ast })}{\partial
\theta _{j}^{(1)}}\text{, }j=1,\ldots ,q_{1},\text{ \ and} \\
Cov\left( V_{t,j+1}^{(1)},V_{t,j^{\prime }+1}^{(1)}\right) &=&(1-n/N)\sum_{%
\mathbf{x}\in \Omega }\frac{1}{\pi _{\mathbf{x}}^{(1)}(\boldsymbol{\theta }%
_{1}^{\ast })}\frac{\partial \pi _{\mathbf{x}}^{(1)}(\boldsymbol{\theta }%
_{1}^{\ast })}{\partial \theta _{j}^{(1)}}\frac{\partial \pi _{\mathbf{x}%
}^{(1)}(\boldsymbol{\theta }_{1}^{\ast })}{\partial \theta _{j^{\prime }}^{(1)}}
\\
&&+\frac{1}{N}\sum\limits_{l=1}^{n}\sum\limits_{\mathbf{x}\in \Omega _{-l}}%
\frac{1}{\pi _{\mathbf{x}}^{(A_{l})}(\boldsymbol{\theta }_{1}^{\ast })}%
\frac{\partial \pi _{\mathbf{x}}^{(A_{l})}(\boldsymbol{\theta }_{1}^{\ast })}{%
\partial \theta _{j}^{(1)}}\frac{\partial \pi _{\mathbf{x}}^{(A_{l})}(%
\boldsymbol{\theta}_{1}^{\ast})}{\partial \theta _{j^{\prime}}^{(1)}},  \\
&& \hspace{20ex}j,j^{\prime}=1,\ldots ,q_{1},\, j\neq j^{\prime}.
\end{eqnarray*}}%
Therefore, the variance-covariance matrix of $\mathbf{V}_{t}^{(1)}$ is $%
\mathbf{\Sigma }_{1}^{-1}$.

Finally, since the $\mathbf{V}_{t}^{(1)}$, $t=1,\ldots ,\tau _{1}$, are
independent and identically distributed random vectors, by the central limit
theorem it follows that%
\begin{equation*}
\mathbf{Z}^{(\tau _{1})}=\tau _{1}^{-1/2}\sum\nolimits_{t=1}^{\tau _{1}}%
\mathbf{V}_{t}^{(1)}\overset{D}{\rightarrow }\mathbf{Z}\sim N_{q_{1}+1}(%
\mathbf{0},\mathbf{\Sigma }_{1}^{-1}).
\end{equation*}%
Consequently by (\ref{difsigmauz}), 
\begin{equation*}
\mathbf{W}_{1}^{(\tau _{1})}=\left[ \tau _{1}^{-1/2}\left( \hat{\tau}%
_{1}^{(\tau _{1})}-\tau _{1}\right) ,\tau _{1}^{1/2}\left( \boldsymbol{\hat{%
\theta}}_{1}^{(\tau _{1})}-\boldsymbol{\theta }_{1}^{\ast }\right) \right] 
\overset{D}{\rightarrow }\mathbf{\Sigma }_{1}\mathbf{Z}\sim N_{q_{1}+1}(%
\mathbf{0},\mathbf{\Sigma }_{1})
\end{equation*}%
as $\mathbf{\hat{\Sigma}}_{1}\overset{P}{\rightarrow }\mathbf{\Sigma }_{1}$.
\end{proof}

\subsection{Asymptotic multivariate normal distribution of estimators of 
$\boldsymbol{\protect\theta}_{1}^{\ast}$}

\begin{theorem}
Let $\boldsymbol{\theta }_{1}^{\ast }=(\theta _{1}^{\ast },\ldots ,\theta
_{q_{1}}^{\ast })$ be the true value of $\boldsymbol{\theta }_{1}$. Let $\boldsymbol{%
\hat{\theta}}_{1}^{(\tau _{1})}=(\boldsymbol{\hat{\theta}}_{11}^{(\tau
_{1})},\ldots ,\boldsymbol{\hat{\theta}}_{1q_{1}}^{(\tau _{1})})$ be an
estimator of $\boldsymbol{\theta }_{1}^{\ast }$, such that

\begin{description}
\item[(i)] $\boldsymbol{\hat{\theta}}_{1}^{(\tau _{1})}\overset{P}{\rightarrow }%
\boldsymbol{\theta }_{1}^{\ast }.$

\item[(ii)] $\tau _{1}^{-1/2}\left\{ \frac{\partial }{\partial \theta
_{j}^{(1)}}\ln \left[ L_{11}^{(\tau _{1})}(\boldsymbol{\hat{\theta}}_{1}^{(\tau
_{1})})L_{0}^{(\tau _{1})}(\boldsymbol{\hat{\theta}}_{1}^{(\tau _{1})})\right]
\right\} \overset{P}{\rightarrow }0$, $j=1,\ldots ,q_{1}.$
\end{description}

In addition, let $\mathbf{\Psi }_{1}^{-1}$ be the $q_{1}\times q_{1}$ matrix
whose elements are
{\setlength\arraycolsep{1pt} 
\begin{eqnarray*}
\left[ \mathbf{\Psi }_{1}^{-1}\right]_{i,j}&=&\left[ \mathbf{\Psi }%
_{1}^{-1}\right]_{j,i}=(1-n/N)[1-\pi _{\mathbf{0}}^{(1)}(\boldsymbol{\theta }%
_{1}^{\ast })]  \\
&&\times\sum\limits_{\mathbf{x}\in \Omega -\{\mathbf{0}\}}\left[ 1/%
\tilde{\pi}_{\mathbf{x}}^{(1)}(\boldsymbol{\theta }_{1}^{\ast })\right] \left[
\partial \tilde{\pi}_{\mathbf{x}}^{(1)}(\boldsymbol{\theta }_{1}^{\ast
})/\partial \theta _{i}^{(1)}\right] \left[ \partial \tilde{\pi}_{\mathbf{x}%
}^{(1)}(\boldsymbol{\theta }_{1}^{\ast })/\partial \theta _{j}^{(1)}\right] \\
&&+\frac{1}{N}\sum\nolimits_{l=1}^{n}\sum\nolimits_{\mathbf{x}\in \Omega
_{-l}}\left[ 1/\pi _{\mathbf{x}}^{(A_{l})}(\boldsymbol{\theta }_{1}^{\ast })%
\right] \left[ \partial \pi _{\mathbf{x}}^{(A_{l})}(\boldsymbol{\theta }%
_{1}^{\ast })/\partial \theta _{i}^{(1)}\right] \left[ \partial \pi _{%
\mathbf{x}}^{(A_{l})}(\boldsymbol{\theta }_{1}^{\ast })/\partial \theta
_{j}^{(1)}\right], \\
& & \hspace{9cm} i,j=1,\ldots ,q_{1},
\end{eqnarray*}}%
where $\tilde{\pi}_{\mathbf{x}}^{(1)}(\boldsymbol{\theta }_{1}^{\ast })=\pi _{%
\mathbf{x}}^{(1)}(\boldsymbol{\theta }_{1}^{\ast })/[1-\pi _{\mathbf{0}}^{(1)}(%
\boldsymbol{\theta }_{1}^{\ast })],$ $\mathbf{x}\in \Omega -\{\mathbf{0}\}$, and
suppose that $\mathbf{\Psi }_{1}^{-1}$ is a non-singular matrix. Then%
\begin{equation*}
\tau _{1}^{1/2}\left[ \boldsymbol{\hat{\theta}}_{1}^{(\tau _{1})}-\boldsymbol{\theta 
}_{1}^{\ast }\right] \overset{D}{\rightarrow }N_{q_{1}}\left( \mathbf{0},%
\mathbf{\Psi }_{1}\right) ,
\end{equation*}%
where $\mathbf{\Psi }_{1}$ is the inverse of $\ \mathbf{\Psi }_{1}^{-1}$ and 
$\mathbf{0}=(0,\ldots ,0)\in\mathbb{R}^{q_{1}}$.

Furthermore, if \ $\hat{\tau}_{1}^{(\tau _{1})}$ is an estimator of $\tau
_{1}$ such that

\begin{description}
\item[(iii)] $\tau _{1}^{-1/2}\left\{ \hat{\tau}_{1}^{(\tau _{1})}-\left(
M^{(\tau _{1})}+R_{1}^{(\tau _{1})}\right) /\left[ 1-(1-n/N)\pi _{\mathbf{0}%
}^{(1)}\left( \boldsymbol{\hat{\theta}}_{1}^{(\tau _{1})}\right) \right]
\right\} \overset{P}{\rightarrow }0,$
\end{description}

then%
\begin{equation*}
\tau _{1}^{-1/2}\left( \hat{\tau}_{1}^{(\tau _{1})}-\tau _{1}\right) \overset%
{D}{\rightarrow }N(0,\sigma _{1}^{2}),
\end{equation*}%
where%
\begin{equation}
\sigma _{1}^{2}=\frac{1-n/N}{1-(1-n/N)\pi _{\mathbf{0}}^{(1)}(\boldsymbol{\theta 
}_{1}^{\ast })}\left\{ \pi _{\mathbf{0}}^{(1)}(\boldsymbol{\theta }_{1}^{\ast })+%
\frac{(1-n/N)\left[ \nabla \pi _{\mathbf{0}}^{(1)}\left( \boldsymbol{\theta }%
_{1}^{\ast }\right) \right] ^{\prime }\mathbf{\Psi }_{1}\left[ \nabla \pi _{%
\mathbf{0}}^{(1)}\left( \boldsymbol{\theta }_{1}^{\ast }\right) \right] }{%
1-(1-n/N)\pi _{\mathbf{0}}^{(1)}(\boldsymbol{\theta }_{1}^{\ast })}\right\} ,
\label{sigma12}
\end{equation}%
and $\nabla \pi _{\mathbf{0}}^{(1)}\left( \boldsymbol{\theta }_{1}^{\ast
}\right) =\left[ \partial \pi _{\mathbf{0}}^{(1)}\left( \boldsymbol{\theta }%
_{1}^{\ast }\right) /\partial \theta _{1}^{(1)},\ldots ,\partial \pi _{%
\mathbf{0}}^{(1)}\left( \boldsymbol{\theta }_{1}^{\ast }\right) /\partial \theta
_{q_{1}}^{(1)}\right] ^{\prime }$ is the gradient of $\pi _{\mathbf{0}%
}^{(1)}(\boldsymbol{\theta }_{1})$ evaluated at $\boldsymbol{\theta }_{1}^{\ast }$.
\end{theorem}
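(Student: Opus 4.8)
The plan is to prove the two assertions in turn, following closely the proof of Theorem 1 but working with the conditional log-likelihood $\ln[L_{11}^{(\tau_{1})}L_{0}^{(\tau_{1})}]$ in place of $l_{(1)}^{(\tau_{1})}$. Write $\tilde{\pi}_{\mathbf{x}}^{(1)}(\boldsymbol{\theta}_{1})=\pi_{\mathbf{x}}^{(1)}(\boldsymbol{\theta}_{1})/[1-\pi_{\mathbf{0}}^{(1)}(\boldsymbol{\theta}_{1})]$ for $\mathbf{x}\in\Omega-\{\mathbf{0}\}$, so that $L_{11}^{(\tau_{1})}$ is the multinomial pmf of $\{R_{\mathbf{x}}^{(\tau_{1})}\}_{\mathbf{x}\in\Omega-\{\mathbf{0}\}}$ with size $R_{1}^{(\tau_{1})}$ and probabilities $\{\tilde{\pi}_{\mathbf{x}}^{(1)}(\boldsymbol{\theta}_{1})\}$, and condition \textbf{(ii)} states that $\tau_{1}^{-1/2}$ times the left-hand side of (\ref{derivlnL11L0}) evaluated at $\boldsymbol{\hat{\theta}}_{1}^{(\tau_{1})}$ tends to $0$ in probability.

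For the first assertion I would evaluate that score at $\boldsymbol{\hat{\theta}}_{1}^{(\tau_{1})}$, use the identities $\sum_{\mathbf{x}\in\Omega-\{\mathbf{0}\}}\partial\tilde{\pi}_{\mathbf{x}}^{(1)}(\boldsymbol{\hat{\theta}}_{1}^{(\tau_{1})})/\partial\theta_{j}^{(1)}=0$ and $\sum_{\mathbf{x}\in\Omega_{-l}}\partial\pi_{\mathbf{x}}^{(A_{l})}(\boldsymbol{\hat{\theta}}_{1}^{(\tau_{1})})/\partial\theta_{j}^{(1)}=0$ to insert the centerings $R_{1}^{(\tau_{1})}\tilde{\pi}_{\mathbf{x}}^{(1)}$ and $M_{l}^{(\tau_{1})}\pi_{\mathbf{x}}^{(A_{l})}$, and thereby introduce the variables $\tilde{Y}_{\mathbf{x}}^{(\tau_{1})}=R_{\mathbf{x}}^{(\tau_{1})}-R_{1}^{(\tau_{1})}\tilde{\pi}_{\mathbf{x}}^{(1)}(\boldsymbol{\theta}_{1}^{\ast})$ together with the $Y_{\mathbf{x}}^{(A_{l},\tau_{1})}$ of Theorem 1. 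Conditionally on $R_{1}^{(\tau_{1})}$ and on the $M_{l}^{(\tau_{1})}$, these are centered multinomial fluctuations, so $\tau_{1}^{-1/2}\tilde{Y}_{\mathbf{x}}^{(\tau_{1})}=O_{p}(1)$ and $\tau_{1}^{-1/2}Y_{\mathbf{x}}^{(A_{l},\tau_{1})}=O_{p}(1)$; then a mean value expansion of the $\tilde{\pi}_{\mathbf{x}}^{(1)}$ and $\pi_{\mathbf{x}}^{(A_{l})}$ about $\boldsymbol{\theta}_{1}^{\ast}$, together with conditions \textbf{(3)}--\textbf{(4)}, \textbf{(i)}--\textbf{(ii)}, and the facts $R_{1}^{(\tau_{1})}/\tau_{1}\overset{P}{\rightarrow}(1-n/N)[1-\pi_{\mathbf{0}}^{(1)}(\boldsymbol{\theta}_{1}^{\ast})]$ and $M_{l}^{(\tau_{1})}/\tau_{1}\overset{P}{\rightarrow}1/N$, yields the linear representation $\mathbf{\hat{\Psi}}_{1}^{-1}[\tau_{1}^{1/2}(\boldsymbol{\hat{\theta}}_{1}^{(\tau_{1})}-\boldsymbol{\theta}_{1}^{\ast})]-\tilde{\mathbf{Z}}^{(\tau_{1})}\overset{P}{\rightarrow}\mathbf{0}$, where $\mathbf{\hat{\Psi}}_{1}^{-1}\overset{P}{\rightarrow}\mathbf{\Psi}_{1}^{-1}$ and $\tilde{\mathbf{Z}}^{(\tau_{1})}=(\tilde{Z}_{1}^{(\tau_{1})},\ldots,\tilde{Z}_{q_{1}}^{(\tau_{1})})$ is the standardized conditional score at $\boldsymbol{\theta}_{1}^{\ast}$.

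To identify the limit of $\tilde{\mathbf{Z}}^{(\tau_{1})}$ I would, mirroring the proof of Theorem 1, attach to each $t\in U_{1}$ a $q_{1}$-vector $\mathbf{U}_{t}^{(1)}$: with $j$-th entry $[\tilde{\pi}_{\mathbf{x}}^{(1)}(\boldsymbol{\theta}_{1}^{\ast})]^{-1}\partial\tilde{\pi}_{\mathbf{x}}^{(1)}(\boldsymbol{\theta}_{1}^{\ast})/\partial\theta_{j}^{(1)}$ if $t\in U_{1}-S_{0}$ has link-indicator vector $\mathbf{x}\in\Omega-\{\mathbf{0}\}$; with $j$-th entry $[\pi_{\mathbf{x}}^{(A_{l})}(\boldsymbol{\theta}_{1}^{\ast})]^{-1}\partial\pi_{\mathbf{x}}^{(A_{l})}(\boldsymbol{\theta}_{1}^{\ast})/\partial\theta_{j}^{(1)}$ if $t\in A_{l}\in S_{A}$ has within-site link-indicator vector $\mathbf{x}\in\Omega_{-l}$; and $\mathbf{U}_{t}^{(1)}=\mathbf{0}$ if $t\in U_{1}-S_{0}$ is unlinked. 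One checks that $\tilde{\mathbf{Z}}^{(\tau_{1})}=\tau_{1}^{-1/2}\sum_{t=1}^{\tau_{1}}\mathbf{U}_{t}^{(1)}$, that $E(\mathbf{U}_{t}^{(1)})=\mathbf{0}$ (again by the two sum-to-zero identities), and, by evaluating the second moments with the probabilities $(1-n/N)\pi_{\mathbf{x}}^{(1)}(\boldsymbol{\theta}_{1}^{\ast})$ and $(1/N)\pi_{\mathbf{x}}^{(A_{l})}(\boldsymbol{\theta}_{1}^{\ast})$, that the variance-covariance matrix of $\mathbf{U}_{t}^{(1)}$ equals exactly $\mathbf{\Psi}_{1}^{-1}$. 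Since, conditionally on $S_{A}$, the $\mathbf{U}_{t}^{(1)}$ are independent and identically distributed and bounded, the multivariate central limit theorem gives $\tilde{\mathbf{Z}}^{(\tau_{1})}\overset{D}{\rightarrow}\tilde{\mathbf{Z}}\sim N_{q_{1}}(\mathbf{0},\mathbf{\Psi}_{1}^{-1})$, and Slutsky's theorem together with $\mathbf{\hat{\Psi}}_{1}^{-1}\overset{P}{\rightarrow}\mathbf{\Psi}_{1}^{-1}$ (non-singular) gives $\tau_{1}^{1/2}(\boldsymbol{\hat{\theta}}_{1}^{(\tau_{1})}-\boldsymbol{\theta}_{1}^{\ast})\overset{D}{\rightarrow}\mathbf{\Psi}_{1}\tilde{\mathbf{Z}}\sim N_{q_{1}}(\mathbf{0},\mathbf{\Psi}_{1})$, which is the first displayed conclusion.

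For the second assertion, condition \textbf{(iii)} reduces the problem to the limiting law of $\tau_{1}^{-1/2}\{(M^{(\tau_{1})}+R_{1}^{(\tau_{1})})g(\boldsymbol{\hat{\theta}}_{1}^{(\tau_{1})})-\tau_{1}\}$, where $g(\boldsymbol{\theta}_{1})=[1-(1-n/N)\pi_{\mathbf{0}}^{(1)}(\boldsymbol{\theta}_{1})]^{-1}$. Put $p_{0}=1/g(\boldsymbol{\theta}_{1}^{\ast})=1-(1-n/N)\pi_{\mathbf{0}}^{(1)}(\boldsymbol{\theta}_{1}^{\ast})$ and let $I_{t}$ indicate that person $t\in U_{1}$ is counted, i.e.\ lies in $S_{0}$ or is linked to some sampled site; conditionally on $S_{A}$ the $I_{t}$ are independent Bernoulli$(p_{0})$ variables and $M^{(\tau_{1})}+R_{1}^{(\tau_{1})}=\sum_{t=1}^{\tau_{1}}I_{t}$. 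Writing the quantity above as $g(\boldsymbol{\theta}_{1}^{\ast})\tau_{1}^{-1/2}\sum_{t}(I_{t}-p_{0})+g(\boldsymbol{\theta}_{1}^{\ast})^{-1}\tau_{1}^{1/2}[g(\boldsymbol{\hat{\theta}}_{1}^{(\tau_{1})})-g(\boldsymbol{\theta}_{1}^{\ast})]+o_{p}(1)$, expanding $g$ to first order using $\nabla g(\boldsymbol{\theta}_{1})=(1-n/N)g(\boldsymbol{\theta}_{1})^{2}\nabla\pi_{\mathbf{0}}^{(1)}(\boldsymbol{\theta}_{1})$ and condition \textbf{(4)}, and inserting the representation $\tau_{1}^{1/2}(\boldsymbol{\hat{\theta}}_{1}^{(\tau_{1})}-\boldsymbol{\theta}_{1}^{\ast})=\mathbf{\Psi}_{1}\tau_{1}^{-1/2}\sum_{t}\mathbf{U}_{t}^{(1)}+o_{p}(1)$ obtained above, one reaches $\tau_{1}^{-1/2}\sum_{t=1}^{\tau_{1}}[a(I_{t}-p_{0})+\mathbf{b}^{\prime}\mathbf{\Psi}_{1}\mathbf{U}_{t}^{(1)}]+o_{p}(1)$ with $a=g(\boldsymbol{\theta}_{1}^{\ast})$ and $\mathbf{b}=(1-n/N)g(\boldsymbol{\theta}_{1}^{\ast})\nabla\pi_{\mathbf{0}}^{(1)}(\boldsymbol{\theta}_{1}^{\ast})$. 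The key observation is that $I_{t}\mathbf{U}_{t}^{(1)}=\mathbf{U}_{t}^{(1)}$ identically, since an uncounted person has $\mathbf{U}_{t}^{(1)}=\mathbf{0}$; hence $Cov(I_{t},\mathbf{U}_{t}^{(1)})=E(\mathbf{U}_{t}^{(1)})-p_{0}E(\mathbf{U}_{t}^{(1)})=\mathbf{0}$, the independent and identically distributed summands $a(I_{t}-p_{0})+\mathbf{b}^{\prime}\mathbf{\Psi}_{1}\mathbf{U}_{t}^{(1)}$ are centered with variance $a^{2}p_{0}(1-p_{0})+\mathbf{b}^{\prime}\mathbf{\Psi}_{1}\mathbf{b}$, and the central limit theorem gives $\tau_{1}^{-1/2}(\hat{\tau}_{1}^{(\tau_{1})}-\tau_{1})\overset{D}{\rightarrow}N(0,\sigma_{1}^{2})$ with $\sigma_{1}^{2}=a^{2}p_{0}(1-p_{0})+\mathbf{b}^{\prime}\mathbf{\Psi}_{1}\mathbf{b}$; substituting $a$, $p_{0}$, $\mathbf{b}$ and factoring out $(1-n/N)/[1-(1-n/N)\pi_{\mathbf{0}}^{(1)}(\boldsymbol{\theta}_{1}^{\ast})]$ turns this into (\ref{sigma12}). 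I expect the main effort to lie in the bookkeeping of the first step, namely carrying out the substitutions of $\boldsymbol{\hat{\theta}}_{1}^{(\tau_{1})}$ for $\boldsymbol{\theta}_{1}^{\ast}$ so as to cleanly isolate $\mathbf{\hat{\Psi}}_{1}^{-1}$ and $\tilde{\mathbf{Z}}^{(\tau_{1})}$, while the main conceptual subtlety is the asymptotic independence of the counting fluctuation $\sum_{t}(I_{t}-p_{0})$ from the estimator error, which is what produces the closed form (\ref{sigma12}) and makes the asymptotic variance of $\hat{\tau}_{1}^{(C)}$ differ from that of $\hat{\tau}_{1}^{(U)}$ in Theorem 1.
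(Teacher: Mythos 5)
Your proposal is correct and follows essentially the same route as the paper: the same centering of the conditional score via the sum-to-zero derivative identities, the same mean-value linearization yielding $\mathbf{\hat{\Psi}}_{1}^{-1}[\tau_{1}^{1/2}(\boldsymbol{\hat{\theta}}_{1}^{(\tau_{1})}-\boldsymbol{\theta}_{1}^{\ast})]-\tilde{\mathbf{Z}}^{(\tau_{1})}\overset{P}{\rightarrow}\mathbf{0}$, and the same per-person i.i.d.\ decomposition with the crucial orthogonality between the counting fluctuation and the score contributions (your observation $I_{t}\mathbf{U}_{t}^{(1)}=\mathbf{U}_{t}^{(1)}$ is exactly the paper's computation that $Cov(V_{t}^{(1)},V_{t,j}^{(1)})=0$). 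The only cosmetic difference is that for $\hat{\tau}_{1}^{(\tau_{1})}$ you collapse everything into a single scalar i.i.d.\ sum and apply the one-dimensional CLT, whereas the paper proves a joint $(q_{1}+1)$-dimensional CLT for $(\mathbf{Z}^{(\tau_{1})\prime},Z^{(\tau_{1})})^{\prime}$ with block-diagonal covariance and then applies the linear relation; both yield the same $\sigma_{1}^{2}$.
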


\begin{proof}
From the definitions of $L_{11}^{(\tau _{1})}(\boldsymbol{\theta}_{1})$ and 
$L_{0}^{(\tau _{1})}(\boldsymbol{\theta}_{1})$ we have that
\begin{eqnarray}
\frac{\partial }{\partial \theta _{j}^{(1)}}\ln \left[ L_{11}^{(\tau _{1})}(%
\boldsymbol{\hat{\theta}}_{1}^{(\tau _{1})})L_{0}^{(\tau _{1})}(\boldsymbol{\hat{%
\theta}}_{1}^{(\tau _{1})})\right] &=&\sum\limits_{\mathbf{x}\in \Omega -%
\mathbf{0}}\frac{R_{\mathbf{x}}^{(\tau _{1})}}{\tilde{\pi}_{\mathbf{x}%
}^{(1)}(\boldsymbol{\hat{\theta}}_{1}^{(\tau _{1})})}\frac{\partial \tilde{\pi}_{%
\mathbf{x}}^{(1)}(\boldsymbol{\hat{\theta}}_{1}^{(\tau _{1})})}{\partial \theta
_{j}^{(1)}}  \notag \\
&&+\sum\limits_{l=1}^{n}\sum\limits_{\mathbf{x}\in \Omega _{-l}}\frac{R_{%
\mathbf{x}}^{(A_{l},\tau _{1})}}{\pi _{\mathbf{x}}^{(A_{l})}(\boldsymbol{\hat{%
\theta}}_{1}^{(\tau _{1})})}\frac{\partial \pi _{\mathbf{x}}^{(A_{l})}(%
\boldsymbol{\hat{\theta}}_{1}^{(\tau _{1})})}{\partial \theta _{j}^{(1)}}.
\label{deriv-lnL110}
\end{eqnarray}

Since%
\begin{equation}
\sum\limits_{\mathbf{x}\in \Omega -\{\mathbf{0}\}}\partial \tilde{\pi}_{%
\mathbf{x}}^{(1)}(\boldsymbol{\hat{\theta}}_{1}^{(\tau _{1})})/\partial \theta
_{j}^{(1)}=0\text{ \ and \ }\sum\limits_{\mathbf{x}\in \Omega
_{-l}}\partial \pi _{\mathbf{x}}^{(A_{l})}(\boldsymbol{\hat{\theta}}_{1}^{(\tau
_{1})})/\partial \theta _{j}^{(1)}=0,  \label{sum-deriv0c}
\end{equation}

from (\ref{deriv-lnL110}) we get that%
\begin{eqnarray}
&&\tau _{1}^{-1/2}\left\{ \sum\limits_{\mathbf{x}\in \Omega -\{\mathbf{0}\}}%
\frac{R_{\mathbf{x}}^{(\tau _{1})}-R_{1}^{(\tau _{1})}\tilde{\pi}_{\mathbf{x}%
}^{(1)}(\boldsymbol{\theta }_{1}^{\ast })}{\tilde{\pi}_{\mathbf{x}}^{(1)}(%
\boldsymbol{\hat{\theta}}_{1}^{(\tau _{1})})}\frac{\partial \tilde{\pi}_{\mathbf{%
x}}^{(1)}(\boldsymbol{\hat{\theta}}_{1}^{(\tau _{1})})}{\partial \theta
_{j}^{(1)}}\right.  \notag \\
&&+\left. \sum\limits_{l=1}^{n}\sum\limits_{\mathbf{x}\in \Omega _{-l}}%
\frac{R_{\mathbf{x}}^{(A_{l},\tau _{1})}-M_{l}^{(\tau _{1})}\pi _{\mathbf{x}%
}^{(A_{l})}(\boldsymbol{\theta }_{1}^{\ast })}{\pi _{\mathbf{x}}^{(A_{l})}(%
\boldsymbol{\hat{\theta}}_{1}^{(\tau _{1})})}\frac{\partial \pi _{\mathbf{x}%
}^{(A_{l})}(\boldsymbol{\hat{\theta}}_{1}^{(\tau _{1})})}{\partial \theta
_{j}^{(1)}}\right\} \notag \\
& &-\frac{\partial }{\partial \theta _{j}^{(1)}}\ln \left[
L_{11}^{(\tau _{1})}(\boldsymbol{\hat{\theta}}_{1}^{(\tau _{1})})L_{0}^{(\tau
_{1})}(\boldsymbol{\hat{\theta}}_{1}^{(\tau _{1})})\right]  \notag \\
&=&\tau _{1}^{1/2}\left\{ \frac{R_{1}^{(\tau _{1})}}{\tau _{1}}\sum\limits_{%
\mathbf{x}\in \Omega -\{\mathbf{0}\}}\frac{\tilde{\pi}_{\mathbf{x}}^{(1)}(%
\boldsymbol{\hat{\theta}}_{1}^{(\tau _{1})})-\tilde{\pi}_{\mathbf{x}}^{(1)}(%
\boldsymbol{\theta }_{1}^{\ast })}{\tilde{\pi}_{\mathbf{x}}^{(1)}(\boldsymbol{\hat{%
\theta}}_{1}^{(\tau _{1})})}\frac{\partial \tilde{\pi}_{\mathbf{x}}^{(1)}(%
\boldsymbol{\hat{\theta}}_{1}^{(\tau _{1})})}{\partial \theta _{j}^{(1)}}\right.
\notag \\
&&+\left. \sum\limits_{l=1}^{n}\frac{M_{l}^{(\tau _{1})}}{\tau _{1}}%
\sum\limits_{\mathbf{x}\in \Omega _{-l}}\frac{\pi _{\mathbf{x}}^{(A_{l})}(%
\boldsymbol{\hat{\theta}}_{1}^{(\tau _{1})})-\pi _{\mathbf{x}}^{(A_{l})}(\boldsymbol{%
\theta }_{1}^{\ast })}{\pi _{\mathbf{x}}^{(A_{l})}(\boldsymbol{\hat{\theta}}%
_{1}^{(\tau _{1})})}\frac{\partial \pi _{\mathbf{x}}^{(A_{l})}(\boldsymbol{\hat{%
\theta}}_{1}^{(\tau _{1})})}{\partial \theta _{j}^{(1)}}\right\} .
\label{expre-medc}
\end{eqnarray}

Let $Y_{\mathbf{x}}^{(\tau _{1})}=R_{\mathbf{x}}^{(\tau _{1})}-R_{1}^{(\tau
_{1})}\tilde{\pi}_{\mathbf{x}}^{(1)}(\boldsymbol{\theta }_{1}^{\ast })$, $Y_{%
\mathbf{x}}^{(A_{l},\tau _{1})}=R_{\mathbf{x}}^{(A_{l},\tau
_{1})}-M_{l}^{(\tau _{1})}\pi _{\mathbf{x}}^{(A_{l})}(\boldsymbol{\theta }%
_{1}^{\ast })$ and%
\begin{eqnarray*}
Z_{j}^{(\tau _{1})} &=&\tau _{1}^{-1/2}\left[ \sum\limits_{\mathbf{x}\in
\Omega -\{\mathbf{0}\}}\frac{R_{\mathbf{x}}^{(\tau _{1})}}{\tilde{\pi}_{%
\mathbf{x}}^{(1)}(\boldsymbol{\theta }_{1}^{\ast })}\frac{\partial \tilde{\pi}_{%
\mathbf{x}}^{(1)}(\boldsymbol{\theta }_{1}^{\ast })}{\partial \theta _{j}^{(1)}}%
+\sum\limits_{l=1}^{n}\sum\limits_{\mathbf{x}\in \Omega _{-l}}\frac{R_{%
\mathbf{x}}^{(A_{l},\tau _{1})}}{\pi _{\mathbf{x}}^{(A_{l})}(\boldsymbol{\theta }%
_{1}^{\ast })}\frac{\partial \pi _{\mathbf{x}}^{(A_{l})}(\boldsymbol{\theta }%
_{1}^{\ast })}{\partial \theta _{j}^{(1)}}\right] \\
&=&\tau _{1}^{-1/2}\left[ \sum\limits_{\mathbf{x}\in \Omega -\{\mathbf{0}\}}%
\frac{Y_{\mathbf{x}}^{(\tau _{1})}}{\tilde{\pi}_{\mathbf{x}}^{(1)}(\boldsymbol{%
\theta }_{1}^{\ast })}\frac{\partial \tilde{\pi}_{\mathbf{x}}^{(1)}(\boldsymbol{%
\theta }_{1}^{\ast })}{\partial \theta _{j}^{(1)}}+\sum\limits_{l=1}^{n}%
\sum\limits_{\mathbf{x}\in \Omega _{-l}}\frac{Y_{\mathbf{x}}^{(A_{l},\tau
_{1})}}{\pi _{\mathbf{x}}^{(A_{l})}(\boldsymbol{\theta }_{1}^{\ast })}\frac{%
\partial \pi _{\mathbf{x}}^{(A_{l})}(\boldsymbol{\theta }_{1}^{\ast })}{\partial
\theta _{j}^{(1)}}\right] ,
\end{eqnarray*}%
where the last equality is obtained using (\ref{sum-deriv0c}) but replacing 
$\boldsymbol{\hat{\theta}}_{1}^{(\tau _{1})}$ by $\boldsymbol{\theta }_{1}^{\ast }$.
Then, the difference between the left-hand side of (\ref{expre-medc}) and $%
Z_{j}^{(\tau _{1})}$ is given by%
\begin{eqnarray}
&&\tau _{1}^{-1/2}\left\{ \sum\limits_{\mathbf{x}\in \Omega -\{\mathbf{0}\}}%
\frac{Y_{\mathbf{x}}^{(\tau _{1})}}{\tilde{\pi}_{\mathbf{x}}^{(1)}(\boldsymbol{%
\hat{\theta}}_{1}^{(\tau _{1})})}\frac{\partial \tilde{\pi}_{\mathbf{x}%
}^{(1)}(\boldsymbol{\hat{\theta}}_{1}^{(\tau _{1})})}{\partial \theta _{j}^{(1)}}%
+\sum\limits_{l=1}^{n}\sum\limits_{\mathbf{x}\in \Omega _{-l}}\frac{Y_{%
\mathbf{x}}^{(A_{l},\tau _{1})}}{\pi _{\mathbf{x}}^{(A_{l})}(\boldsymbol{\hat{%
\theta}}_{1}^{(\tau _{1})})}\frac{\partial \pi _{\mathbf{x}}^{(A_{l})}(%
\boldsymbol{\hat{\theta}}_{1}^{(\tau _{1})})}{\partial \theta _{j}^{(1)}}\right\}
\notag \\
&&-\tau _{1}^{-1/2}\frac{\partial }{\partial \theta _{j}^{(1)}}\ln \left[
L_{11}^{(\tau _{1})}(\boldsymbol{\hat{\theta}}_{1}^{(\tau _{1})})L_{0}^{(\tau
_{1})}(\boldsymbol{\hat{\theta}}_{1}^{(\tau _{1})})\right] -Z_{j}^{(\tau _{1})} 
\notag \\
&=&\tau _{1}^{-1/2}\left\{ \sum\limits_{\mathbf{x}\in \Omega -\{\mathbf{0}%
\}}Y_{\mathbf{x}}^{(\tau _{1})}\left[ \frac{1}{\tilde{\pi}_{\mathbf{x}%
}^{(1)}(\boldsymbol{\hat{\theta}}_{1}^{(\tau _{1})})}\frac{\partial \tilde{\pi}_{%
\mathbf{x}}^{(1)}(\boldsymbol{\hat{\theta}}_{1}^{(\tau _{1})})}{\partial \theta
_{j}^{(1)}}-\frac{1}{\tilde{\pi}_{\mathbf{x}}^{(1)}(\boldsymbol{\theta }%
_{1}^{\ast })}\frac{\partial \tilde{\pi}_{\mathbf{x}}^{(1)}(\boldsymbol{\theta }%
_{1}^{\ast })}{\partial \theta _{j}^{(1)}}\right] \right.  \notag \\
&&+\left. \sum\limits_{l=1}^{n}\sum\limits_{\mathbf{x}\in \Omega _{-l}}Y_{%
\mathbf{x}}^{(A_{l},\tau _{1})}\left[ \frac{1}{\pi _{\mathbf{x}}^{(A_{l})}(%
\boldsymbol{\hat{\theta}}_{1}^{(\tau _{1})})}\frac{\partial \pi _{\mathbf{x}%
}^{(A_{l})}(\boldsymbol{\hat{\theta}}_{1}^{(\tau _{1})})}{\partial \theta
_{j}^{(1)}}-\frac{1}{\pi _{\mathbf{x}}^{(A_{l})}(\boldsymbol{\theta }_{1}^{\ast
})}\frac{\partial \pi _{\mathbf{x}}^{(A_{l})}(\boldsymbol{\theta }_{1}^{\ast })}{%
\partial \theta _{j}^{(1)}}\right] \right\}  \notag \\
&&-\tau _{1}^{-1/2}\frac{\partial }{\partial \theta _{j}^{(1)}}\ln \left[
L_{11}^{(\tau _{1})}(\boldsymbol{\hat{\theta}}_{1}^{(\tau _{1})})L_{0}^{(\tau
_{1})}(\boldsymbol{\hat{\theta}}_{1}^{(\tau _{1})})\right] .
\label{left-hand-side-zj}
\end{eqnarray}%
Since $\tau _{1}^{-1/2}Y_{\mathbf{x}}^{(\tau _{1})}=O_{p}(1)$ and $\tau
_{1}^{-1/2}Y_{\mathbf{x}}^{(A_{l},\tau _{1})}=O_{p}(1)$, these results along
with conditions \textbf{(3)-(4)} and conditions (\textbf{i}) and (\textbf{ii}%
) of the theorem imply that (\ref{left-hand-side-zj}) converges to zero in
probability.

On the other hand, by the mean value theorem of several variables we have
that%
\begin{eqnarray}
\tilde{\pi}_{\mathbf{x}}^{(1)}(\boldsymbol{\hat{\theta}}_{1}^{(\tau _{1})})-%
\tilde{\pi}_{\mathbf{x}}^{(1)}(\boldsymbol{\theta }_{1}^{\ast })
&=&\sum\nolimits_{i=1}^{q_{1}}\left( \hat{\theta}_{1i}^{(\tau _{1})}-\theta
_{1i}^{\ast }\right) \partial \tilde{\pi}_{\mathbf{x}}^{(1)}(\boldsymbol{\theta }%
_{1\mathbf{x}}^{(\tau _{1})})/\partial \theta _{i}^{(1)}\text{ \ \ and}
\label{dif-pitildex} \\
\pi _{\mathbf{x}}^{(A_{l})}(\boldsymbol{\hat{\theta}}_{1}^{(\tau _{1})})-\pi _{%
\mathbf{x}}^{(A_{l})}(\boldsymbol{\theta }_{1}^{\ast })
&=&\sum\nolimits_{i=1}^{q_{1}}\left( \hat{\theta}_{1i}^{(\tau _{1})}-\theta
_{1i}^{\ast }\right) \partial \pi _{\mathbf{x}}^{(A_{l})}(\boldsymbol{\theta }%
_{A_{l}\mathbf{x}}^{(\tau _{1})})/\partial \theta _{i}^{(1)},  \notag
\end{eqnarray}%
where $\boldsymbol{\theta }_{1\mathbf{x}}^{(\tau _{1})}$ and $\boldsymbol{\theta }%
_{A_{l}\mathbf{x}}^{(\tau _{1})}$ are between $\boldsymbol{\hat{\theta}}%
_{1}^{(\tau _{1})}$ and $\boldsymbol{\theta }_{1}^{\ast }$. Since the difference
between the right-hand side of (\ref{expre-medc}) and $Z_{j}^{(\tau _{1})}$
also converges to zero in probability, we have that%
\begin{eqnarray}
&&\tau _{1}^{1/2}\left\{ \frac{R_{1}^{(\tau _{1})}}{\tau _{1}}\sum\limits_{%
\mathbf{x}\in \Omega -\{\mathbf{0}\}}\frac{1}{\tilde{\pi}_{\mathbf{x}}^{(1)}(%
\boldsymbol{\hat{\theta}}_{1}^{(\tau _{1})})}\frac{\partial \tilde{\pi}_{\mathbf{%
x}}^{(1)}(\boldsymbol{\hat{\theta}}_{1}^{(\tau _{1})})}{\partial \theta
_{j}^{(1)}}\sum_{i=1}^{q_{1}}\left( \hat{\theta}_{1i}^{(\tau _{1})}-\theta
_{1i}^{\ast }\right) \frac{\partial \tilde{\pi}_{\mathbf{x}}^{(1)}(\boldsymbol{%
\theta }_{1\mathbf{x}}^{(\tau _{1})})}{\partial \theta _{i}^{(1)}}\right. 
\notag \\
&&+\left. \sum\limits_{l=1}^{n}\frac{M_{l}^{(\tau _{1})}}{\tau _{1}}%
\sum\limits_{\mathbf{x}\in \Omega _{-l}}\frac{1}{\pi _{\mathbf{x}%
}^{(A_{l})}(\boldsymbol{\hat{\theta}}_{1}^{(\tau _{1})})}\frac{\partial \pi _{%
\mathbf{x}}^{(A_{l})}(\boldsymbol{\hat{\theta}}_{1}^{(\tau _{1})})}{\partial
\theta _{j}^{(1)}}\sum\limits_{i=1}^{q_{1}}\left( \hat{\theta}_{1i}^{(\tau
_{1})}-\theta _{1i}^{\ast }\right) \frac{\partial \pi _{\mathbf{x}%
}^{(A_{l})}(\boldsymbol{\theta }_{A_{l}\mathbf{x}}^{(\tau _{1})})}{\partial
\theta _{i}^{(1)}}\right\} -Z_{j}^{(\tau _{1})}  \notag \\
&=&\sum_{i=1}^{q_{1}}\left[ \mathbf{\hat{\Psi}}_{1}^{-1}\right] _{j,i}\left[
\tau _{1}^{1/2}\left( \hat{\theta}_{1i}^{(\tau _{1})}-\theta _{1i}^{\ast
}\right) \right] -Z_{j}^{(\tau _{1})}\overset{P}{\rightarrow }0,
\label{right-hand-side-zj}
\end{eqnarray}%
where%
\begin{eqnarray}
\left[ \mathbf{\hat{\Psi}}_{1}^{-1}\right] _{j,i} &=&\frac{R_{1}^{(\tau
_{1})}}{\tau _{1}}\sum\limits_{\mathbf{x}\in \Omega -\{\mathbf{0}\}}\frac{1%
}{\tilde{\pi}_{\mathbf{x}}^{(1)}(\boldsymbol{\hat{\theta}}_{1}^{(\tau _{1})})}%
\frac{\partial \tilde{\pi}_{\mathbf{x}}^{(1)}(\boldsymbol{\hat{\theta}}%
_{1}^{(\tau _{1})})}{\partial \theta _{j}^{(1)}}\frac{\partial \tilde{\pi}_{%
\mathbf{x}}^{(1)}(\boldsymbol{\theta }_{1\mathbf{x}}^{(\tau _{1})})}{\partial
\theta _{i}^{(1)}}  \notag \\
&&+\sum\limits_{l=1}^{n}\frac{M_{l}^{(\tau _{1})}}{\tau _{1}}\sum\limits_{%
\mathbf{x}\in \Omega _{-l}}\frac{1}{\pi _{\mathbf{x}}^{(A_{l})}(\boldsymbol{\hat{%
\theta}}_{1}^{(\tau _{1})})}\frac{\partial \pi _{\mathbf{x}}^{(A_{l})}(%
\boldsymbol{\hat{\theta}}_{1}^{(\tau _{1})})}{\partial \theta _{j}^{(1)}}\frac{%
\partial \pi _{\mathbf{x}}^{(A_{l})}(\boldsymbol{\theta }_{A_{l}\mathbf{x}%
}^{(\tau _{1})})}{\partial \theta _{i}^{(1)}}.  \label{hat-psi-matrix}
\end{eqnarray}

Notice that from the definitions of the matrices $\mathbf{\Psi }_{1}^{-1}$
and $\mathbf{\hat{\Psi}}_{1}^{-1}$, conditions \textbf{(3)}-\textbf{(4)} and
condition (\textbf{i}) of the theorem along with the fact that $R_{1}^{(\tau
_{1})}/\tau _{1}\overset{P}{\rightarrow }(1-n/N)[1-\pi _{\mathbf{0}}^{(1)}(%
\boldsymbol{\theta }_{1}^{\ast })]$ and $M_{l}^{(\tau _{1})}/\tau _{1}\overset{P}%
{\rightarrow }1/N$, it follows that $\mathbf{\hat{\Psi}}_{1}^{-1}\overset{P}{%
\rightarrow }\mathbf{\Psi }_{1}^{-1}$.

By condition (\textbf{iii}) of the theorem and using exactly the same
procedure as that used to obtain expression (\ref{expr-z1}) we will get that
expression which we will put in the following terms:%
\begin{equation}
\hat{a}_{1}\left[ \tau _{1}^{-1/2}\left( \hat{\tau}_{1}^{(\tau _{1})}-\tau
_{1}\right) \right] +\sum\nolimits_{i=1}^{q_{1}}\hat{a}_{i+1}\left[ \tau
_{1}^{1/2}\left( \hat{\theta}_{1i}^{(\tau _{1})}-\theta _{1i}^{\ast }\right) %
\right] -Z^{(\tau _{1})}\overset{P}{\rightarrow }0,  \label{difz}
\end{equation}%
where 
\begin{equation*}
\hat{a}_{1}=\frac{1-(1-n/N)\pi _{\mathbf{0}}^{(1)}(\boldsymbol{\hat{\theta}}%
_{1}^{(\tau _{1})})}{(1-n/N)\pi _{\mathbf{0}}^{(1)}(\boldsymbol{\theta }%
_{1}^{\ast })}\text{, \ }\hat{a}_{i+1}=-\frac{1}{\pi _{\mathbf{0}}^{(1)}(%
\boldsymbol{\theta }_{1}^{\ast })}\frac{\partial \pi _{\mathbf{0}}^{(1)}(\boldsymbol{%
\theta }_{1\mathbf{0}}^{(\tau _{1})})}{\partial \theta _{i}^{(1)}}\text{, }%
i=1,\ldots ,q_{1},
\end{equation*}%
\begin{equation}
Z^{(\tau _{1})}=\tau _{1}^{-1/2}\frac{\left( M^{(\tau _{1})}+R_{1}^{(\tau
_{1})}\right) -\tau _{1}\left[ 1-(1-n/N)\pi _{\mathbf{0}}^{(1)}(\boldsymbol{%
\theta }_{1}^{\ast })\right] }{(1-n/N)\pi _{\mathbf{0}}^{(1)}(\boldsymbol{\theta 
}_{1}^{\ast })},  \label{z}
\end{equation}%
and $\boldsymbol{\theta }_{1\mathbf{0}}^{(\tau _{1})}$ is between $\boldsymbol{\hat{%
\theta}}_{1}^{(\tau _{1})}$ and $\boldsymbol{\theta }_{1}^{\ast }$. Notice that
conditions \textbf{(3)}-\textbf{(4)} and condition (\textbf{i}) of the
theorem imply that $\hat{a}_{i}\overset{P}{\rightarrow }a_{i},$ $i=1,\ldots
, $ $q_{1}+1$, where $a_{1}=\left[ 1-(1-n/N)\pi _{\mathbf{0}}^{(1)}
(\boldsymbol{\theta }_{1}^{\ast })\right]/ $ $(1-n/N)\pi _{\mathbf{0}}^{(1)}
(\boldsymbol{\theta }_{1}^{\ast })$, and $a_{i+1}=-\left[ \partial \pi_{\mathbf{0}}^{(1)}
(\boldsymbol{\theta }_{1}^{\ast})/\partial \theta _{i}^{(1)}\right] /
\pi _{\mathbf{0}}^{(1)}(\boldsymbol{\theta}_{1}^{\ast })$, $i=1,\ldots ,q_{1}.$

Let $\mathbf{Z}^{(\tau _{1})}=\left[ Z_{1}^{(\tau _{1})},Z_{2}^{(\tau
_{1})},\ldots ,Z_{q_{1}}^{(\tau _{1})}\right] ^{\prime }$, then by the
previous results we have that%
\begin{equation}
\mathbf{\hat{\Psi}}_{1}^{-1}\left[ \tau _{1}^{1/2}\left( \boldsymbol{\hat{\theta}}%
_{1}^{(\tau _{1})}-\boldsymbol\theta _{1}^{\ast }\right) ^{\prime }\right] -\mathbf{Z}%
^{(\tau _{1})}\overset{P}{\rightarrow }\mathbf{0}^{\prime }\mathbf{,}
\label{difpsiuz}
\end{equation}%
where $\mathbf{\hat{\Psi}}_{1}^{-1}$ is the $q_{1}\times q_{1}$ matrix whose
elements are defined in (\ref{hat-psi-matrix}).

We will show that $\mathbf{Z}^{(\tau _{1})}\overset{D}{\rightarrow }\mathbf{Z%
}\sim N_{q_{1}}(\mathbf{0}^{\prime },\mathbf{\Psi }_{1}^{-1})$ as $\tau
_{1}\rightarrow \infty $, where $\mathbf{Z}=(Z_{1},\ldots
,Z_{q_{1}})^{\prime }$, and that $Z^{(\tau _{1})}\overset{D}{\rightarrow }%
Z\sim N(0,a_{1})$, where $Z^{(\tau _{1})}$ is given by (\ref{z}). To do
this, we will associate with each element $t\in U_{1}$, $t=1,\ldots ,\tau
_{1}$, a random vector $\mathbf{V}_{t}^{(1)}=[V_{t,1}^{(1)},\ldots
,V_{t,q_{1}}^{(1)}]^{\prime }$ and a random variable $V_{t}^{(1)}$ such that

\begin{description}
\item[(a)] $V_{t,j}^{(1)}=[\tilde{\pi}_{\mathbf{x}}^{(1)}(\boldsymbol{\theta }%
_{1}^{\ast })]^{-1}\partial \tilde{\pi}_{\mathbf{x}}^{(1)}(\boldsymbol{\theta }%
_{1}^{\ast })/\partial \theta _{j}^{(1)}$, $j=1,\ldots ,q_{1}$, and $%
V_{t}^{(1)}=1$, if $t\in $ $U_{1}-S_{0}$ and its associated vector $\mathbf{X%
}_{t}^{(1)}$ of link-indicator variables equals the vector $\mathbf{x}\in
\Omega -\{\mathbf{0}\}$;

\item[(b)] $V_{t,j}^{(1)}=0$, $j\!=\!1,\ldots ,q_{1}$, and $V_{t}^{(1)}=-\left[
1-(1-n/N)\pi _{\mathbf{0}}^{(1)}(\boldsymbol{\theta }_{1}^{\ast })\right]\! /\!
\left[ (1-n/N)\pi _{\mathbf{0}}^{(1)}(\boldsymbol{\theta }_{1}^{\ast })\right] $%
, if $t\in $ $U_{1}-S_{0}$ and its associated vector $\mathbf{X}_{t}^{(1)}$
of link-indicator variables equals the vector $\mathbf{0}\in \Omega $, and

\item[(c)] $V_{t,j}^{(1)}=[\pi _{\mathbf{x}}^{(A_{l})}(\boldsymbol{\theta }%
_{1}^{\ast })]^{-1}\partial \pi _{\mathbf{x}}^{(A_{l})}(\boldsymbol{\theta }%
_{1}^{\ast })/\partial \theta _{j}^{(1)}$, $j=1,\ldots ,q_{1}$, and $%
V_{t}^{(1)}=1$, if $t\in $ $A_{l}\in S_{A}$ and its associated vector $%
\mathbf{X}_{t}^{(1)}$ of link-indicator variables equals the vector $\mathbf{%
x}\in \Omega _{-l}$.
\end{description}

Since%
{\setlength\arraycolsep{1pt}
\begin{eqnarray*}
\tau _{1}^{-1/2}\sum_{t=1}^{\tau _{1}}V_{t,j}^{(1)} &=&\tau _{1}^{-1/2}\left[
\sum_{\mathbf{x}\in \Omega -\{\mathbf{0}\}}\frac{R_{\mathbf{x}}^{(\tau _{1})}%
}{\tilde{\pi}_{\mathbf{x}}^{(1)}(\boldsymbol{\theta }_{1}^{\ast })}\frac{%
\partial \tilde{\pi}_{\mathbf{x}}^{(1)}(\boldsymbol{\theta }_{1}^{\ast })}{%
\partial \theta _{j}^{(1)}}+\sum\limits_{l=1}^{n}\sum\limits_{\mathbf{x}%
\in \Omega _{-l}}\frac{R_{\mathbf{x}}^{(A_{l},\tau _{1})}}{\pi _{\mathbf{x}%
}^{(A_{l})}(\boldsymbol{\theta }_{1}^{\ast })}\frac{\partial \pi _{\mathbf{x}%
}^{(A_{l})}(\boldsymbol{\theta }_{1}^{\ast })}{\partial \theta _{j}^{(1)}}\right] \\
&=&Z_{j}^{(\tau _{1})}, \, j =1,\ldots,q_{1},
\end{eqnarray*}}%
it follows that $\mathbf{Z}^{(\tau _{1})}=\tau _{1}^{-1/2}\sum_{t=1}^{\tau
_{1}}\mathbf{V}_{t}^{(1)}$, and 
{\setlength\arraycolsep{1pt}
\begin{eqnarray*}
\tau _{1}^{-1/2}\sum_{t=1}^{\tau _{1}}V_{t}^{(1)}&=&\tau _{1}^{-1/2}\left[
 M^{(\tau _{1})}+R_{1}^{(\tau _{1})} -\left( \tau _{1}-M^{(\tau
_{1})}-R_{1}^{(\tau _{1})}\right) \frac{1-(1-n/N)\pi _{\mathbf{0}}^{(1)}(%
\boldsymbol{\theta }_{1}^{\ast })}{(1-n/N)\pi _{\mathbf{0}}^{(1)}(\boldsymbol{\theta 
}_{1}^{\ast })}\right]  \\
&=& Z^{(\tau _{1})}.
\end{eqnarray*}}

From the definition of $V_{t,j}^{(1)}$ and $V_{t}^{(1)}$ we have that%
{\setlength\arraycolsep{1pt}
\begin{eqnarray*}
\Pr \left\{ V_{t,j}^{(1)}=[\tilde{\pi}_{\mathbf{x}}^{(1)}(\boldsymbol{\theta }%
_{1}^{\ast })]^{-1}\partial{\tilde\pi}_{\mathbf{x}}^{(1)}(\boldsymbol{\theta }%
_{1}^{\ast })/\partial \theta _{j}^{(1)}\right\}& =&(1-n/N)\pi_{%
\mathbf{x}}^{(1)}(\boldsymbol{\theta }_{1}^{\ast }),\text{ }\mathbf{x}\in \Omega
-\{\mathbf{0}\}, 
\nonumber \\
& & \hspace{3.5cm} j=1,\ldots ,q_{1}, 
\nonumber  \\
\Pr \left\{ V_{t,j}^{(1)}=0\right\}& =&(1-n/N)\pi_{\mathbf{0}}^{(1)}
(\boldsymbol{\theta}_{1}^{\ast}),\quad j=1,\ldots,q_{1}, 
\nonumber \\
\Pr \left\{ V_{t,j}^{(1)}=[\pi _{\mathbf{x}}^{(A_{l})}(\boldsymbol{\theta }%
_{1}^{\ast })]^{-1}\partial \pi _{\mathbf{x}}^{(A_{l})}(\boldsymbol{\theta }%
_{1}^{\ast })/\partial \theta _{j}^{(1)}\right\}& =&(1/N)\pi _{\mathbf{x}%
}^{(A_{l})}(\boldsymbol{\theta }_{1}^{\ast })\text{, }\mathbf{x}\in \Omega _{-l}%
\text{, }j=1,\ldots,q_{1},
\nonumber \\
& &\hspace{4.5cm} l=1,\ldots,n,
\end{eqnarray*}}
and%
\begin{equation*}
\Pr \left\{ V_{t}^{(1)}=1\right\} =(1-n/N)\left[ 1-\pi _{\mathbf{0}}^{(1)}(%
\boldsymbol{\theta }_{1}^{\ast })\right] +n/N\text{ and }
\end{equation*}%
\begin{equation*}
\Pr \left\{ V_{t}^{(1)}=-\left[ 1-(1-n/N)\pi _{\mathbf{0}}^{(1)}(\boldsymbol{%
\theta }_{1}^{\ast })\right] /\left[ (1-n/N)\pi _{\mathbf{0}}^{(1)}(\boldsymbol{%
\theta }_{1}^{\ast })\right] \right\} =(1-n/N)\pi _{\mathbf{0}}^{(1)}(%
\boldsymbol{\theta }_{1}^{\ast });
\end{equation*}

therefore, the expected values of the variables $V_{t,j}^{(1)}$ and $%
V_{t}^{(1)}$ are%
{\setlength\arraycolsep{1pt}
\begin{eqnarray*}
E\left( V_{t,j}^{(1)}\right)&=&\sum_{\mathbf{x}\in \Omega -\{\mathbf{0}%
\}}\partial \tilde{\pi}_{\mathbf{x}}^{(1)}(\boldsymbol{\theta }_{1}^{\ast
})/\partial \theta _{j}^{(1)}(1-n/N)\left[ 1-\pi _{\mathbf{0}}^{(1)}(\boldsymbol{%
\theta }_{1}^{\ast })\right]
\nonumber \\
& &+\sum\limits_{l=1}^{n}\sum\limits_{\mathbf{x}\in \Omega _{-l}}\partial
\pi _{\mathbf{x}}^{(A_{l})}(\boldsymbol{\theta }_{1}^{\ast })/\partial\theta_{j}^{(1)}(1/N)
=0,\quad j=1,\ldots,q_{1},
\nonumber
\end{eqnarray*}}%
and%
\begin{equation*}
E\left( V_{t}^{(1)}\right) =(1-n/N)\left[ 1-\pi _{\mathbf{0}}^{(1)}(\boldsymbol{%
\theta }_{1}^{\ast })\right] +n/N-\left[ 1-(1-n/N)\pi _{\mathbf{0}}^{(1)}(%
\boldsymbol{\theta }_{1}^{\ast })\right] =0
\end{equation*}%
because of (\ref{sum-deriv0c}). Thus, $E\left( \mathbf{V}_{t}^{(1)}\right) =%
\mathbf{0}$ and $E\left( V_{t}^{(1)}\right) =0$, $t=1,\ldots ,\tau _{1}$.
Furthermore, their variances are%
{\setlength\arraycolsep{1pt}
\begin{eqnarray*}
V\left( V_{t,j}^{(1)}\right) &=&(1-n/N)\left[ 1-\pi _{\mathbf{0}}^{(1)}(\boldsymbol{%
\theta }_{1}^{\ast })\right]\sum_{\mathbf{x}\in \Omega -\{\mathbf{%
0}\}}\frac{1}{\tilde{\pi}_{\mathbf{x}}^{(1)}(\boldsymbol{\theta }_{1}^{\ast })}%
\left[ \frac{\partial \tilde{\pi}_{\mathbf{x}}^{(1)}(\boldsymbol{\theta }%
_{1}^{\ast })}{\partial \theta _{j}^{(1)}}\right] ^{2} \\
& &+\frac{1}{N}%
\sum\limits_{l=1}^{n}\sum\limits_{\mathbf{x}\in \Omega _{-l}}\frac{1}{\pi
_{\mathbf{x}}^{(A_{l})}(\boldsymbol{\theta }_{1}^{\ast })}\left[ \frac{\partial
\pi _{\mathbf{x}}^{(A_{l})}(\boldsymbol{\theta }_{1}^{\ast })}{\partial \theta
_{j}^{(1)}}\right] ^{2},\quad j=1,\ldots,q_{1},
\end{eqnarray*}}%
and%
{\setlength\arraycolsep{1pt}
\begin{eqnarray*}
V\left( V_{t}^{(1)}\right)&=&(1-n/N)\left[ 1-\pi _{\mathbf{0}}^{(1)}(%
\boldsymbol{\theta }_{1}^{\ast })\right] +n/N+\frac{\left[ 1-(1-n/N)\pi _{%
\mathbf{0}}^{(1)}(\boldsymbol{\theta }_{1}^{\ast })\right] ^{2}}{(1-n/N)\pi _{%
\mathbf{0}}^{(1)}(\boldsymbol{\theta }_{1}^{\ast })}
\nonumber \\
&=&\frac{1-(1-n/N)\pi_{\mathbf{0}}^{(1)}(\boldsymbol{\theta }_{1}^{\ast })}
{(1-n/N)\pi _{\mathbf{0}}^{(1)}(\boldsymbol{\theta }_{1}^{\ast })},
\end{eqnarray*}}%
and their covariances are%
{\setlength\arraycolsep{0pt}
\begin{eqnarray*}
Cov\left( V_{t,j}^{(1)},V_{t,j^{\prime }}^{(1)}\right) &=&\left(1-\frac{n}{N}\right)
\left[1-\pi _{\mathbf{0}%
}^{(1)}(\boldsymbol{\theta }_{1}^{\ast })\right]\sum_{\mathbf{x}\in \Omega -\{\mathbf{0%
}\}}\frac{1}{\tilde{\pi}_{\mathbf{x}}^{(1)}(\boldsymbol{\theta }_{1}^{\ast })}%
\frac{\partial \tilde{\pi}_{\mathbf{x}}^{(1)}(\boldsymbol{\theta }_{1}^{\ast })}{%
\partial \theta _{j}^{(1)}}\frac{\partial \tilde{\pi}_{\mathbf{x}}^{(1)}(%
\boldsymbol{\theta }_{1}^{\ast })}{\partial \theta _{j^{\prime }}^{(1)}} \\
&&+\frac{1}{N}\sum\limits_{l=1}^{n}\sum\limits_{\mathbf{x}\in \Omega _{-l}}%
\frac{1}{\partial \pi _{\mathbf{x}}^{(A_{l})}(\boldsymbol{\theta }_{1}^{\ast })}%
\frac{\partial \pi _{\mathbf{x}}^{(A_{l})}(\boldsymbol{\theta }_{1}^{\ast })}{%
\partial \theta _{j}^{(1)}}\frac{\partial \pi _{\mathbf{x}}^{(A_{l})}(%
\boldsymbol{\theta }_{1}^{\ast })}{\partial \theta _{j^{\prime }}^{(1)}},
\, j,j^{\prime}\!=\!1,\ldots,q_{1},\, j\!\neq\! j^{\prime},
\end{eqnarray*}}
and
\begin{equation*}
Cov\left( V_{t}^{(1)},V_{t,j}^{(1)}\right)\!\!=\!\!\left(1\!-\!\frac{n}{N}\right)\!\!
\left[1\! -\! \pi _{\mathbf{0}%
}^{(1)}(\boldsymbol{\theta }_{1}^{\ast })\right]\!\sum_{\mathbf{x}\in
\Omega -\{\mathbf{0}\}}\frac{\partial \tilde{\pi}_{\mathbf{x}}^{(1)}(\boldsymbol{%
\theta }_{1}^{\ast })}{\partial \theta _{j}^{(1)}}+\frac{1}{N}%
\sum\limits_{l=1}^{n}\sum\limits_{\mathbf{x}\in \Omega _{-l}}\frac{%
\partial \pi _{\mathbf{x}}^{(A_{l})}(\boldsymbol{\theta }_{1}^{\ast })}{\partial
\theta _{j}^{(1)}}\!=\!0.
\end{equation*}
Therefore, the variance-covariance matrix of $\mathbf{V}_{t}^{(1)}$ is $%
\mathbf{\Psi }_{1}^{-1}$.

Finally, since the $(\mathbf{V}_{t}^{(1)\prime },V_{t}^{(1)})^{\prime }$, $%
t=1,\ldots ,\tau _{1}$, are independent and identically distributed random
vectors, by the central limit theorem it follows that%
\begin{equation*}
(\mathbf{Z}^{(\tau _{1})\prime },Z^{(\tau _{1})})^{\prime }=\tau
_{1}^{-1/2}\sum\nolimits_{t=1}^{\tau _{1}}(\mathbf{V}_{t}^{(1)\prime
},V_{t}^{(1)})^{\prime }\overset{D}{\rightarrow }(\mathbf{Z}^{\prime
},Z)^{\prime }\sim N_{q_{1}+1}\left( \mathbf{0}_{q_{1}+1}^{\prime },\left[ 
\begin{array}{cc}
\mathbf{\Psi }_{1}^{-1} & \mathbf{0}^{\prime } \\ 
\mathbf{0} & a_{1}%
\end{array}%
\right] \right).
\end{equation*}%
Thus, $\mathbf{Z}^{(\tau _{1})}\overset{D}{\rightarrow }\mathbf{Z}\sim
N_{q_{1}}(\mathbf{0}^{\prime },\mathbf{\Psi }_{1}^{-1})$ and $Z^{(\tau _{1})}%
\overset{D}{\rightarrow }Z\sim N(0,a_{1}).$ Consequently by (\ref{difpsiuz}) 
\begin{equation*}
\tau _{1}^{1/2}\left[ \boldsymbol{\hat{\theta}}_{1}^{(\tau _{1})}-\boldsymbol{%
\theta }_{1}^{\ast }\right] \overset{D}{\rightarrow }(\mathbf{\Psi }_{1}%
\mathbf{Z})^{\prime }\sim N_{q_{1}}(\mathbf{0},\mathbf{\Psi }_{1})
\end{equation*}%
as $\mathbf{\hat{\Psi}}_{1}\overset{P}{\rightarrow }\mathbf{\Psi }_{1}$.

At last, from (\ref{difz}) and the previous results%
{\setlength\arraycolsep{1pt}
\begin{eqnarray*}
\lefteqn{\tau _{1}^{-1/2}\left( \hat{\tau}_{1}^{(\tau _{1})}-\tau _{1}\right) \overset%
{D}{\rightarrow}  \frac{1}{a_{1}}\left\{Z-\sum\nolimits_{i=1}^{q_{1}}
a_{i+1}[\mathbf{\Psi}_{1}\mathbf{Z}]_i\right\} } \\
& &=\frac{(1-n/N)\pi _{\mathbf{0}}^{(1)}(\boldsymbol{\theta }_{1}^{\ast })}{1-(1-n/N)
\pi _{\mathbf{0}}^{(1)}(\boldsymbol{\theta }_{1}^{\ast })} 
\left[ Z+\frac{1}{\pi _{\mathbf{0}}^{(1)}(\boldsymbol{\theta}%
_{1}^{\ast })}\left[ \nabla \pi _{\mathbf{0}}^{(1)}\left( \boldsymbol{\theta }%
_{1}^{\ast}\right) \right]^{\prime }\mathbf{\Psi}_{1}\mathbf{Z}\right]\sim N(0,\sigma
^{2}),
\end{eqnarray*}}%
where $[\mathbf{\Psi}_{1}\mathbf{Z}]_i$ is the $i$-th element of $\mathbf{\Psi}_{1}\mathbf{Z}$ and
\begin{equation*}
\sigma ^{2}=\frac{1-n/N}{1-(1-n/N)\pi _{\mathbf{0}}^{(1)}(\boldsymbol{\theta }%
_{1}^{\ast })}\left\{ \pi _{\mathbf{0}}^{(1)}(\boldsymbol{\theta }_{1}^{\ast })+%
\frac{(1-n/N)\left[ \nabla \pi _{\mathbf{0}}^{(1)}\left( \boldsymbol{\theta }%
_{1}^{\ast }\right) \right] ^{\prime }\mathbf{\Psi }_{1}\left[ \nabla \pi _{%
\mathbf{0}}^{(1)}\left( \boldsymbol{\theta }_{1}^{\ast }\right) \right] }{%
1-(1-n/N)\pi _{\mathbf{0}}^{(1)}(\boldsymbol{\theta }_{1}^{\ast })}\right\} .
\end{equation*}
\end{proof}

\subsection{Consistency of the UMLE and CMLE of $(\protect\tau _{1},\boldsymbol{%
\protect\theta }_{1}^{\ast })$}

To prove the consistency of the UMLE and CMLE we will use condition (\textbf{%
5)} and the following inequality of information theory: If $\sum a_{i}$ and $%
\sum b_{i}$ are convergent series of positive numbers such that $\sum
a_{i}\geq \sum b_{i}$, then $\sum a_{i}\log (b_{i})\leq \sum a_{i}\log
(a_{i})$, and the equality is attained if and only if $a_{i}=b_{i}$. See Rao
(1973, p. 58).

\subsubsection{Consistency of the UMLE}

Let us first consider $\boldsymbol{\hat{\theta}}_{1}^{(U)}$. Using (\ref{L_(1)})
and (\ref{L_1}) and the definition of the UMLE $\left( \hat{\tau}_{1}^{(U)},%
\boldsymbol{\hat{\theta}}_{1}^{(U)}\right) $ we get that 
{\setlength\arraycolsep{0pt}%
\begin{eqnarray*}
&&l_{(1)}\left( \hat{\tau}_{1}^{(U)},\boldsymbol{\hat{\theta}}_{1}^{(U)}\right)
=\sum\limits_{\mathbf{x}\in \Omega -\{\mathbf{0}\}}R_{\mathbf{x}}^{(\tau
_{1})}\ln \left\{ \pi _{\mathbf{x}}^{(1)}\left( \boldsymbol{\hat{\theta}}%
_{1}^{(U)}\right) /\left[ 1-\pi _{\mathbf{0}}^{(1)}\left( \boldsymbol{\hat{\theta%
}}_{1}^{(U)}\right) \right] \right\} \\
&&+\sum\limits_{i=1}^{n}\sum\limits_{\mathbf{x}\in \Omega _{-i}}R_{\mathbf{%
x}}^{(A_{i},\tau _{1})}\ln \left[ \pi _{\mathbf{x}}^{(A_{i})}\left( \boldsymbol{%
\hat{\theta}}_{1}^{(U)}\right) \right] +\ln \left[ L_{MULT}\left( \hat{\tau}%
_{1}^{(U)}\right) \right] +\ln \left[ L_{12}\left( \hat{\tau}_{1}^{(U)},%
\boldsymbol{\hat{\theta}}_{1}^{(U)}\right) \right] +C \\
&\geq &\sum\limits_{\mathbf{x}\in \Omega -\{\mathbf{0}\}}R_{\mathbf{x}%
}^{(\tau _{1})}\ln \left\{ \pi _{\mathbf{x}}^{(1)}\left( \boldsymbol{\theta }%
_{1}^{\ast }\right) /\left[ 1-\pi _{\mathbf{0}}^{(1)}\left( \boldsymbol{\theta }%
_{1}^{\ast }\right) \right] \right\} +\sum\limits_{i=1}^{n}\sum\limits_{%
\mathbf{x}\in \Omega _{-i}}R_{\mathbf{x}}^{(A_{i},\tau _{1})}\ln \left[ \pi
_{\mathbf{x}}^{(A_{i})}\left( \boldsymbol{\theta }_{1}^{\ast }\right) \right] \\
&&+\ln \left[ L_{MULT}\left( \tau _{1}\right) \right] +\ln \left[
L_{12}\left( \tau _{1},\boldsymbol{\theta }_{1}^{\ast }\right) \right]
+C=l_{(1)}\left( \tau _{1},\boldsymbol{\theta }_{1}^{\ast }\right) ,
\end{eqnarray*}}%
where $C$ depends only on observable variables. Since $\ln\! \left[
L_{MULT}\!\left( \hat{\tau}_{1}^{(U)}\right) \right] $ and $\ln\! \left[
L_{12}\!\left( \hat{\tau}_{1}^{(U)},\right. \right. $ $\left. \left. \boldsymbol{%
\hat{\theta}}_{1}^{(U)}\right) \right] $ are nonpositive we have that
{\setlength\arraycolsep{1pt}%
\begin{eqnarray}
&&\sum\limits_{\mathbf{x}\in \Omega -\{\mathbf{0}\}}\frac{R_{\mathbf{x}%
}^{(\tau _{1})}}{R_{1}^{(\tau _{1})}}\ln \left[ \frac{\pi _{\mathbf{x}%
}^{(1)}\left( \boldsymbol{\hat{\theta}}_{1}^{(U)}\right) }{1-\pi _{\mathbf{0}%
}^{(1)}\left( \boldsymbol{\hat{\theta}}_{1}^{(U)}\right) }\right]
+\sum\limits_{i=1}^{n}\frac{M_{i}^{(\tau _{1})}}{R_{1}^{(\tau _{1})}}%
\sum\limits_{\mathbf{x}\in \Omega _{-i}}\frac{R_{\mathbf{x}}^{(A_{i},\tau
_{1})}}{M_{i}^{(\tau _{1})}}\ln \left[ \pi _{\mathbf{x}}^{(A_{i})}\left( 
\boldsymbol{\hat{\theta}}_{1}^{(U)}\right) \right]  \notag \\
&\geq &\sum\limits_{\mathbf{x}\in \Omega -\{\mathbf{0}\}}\frac{R_{\mathbf{x}%
}^{(\tau _{1})}}{R_{1}^{(\tau _{1})}}\ln \left[ \frac{\pi _{\mathbf{x}%
}^{(1)}(\boldsymbol{\theta }_{1}^{\ast })}{1-\pi _{\mathbf{0}}^{(1)}\left( 
\boldsymbol{\theta }_{1}^{\ast }\right) }\right] +\sum\limits_{i=1}^{n}\frac{%
M_{i}^{(\tau _{1})}}{R_{1}^{(\tau _{1})}}\sum\limits_{\mathbf{x}\in \Omega
_{-i}}\frac{R_{\mathbf{x}}^{(A_{i},\tau _{1})}}{M_{i}^{(\tau _{1})}}\ln %
\left[ \pi _{\mathbf{x}}^{(A_{i})}(\boldsymbol{\theta }_{1}^{\ast })\right] 
\notag \\
&&+\ln \left[ L_{MULT}\left( \tau _{1}\right) \right] /R_{1}^{(\tau
_{1})}+\ln \left[ L_{12}\left( \tau _{1},\boldsymbol{\theta }_{1}^{\ast }\right) %
\right] /R_{1}^{(\tau _{1})}.  \label{umle-ineq-1}
\end{eqnarray}}%
Now, since
{\setlength\arraycolsep{0pt}
\begin{eqnarray*}
1 &=&\sum_{\mathbf{x}\in \Omega -\{\mathbf{0}\}}\frac{R_{\mathbf{x}}^{(\tau
_{1})}}{R_{1}^{(\tau _{1})}}=\sum_{\mathbf{x}\in \Omega -\{\mathbf{0}\}}%
\frac{\pi _{\mathbf{x}}^{(1)}\left( \boldsymbol{\hat{\theta}}_{1}^{(U)}\right) }{%
1-\pi _{\mathbf{0}}^{(1)}\left( \boldsymbol{\hat{\theta}}_{1}^{(U)}\right) }%
\text{ \ and \ }1=\sum_{\mathbf{x}\in \Omega _{-i}}\frac{R_{\mathbf{x}%
}^{(A_{i},\tau _{1})}}{M_{i}^{(\tau _{1})}}=\sum_{\mathbf{x}\in \Omega
_{-i}}\pi _{\mathbf{x}}^{(A_{i})}\left( \boldsymbol{\hat{\theta}}%
_{1}^{(U)}\right), \\
& & \hspace{12cm}i=1,\ldots ,n,
\end{eqnarray*}}%
using $n+1$ times the previously indicated information theory inequality we
have that
{\setlength\arraycolsep{0pt}%
\begin{eqnarray}
&&\sum\limits_{\mathbf{x}\in \Omega -\{\mathbf{0}\}}\frac{R_{\mathbf{x}%
}^{(\tau _{1})}}{R_{1}^{(\tau _{1})}}\ln \left[ R_{\mathbf{x}}^{(\tau
_{1})}/R_{1}^{(\tau _{1})}\right] +\sum\limits_{i=1}^{n}\frac{M_{i}^{(\tau
_{1})}}{R_{1}^{(\tau _{1})}}\sum\nolimits_{\mathbf{x}\in \Omega _{-i}}\frac{%
R_{\mathbf{x}}^{(A_{i},\tau _{1})}}{M_{i}^{(\tau _{1})}}\ln \left[ R_{%
\mathbf{x}}^{(A_{i},\tau _{1})}/M_{i}^{(\tau _{1})}\right]  \notag \\
&\geq &\sum\limits_{\mathbf{x}\in \Omega -\{\mathbf{0}\}}\frac{R_{\mathbf{x}%
}^{(\tau _{1})}}{R_{1}^{(\tau _{1})}}\ln \!\!\left[ \frac{\pi _{\mathbf{x}%
}^{(1)}\left( \boldsymbol{\hat{\theta}}_{1}^{(U)}\right) }{1\!-\!\pi _{\mathbf{0}%
}^{(1)}\left( \boldsymbol{\hat{\theta}}_{1}^{(U)}\right) }\right]\!\!
+\sum\limits_{i=1}^{n}\frac{M_{i}^{(\tau _{1})}}{R_{1}^{(\tau _{1})}}%
\sum\nolimits_{\mathbf{x}\in \Omega _{-i}}\frac{R_{\mathbf{x}}^{(A_{i},\tau
_{1})}}{M_{i}^{(\tau _{1})}}\ln \!\!\left[ \pi _{\mathbf{x}}^{(A_{i})}\left( 
\boldsymbol{\hat{\theta}}_{1}^{(U)}\right) \right]\!\!.  \label{umle-ineq-2}
\end{eqnarray}}

Thus, by (\ref{umle-ineq-1}) and (\ref{umle-ineq-2}) we get that 
{\setlength\arraycolsep{0pt} 
\begin{eqnarray}
0 &\geq &\sum\limits_{\mathbf{x}\in \Omega -\{\mathbf{0}\}}\frac{R_{\mathbf{%
x}}^{(\tau _{1})}}{R_{1}^{(\tau _{1})}}\ln \left\{ \frac{\pi _{\mathbf{x}%
}^{(1)}\left( \boldsymbol{\hat{\theta}}_{1}^{(U)}\right) /\left[ 1-\pi _{\mathbf{%
0}}^{(1)}\left( \boldsymbol{\hat{\theta}}_{1}^{(U)}\right) \right] }{R_{\mathbf{x%
}}^{(1)}/R_{1}^{(\tau _{1})}}\right\}  \notag \\
&&+\sum_{i=1}^{n}\frac{M_{i}^{(\tau _{1})}}{R_{1}^{(\tau _{1})}}%
\sum\limits_{\mathbf{x}\in \Omega _{-i}}\frac{R_{\mathbf{x}}^{(A_{i},\tau
_{1})}}{M_{i}^{(\tau _{1})}}\ln \left[ \frac{\pi _{\mathbf{x}%
}^{(A_{i})}\left( \boldsymbol{\hat{\theta}}_{1}^{(U)}\right) }{R_{\mathbf{x}%
}^{(A_{i},\tau _{1})}/M_{i}^{(\tau _{1})}}\right]  \notag \\
&\geq &\sum\limits_{\mathbf{x}\in \Omega -\{\mathbf{0}\}}\frac{R_{\mathbf{x}%
}^{(\tau _{1})}}{R_{1}^{(\tau _{1})}}\ln\! \left\{ \frac{\pi _{\mathbf{x}%
}^{(1)}(\boldsymbol{\theta }_{1}^{\ast })/\left[ 1-\pi _{\mathbf{0}}^{(1)}(%
\boldsymbol{\theta }_{1}^{\ast })\right] }{R_{\mathbf{x}}^{(1)}/R_{1}^{(\tau
_{1})}}\right\}\! +\sum_{i=1}^{n}\frac{M_{i}^{(\tau _{1})}}{R_{1}^{(\tau
_{1})}}\sum_{\mathbf{x}\in \Omega _{-i}}\frac{R_{\mathbf{x}}^{(A_{i},\tau
_{1})}}{M_{i}^{(\tau _{1})}}  \notag  \\
&&\times\ln\! \left[ \frac{\pi _{\mathbf{x}}^{(A_{i})}(%
\boldsymbol{\theta }_{1}^{\ast })}{R_{\mathbf{x}}^{(A_{i},\tau
_{1})}/M_{i}^{(\tau _{1})}}\right] 
+\ln \left[ L_{MULT}\left( \tau _{1}\right) \right] /R_{1}^{(\tau
_{1})}+\ln \left[ L_{12}\left( \tau _{1},\boldsymbol{\theta }_{1}^{\ast }\right) %
\right] /R_{1}^{(\tau _{1})}.  \label{umle-double-ineq}
\end{eqnarray}}
From the unconditional distributions of $M_{i}^{(\tau _{1})}$, $M^{(\tau
_{1})}$ and $R_{1}^{(\tau _{1})}$, $R_{\mathbf{x}}^{(A_{i},\tau _{1})}$ and $%
R_{\mathbf{x}}^{(A_{i},\tau _{1})}$ indicated in Subsection 4.1, it follows
that $R_{\mathbf{x}}^{(\tau _{1})}/R_{1}^{(\tau _{1})}\overset{P}{%
\rightarrow }\pi _{\mathbf{x}}^{(1)}(\boldsymbol{\theta }_{1}^{\ast })/[1-\pi _{%
\mathbf{0}}^{(1)}(\boldsymbol{\theta }_{1}^{\ast })]$, $R_{\mathbf{x}%
}^{(A_{i},\tau _{1})}/M_{i}^{(\tau _{1})}\overset{P}{\rightarrow }\pi _{%
\mathbf{x}}^{(A_{i})}(\boldsymbol{\theta }_{1}^{\ast })$ and $M_{i}^{(\tau
_{1})}/R_{1}^{(\tau _{1})}\overset{P}{\rightarrow }1/\{(N-n)[1-\pi _{\mathbf{%
0}}^{(1)}(\boldsymbol{\theta }_{1}^{\ast })]\}$. Therefore, the first two
summands of the last term of the double inequality (\ref{umle-double-ineq})
converges to zero in probability, In addition, since $R_{1}^{(\tau
_{1})}/\tau _{1}\overset{P}{\rightarrow }1-\pi _{\mathbf{0}}^{(1)}(\boldsymbol{%
\theta }_{1}^{\ast })$, and from well known results of large deviations
theory (see Varadhan, 2008), we have that for the binomial probability 
$L_{12}\left( \tau _{1},\boldsymbol{\theta }_{1}^{\ast }\right)$:
{\setlength\arraycolsep{1pt}%
\begin{eqnarray*}
\frac{\ln \left[ L_{12}\left( \tau _{1},\boldsymbol{\theta }_{1}^{\ast
}\right) \right] }{R_{1}^{(\tau _{1})}}&=&-\frac{\tau _{1}-M^{(\tau _{1})}}{%
R_{1}^{(\tau _{1})}}\left\{ \frac{R_{1}^{(\tau _{1})}}{\tau _{1}-M^{(\tau
_{1})}}\ln \left[ \frac{R_{1}^{(\tau _{1})}/(\tau _{1}-M^{(\tau _{1})})}{%
1-\pi _{\mathbf{0}}^{(1)}(\boldsymbol{\theta }_{1}^{\ast })}\right] +\frac{\tau
_{1}-R_{1}^{(\tau _{1})}}{\tau _{1}-M^{(\tau _{1})}}\right. \\
&&\times\! \left. \ln \left[ \frac{(\tau _{1}-R_{1}^{(\tau _{1})})/(\tau
_{1}-M^{(\tau _{1})})}{\pi _{\mathbf{0}}^{(1)}(\boldsymbol{\theta }_{1}^{\ast })}%
\right] \right\} +\frac{\tau _{1}-M^{(\tau _{1})}}{R_{1}^{(\tau _{1})}}%
o_{p}(1) \\
&\overset{P}{\rightarrow}&-\frac{[1-\pi _{\mathbf{0}}^{(1)}(\boldsymbol{%
\theta }_{1}^{\ast })]\ln (1)+\pi _{\mathbf{0}}^{(1)}(\boldsymbol{\theta }%
_{1}^{\ast })\ln (1)}{1-\pi _{\mathbf{0}}^{(1)}(\boldsymbol{\theta }_{1}^{\ast })}=0,
\end{eqnarray*}}
and for the multinomial probability $L_{MULT}\left( \tau _{1}\right)$:
{\setlength\arraycolsep{0pt}%
\begin{eqnarray*}
\frac{\ln \left[ L_{MULT}\left( \tau _{1}\right) \right] }{R_{1}^{(\tau
_{1})}} &=&-\frac{\tau _{1}}{R_{1}^{(\tau _{1})}}\left\{ \sum_{i=1}^{n}%
\frac{M_{i}^{(\tau _{1})}}{\tau _{1}}\ln \!\!\left[ \frac{M_{i}^{(\tau
_{1})}/\tau _{1}}{1/N}\right] +\frac{\tau _{1}-M^{(\tau _{1})}}{\tau _{1}}%
\ln \left[ \frac{(\tau _{1}-M^{(\tau _{1})})/\tau _{1}}{1-n/N}\right]
\right\} \\
+\frac{\tau _{1}}{R_{1}^{(\tau _{1})}}o_{p}(1) &\overset{P}{\rightarrow }%
&-\!\left\{ \sum\nolimits_{i=1}^{n}\frac{1}{N}\ln (1)\!+\!(1-n/N)\ln (1)\right\}
\!/\!\left\{ [1\!-\!\pi _{\mathbf{0}}^{(1)}(\boldsymbol{\theta }_{1}^{\ast
})](1\!-\! n/N)\right\} =0.
\end{eqnarray*}}%
The previous results imply that the last term of the double inequality (\ref%
{umle-double-ineq}) converges to zero in probability, and consequently so
does the middle term.

Thus,{\setlength\arraycolsep{1pt}%
\begin{eqnarray*}
&&\sum_{\mathbf{x}\in \Omega -\{\mathbf{0}\}}\frac{\pi _{\mathbf{x}}^{(1)}(%
\boldsymbol{\theta }_{1}^{\ast })}{1-\pi _{\mathbf{0}}^{(1)}(\boldsymbol{\theta }%
_{1}^{\ast })}\ln \left\{ \frac{\pi _{\mathbf{x}}^{(1)}\left( \boldsymbol{\hat{%
\theta}}_{1}^{(U)}\right) /\left[ 1-\pi _{\mathbf{0}}^{(1)}\left( \boldsymbol{%
\hat{\theta}}_{1}^{(U)}\right) \right] }{\pi _{\mathbf{x}}^{(1)}(\boldsymbol{%
\theta }_{1}^{\ast })/\left[ 1-\pi _{\mathbf{0}}^{(1)}(\boldsymbol{\theta }%
_{1}^{\ast })\right] }\right\} \\
&&+\frac{1}{(N-n)\left[ 1-\pi _{\mathbf{0}}^{(1)}(\boldsymbol{\theta }_{1}^{\ast
})\right] }\sum_{i=1}^{n}\sum_{\mathbf{x}\in \Omega _{-i}}\pi _{\mathbf{x}%
}^{(A_{i})}(\boldsymbol{\theta }_{1}^{\ast })\ln \left[ \frac{\pi _{\mathbf{x}%
}^{(A_{i})}\left( \boldsymbol{\hat{\theta}}_{1}^{(U)}\right) }{\pi _{\mathbf{x}%
}^{(A_{i})}(\boldsymbol{\theta }_{1}^{\ast })}\right] \\
&=&\sum_{\mathbf{x}\in \Omega -\{\mathbf{0}\}}\frac{R_{\mathbf{x}}^{(\tau
_{1})}}{R_{1}^{(\tau _{1})}}\ln \left\{ \frac{\pi _{\mathbf{x}}^{(1)}\left( 
\boldsymbol{\hat{\theta}}_{1}^{(U)}\right) /\left[ 1-\pi _{\mathbf{0}%
}^{(1)}\left( \boldsymbol{\hat{\theta}}_{1}^{(U)}\right) \right] }{R_{\mathbf{x}%
}^{(\tau _{1})}/R_{1}^{(\tau _{1})}}\right\} \\
&&+\sum_{i=1}^{n}\frac{M_{i}^{(\tau _{1})}}{R_{1}^{(\tau _{1})}}\sum_{%
\mathbf{x}\in \Omega _{-i}}\frac{R_{\mathbf{x}}^{(A_{i},\tau _{1})}}{%
M_{i}^{(\tau _{1})}}\ln \left[ \frac{\pi _{\mathbf{x}}^{(A_{i})}\left( 
\boldsymbol{\hat{\theta}}_{1}^{(U)}\right) }{R_{\mathbf{x}}^{(A_{i},\tau
_{1})}/M_{i}^{(\tau _{1})}}\right] \\
&&+\sum_{\mathbf{x}\in \Omega -\{\mathbf{0}\}}\left[ \frac{\pi _{\mathbf{x}%
}^{(1)}(\boldsymbol{\theta }_{1}^{\ast })}{1-\pi _{\mathbf{0}}^{(1)}(\boldsymbol{%
\theta }_{1}^{\ast })}-\frac{R_{\mathbf{x}}^{(\tau _{1})}}{R_{1}^{(\tau
_{1})}}\right] \ln \left\{ \frac{\pi _{\mathbf{x}}^{(1)}\left( \boldsymbol{\hat{%
\theta}}_{1}^{(U)}\right) /\left[ 1-\pi _{\mathbf{0}}^{(1)}\left( \boldsymbol{%
\hat{\theta}}_{1}^{(U)}\right) \right] }{R_{\mathbf{x}}^{(\tau
_{1})}/R_{1}^{(\tau _{1})}}\right\} \\
&&+\sum_{i=1}^{n}\sum_{\mathbf{x}\in \Omega_{-i}}\left[\frac{\pi_{\mathbf{x}}^{(A_{i})}
(\boldsymbol{\theta}_{1}^{\ast})}{(N-n)\left[1-\pi_{\mathbf{0}}^{(1)}(\boldsymbol{\theta}
_{1}^{\ast})\right]}-\frac{R_{\mathbf{x}}^{(A_{i})}}{R_{1}^{(\tau_{1})}}\right]\ln
\left[\frac{\pi_{\mathbf{x}}^{(A_{i})}\left(\boldsymbol{\hat{\theta}}_{1}^{(U)}\right)}
{R_{\mathbf{x}}^{(A_{i},\tau _{1})}/M_{i}^{(\tau_{1})}}\right] \\
&&+\sum_{\mathbf{x}\in \Omega -\{\mathbf{0}\}}\frac{\pi _{\mathbf{x}}^{(1)}(%
\boldsymbol{\theta}_{1}^{\ast})}{1-\pi_{\mathbf{0}}^{(1)}(\boldsymbol{\theta}%
_{1}^{\ast })}\ln\left\{\frac{R_{\mathbf{x}}^{(\tau _{1})}/R_{1}^{(\tau _{1})}}
{\pi _{\mathbf{x}}^{(1)}(\boldsymbol{\theta }_{1}^{\ast })/\left[ 1-\pi _{\mathbf{0}}^{(1)}
(\boldsymbol{\theta }_{1}^{\ast })\right] }\right\} \\
&&+\frac{1}{(N-n)\left[ 1-\pi _{\mathbf{0}}^{(1)}(\boldsymbol{\theta }_{1}^{\ast
})\right] }\sum_{i=1}^{n}\sum_{\mathbf{x}\in \Omega _{-i}}\pi _{\mathbf{x}%
}^{(A_{i})}(\boldsymbol{\theta}_{1}^{\ast})\ln\left[ \frac{R_{\mathbf{x}}^{(A_{i},
\tau_{1})}/M_{i}^{(\tau _{1})}}{\pi _{\mathbf{x}%
}^{(A_{i})}(\boldsymbol{\theta }_{1}^{\ast })}\right] \overset{P}{\rightarrow }0
\end{eqnarray*}}
as $\ln \left\{ \left[ \pi _{\mathbf{x}}^{(1)}\left( \boldsymbol{\hat{\theta}}%
_{1}^{(U)}\right)\! /\!\left( 1-\pi _{\mathbf{0}}^{(1)}\left( \boldsymbol{\hat{\theta%
}}_{1}^{(U)}\right) \right) \right] /\left[ R_{\mathbf{x}}^{(\tau
_{1})}/R_{1}^{(\tau _{1})}\right] \right\}$ and $\ln \Big[\pi _{\mathbf{x}%
}^{(A_{i})}\left( \boldsymbol{\hat{\theta}}_{1}^{(U)}\right) /\Big( R_{\mathbf{x%
}}^{(A_{i},\tau _{1})}/$ $\!M_{i}^{(\tau _{1})}\Big)
\Big]$ are bounded as $%
\tau _{1}\rightarrow \infty $ (otherwise the middle term of the inequality (%
\ref{umle-double-ineq}) would not converge to zero). Finally, condition (%
\textbf{5}) implies that for any $\delta _{1}>0$ we have that $\Pr \left\{
\left\Vert \boldsymbol{\hat{\theta}}_{1}^{(U)}-\boldsymbol{\theta }_{1}^{\ast
}\right\Vert \leq \delta _{1}\right\} \rightarrow 1$, that is, $\boldsymbol{\hat{%
\theta}}_{1}^{(U)}\overset{P}{\rightarrow }\boldsymbol{\theta }_{1}^{\ast }$.

Straightforward results of the previous one are the following: $\pi _{%
\mathbf{x}}^{(1)}\left( \boldsymbol{\hat{\theta}}_{1}^{(U)}\right) \overset{P}{%
\rightarrow }\pi _{\mathbf{x}}^{(1)}(\boldsymbol{\theta }_{1}^{\ast })$, $%
\mathbf{x\in \Omega }$, and $\pi _{\mathbf{x}}^{(A_{i})}\left( \boldsymbol{\hat{%
\theta}}_{1}^{(U)}\right) \overset{P}{\rightarrow }\pi _{\mathbf{x}%
}^{(A_{i})}(\boldsymbol{\theta }_{1}^{\ast })$, $\mathbf{x\in \Omega }_{-i}$, $%
i=1,\ldots ,n$, as $\pi _{\mathbf{x}}^{(1)}(\boldsymbol{\theta }_{1})$ and $\pi
_{\mathbf{x}}^{(A_{i})}(\boldsymbol{\theta }_{1})$ are assumed to be continuous
functions of $\boldsymbol{\theta }_{1}$.

With respect to $\hat{\tau}_{1}^{(U)}$, from expression (\ref{umlet1}) we
have that the difference between $\hat{\tau}_{1}^{(U)}$ and $\left( M^{(\tau
_{1})}+ R_{1}^{(\tau _{1})}\right)\! /\! \left[ 1-(1-n/N)\pi _{%
\mathbf{0}}^{(1)}\left( \boldsymbol{\hat{\theta}}_{1}^{(U)}\right)\right]$ is
less than $1$. Thus, $\Big\{\hat{\tau}_{1}^{(U)}-\Big( M^{(\tau_{1})}+
R_{1}^{(\tau _{1})}\Big) /$ $\left[ 1-(1-n/N)\pi _{\mathbf{0}}^{(1)}\left( \boldsymbol{\hat{\theta}}%
_{1}^{(U)}\right) \right] \Big\} /$ $\tau _{1}$ $=\hat{\tau}%
_{1}^{(U)}/\tau _{1}-\left[ \left( M^{(\tau _{1})}+R_{1}^{(\tau
_{1})}\right) /\tau _{1}\right] /\Big[ 1-(1-n/N)$ $\pi _{\mathbf{0}%
}^{(1)}\left( \boldsymbol{\hat{\theta}}_{1}^{(U)}\right) \Big] \overset{P}{%
\rightarrow }0$, and since the second term of the last difference converges
to $1$ in probability so does $\hat{\tau}_{1}^{(U)}/\tau _{1}$.

\subsubsection{Consistency of the CMLE}

By the definition of the CMLE $\boldsymbol{\hat{\theta}}_{1}^{(C)}$, we have that%
{\setlength\arraycolsep{0pt}%
\begin{eqnarray*}
\frac{\ln [L_{11}(\boldsymbol{\hat{\theta}}_{1}^{(C)})L_{0}(\boldsymbol{\hat{\theta}}%
_{1}^{(C)})]}{R_{1}^{(\tau _{1})}} &=&\sum_{\mathbf{x}\in \Omega -\{\mathbf{%
0}\}}\frac{R_{\mathbf{x}}^{(\tau _{1})}}{R_{1}^{(\tau _{1})}}\ln \left[ 
\frac{\pi _{\mathbf{x}}^{(1)}\left( \boldsymbol{\hat{\theta}}_{1}^{(C)}\right) }{%
1-\pi _{\mathbf{0}}^{(1)}\left( \boldsymbol{\hat{\theta}}_{1}^{(C)}\right) }%
\right] \\
&&+\sum_{i=1}^{n}\frac{M_{i}^{(\tau _{1})}}{R_{1}^{(\tau _{1})}}\sum_{%
\mathbf{x}\in \Omega _{-i}}\frac{R_{\mathbf{x}}^{(A_{i},\tau _{1})}}{%
M_{i}^{(\tau _{1})}}\ln \left[ \pi _{\mathbf{x}}^{(A_{i})}\left( \boldsymbol{%
\hat{\theta}}_{1}^{(C)}\right) \right]+C \\
&\geq &\sum_{\mathbf{x}\in \Omega -\{\mathbf{0}\}}\frac{R_{\mathbf{x}%
}^{(\tau _{1})}}{R_{1}^{(\tau _{1})}}\ln \!\!\left[ \frac{\pi _{\mathbf{x}%
}^{(1)}\left( \boldsymbol{\theta }_{1}^{\ast }\right) }{1-\pi _{\mathbf{0}%
}^{(1)}\left( \boldsymbol{\theta }_{1}^{\ast }\right) }\right] \\
&&+\sum_{i=1}^{n}%
\frac{M_{i}^{(\tau _{1})}}{R_{1}^{(\tau _{1})}}\!\sum_{\mathbf{x}\in \Omega
_{-i}}\!\frac{R_{\mathbf{x}}^{(A_{i},\tau _{1})}}{M_{i}^{(\tau _{1})}}\ln\!\! %
\left[ \pi _{\mathbf{x}}^{(A_{i})}\left( \boldsymbol{\theta }_{1}^{\ast }\right) %
\right]+C \\
&=&\frac{\ln [L_{11}(\boldsymbol{\theta }_{1}^{\ast })L_{0}(\boldsymbol{\theta }%
_{1}^{\ast })]}{R_{1}^{(\tau _{1})}},
\end{eqnarray*}}
where $C$ depends only on observable variables.

Using the same procedure as that used in the case of the UMLE $\boldsymbol{\hat{%
\theta}}_{1}^{(U)}$ we will get the double inequality (\ref{umle-double-ineq}%
) but in terms of $\boldsymbol{\hat{\theta}}_{1}^{(C)}$ instead of $\boldsymbol{\hat{%
\theta}}_{1}^{(U)}$ and without the terms $\ln\! \left[ L_{MULT}\!\left( \tau
_{1}\right) \right]\! /R_{1}^{(\tau _{1})}$ and $\ln \left[ L_{12}\left( \tau
_{1},\boldsymbol{\theta }_{1}^{\ast }\right) \right] /R_{1}^{(\tau _{1})}$.
Consequently, we will also have that $\boldsymbol{\hat{\theta}}_{1}^{(C)}\overset%
{P}{\rightarrow }\boldsymbol{\theta }_{1}^{\ast }$, $\pi _{\mathbf{x}%
}^{(1)}\left( \boldsymbol{\hat{\theta}}_{1}^{(C)}\right) \overset{P}{\rightarrow 
}\pi _{\mathbf{x}}^{(1)}(\boldsymbol{\theta }_{1}^{\ast })$, $\mathbf{x\in
\Omega }$, $\pi _{\mathbf{x}}^{(A_{i})}\left( \boldsymbol{\hat{\theta}}%
_{1}^{(C)}\right) \overset{P}{\rightarrow }\pi _{\mathbf{x}}^{(A_{i})}(%
\boldsymbol{\theta }_{1}^{\ast })$, $\mathbf{x\in \Omega }_{-i}$, $i=1,\ldots ,n$%
, and $\hat{\tau}_{1}^{(C)}/\tau _{1}\overset{P}{\rightarrow }1$, where the
last result is obtained by using expression (\ref{cmlet1}) and the same
arguments as those used to prove that $\hat{\tau}_{1}^{(U)}/\tau _{1}\overset%
{P}{\rightarrow }1$.

\subsection{Asymptotic distributions of the UMLEs and CMLEs of $\protect\tau%
_{1}$ and $\boldsymbol{\protect\theta }_{1}^{\ast }$}

\subsubsection{Asymptotic multivariate normal distribution of the UMLE of
$\left(\protect\tau_{1},\boldsymbol{{\protect\theta}}_{1}^{\ast }\right)$}

We will prove the asymptotic multivariate normal distribution of $\left(\tau
_{1}^{-1/2}\hat{\tau}_{1}^{(U)},\tau_{1}^{1/2}\boldsymbol{\hat{\theta}}_{1}^{(U)}
\right)$ by proving that this estimator satisfies the conditions of Theorem 1. 
Condition (i) was already proved in the previous section. From expression 
(\ref{umlet1}) it follows that $\left( \hat{\tau}_{1}^{(U)},\boldsymbol{\hat{\theta}}%
_{1}^{(U)}\right) $ satisfies condition (ii). Finally, by the definition of
the UMLEs we have that condition (iii) is also satisfied. Thus, \ by Theorem
1, $\left[ \tau _{1}^{-1/2}\left( \hat{\tau}_{1}^{(U)}-\tau _{1}\right)
,\tau _{1}^{1/2}\left( \boldsymbol{\hat{\theta}}_{1}^{(U)}-\boldsymbol{\theta }%
_{1}^{\ast }\right) \right] \overset{D}{\rightarrow }N_{q_{1}+1}\left( 
\mathbf{0},\mathbf{\Sigma }_{1}\right) $. This result implies that $\tau
_{1}^{-1/2}\!\left( \hat{\tau}_{1}^{(U)}\!-\!\tau_{1}\right)$ $\overset{D}{%
\rightarrow }N(0,\sigma _{1U}^{2})$ and $\tau_{1}^{1/2}\left( \mathbf{\hat{\theta}}
_{1}^{(U)}-\mathbf{\theta }_{1}^{\ast}\right) \overset{D}{\rightarrow }N_{q_{1}}
\left( \mathbf{0},\mathbf{\Sigma }_{1_{22}}\right) $, where
{\setlength\arraycolsep{0pt}
\begin{eqnarray}
\sigma _{1U}^{2} &=&\frac{1-n/N}{1-(1-n/N)\pi _{\mathbf{0}}^{(1)}(\boldsymbol{%
\theta }_{1}^{\ast })}\left\{ \pi _{\mathbf{0}}^{(1)}(\boldsymbol{\theta }%
_{1}^{\ast })+\frac{1-n/N}{1-(1-n/N)\pi _{\mathbf{0}}^{(1)}(\boldsymbol{\theta }%
_{1}^{\ast })}\left[ \nabla \pi _{\mathbf{0}}^{(1)}\left( \boldsymbol{\theta }%
_{1}^{\ast }\right) \right] ^{\prime }\right. \notag  \\ 
\label{sigma1U2} && \times\mathbf{\Sigma }_{1_{22}}
\left[ \nabla \pi _{\mathbf{0}}^{(1)}\left( \boldsymbol{\theta }_{1}^{\ast
}\right) \right]\! \Bigg\},
\end{eqnarray}}
\begin{equation}
\label{sigma122}\mathbf{\Sigma }_{1_{22}}=\left\{ \Sigma _{1_{22}}^{-1}-\frac{1-n/N}{\pi _{%
\mathbf{0}}^{(1)}(\boldsymbol{\theta }_{1}^{\ast })\left[ 1-(1-n/N)\pi _{\mathbf{%
0}}^{(1)}(\boldsymbol{\theta }_{1}^{\ast })\right] }\left[ \nabla \pi _{\mathbf{0%
}}^{(1)}\left( \boldsymbol{\theta }_{1}^{\ast }\right) \right] \left[ \nabla \pi
_{\mathbf{0}}^{(1)}\left( \boldsymbol{\theta }_{1}^{\ast }\right) \right]
^{\prime }\right\} ^{-1},  
\end{equation}
$\nabla \pi_{\mathbf{0}}^{(1)}\left( \boldsymbol{\theta }_{1}^{\ast
}\right) $ is the gradient of $\pi _{\mathbf{0}}^{(1)}\left( \boldsymbol{\theta }%
_{1}\right) $ evaluated at $\boldsymbol{\theta }_{1}^{\ast }$ and $\Sigma
_{1_{22}}^{-1}$ is the $q_{1}\times q_{1}$ submatrix of $\mathbf{\Sigma }%
_{1}^{-1}$ obtained by removing its first row and first column.

\subsubsection{Asymptotic multivariate normal distribution of the CMLE $%
\boldsymbol{\hat{\protect\theta}}_{1}^{(U)}$ and asymptotic normal distribution
of the CMLE $\hat{\protect\tau}_{1}^{(C)}$}

The CMLE $\left(\tau_1^{-1/2}\hat{\tau}_{1}^{(C)},\tau_1^{1/2}\boldsymbol{\hat{\theta}}
_{1}^{(C)}\right)$ does not have an asymptotic multivariate normal distribution
since this estimator does not satisfy condition (iii) of Theorem 1. To see this, 
notice that by (\ref{derivlnL11L0}) it follows that $\partial \left\{ \ln \left[
L_{11}\left( \boldsymbol{\hat{\theta}}_{1}^{(C)}\right)L_{0}\left( \boldsymbol
{\hat{\theta}}_{1}^{(C)}\right) \right]\right\} /\partial \theta _{j}^{(1)}=0$. 
Therefore, 
{\setlength\arraycolsep{1pt}
\begin{eqnarray*}
\tau _{1}^{-1/2}\frac{\partial }{\partial \theta _{j}^{(1)}}l_{(1)}\left( 
\hat{\tau}_{1}^{(C)},\boldsymbol{\hat{\theta}}_{1}^{(C)}\right) &=&\tau
_{1}^{-1/2}\frac{\partial }{\partial \theta _{j}^{(1)}}\ln \left[
L_{12}\left( \hat{\tau}_{1}^{(C)},\boldsymbol{\hat{\theta}}_{1}^{(C)}\right) %
\right] \\
&=&\tau _{1}^{-1/2}\frac{\partial \pi _{\mathbf{0}}^{(1)}\left( \boldsymbol{\hat{%
\theta}}_{1}^{(C)}\right) }{\partial \theta _{j}^{(1)}}\!\left[ \frac{\hat{\tau%
}_{1}^{(C)}\!-\!M^{(\tau _{1})}\!-\!R_{1}^{(\tau _{1})}}{\pi _{\mathbf{0}%
}^{(1)}\left( \boldsymbol{\hat{\theta}}_{1}^{(C)}\right) }-\frac{R_{1}^{(\tau
_{1})}}{1\!-\!\pi _{\mathbf{0}}^{(1)}\left( \boldsymbol{\hat{\theta}}%
_{1}^{(C)}\right) }\right]\!\! .
\end{eqnarray*}}%
By using expression (\ref{cmlet1}) and after some algebraic steps we get that%
{\setlength\arraycolsep{1pt}
\begin{eqnarray}
& & \tau _{1}^{-1/2}\frac{\partial }{\partial \theta _{j}^{(1)}}l_{(1)}\left( 
\hat{\tau}_{1}^{(C)},\boldsymbol{\hat{\theta}}_{1}^{(C)}\right)=\left[ \frac{%
\partial \pi _{\mathbf{0}}^{(1)}\left( \boldsymbol{\hat{\theta}}%
_{1}^{(C)}\right) }{\partial \theta _{j}^{(1)}}\right] \notag \\
& & \times \left\{ \frac{\hat{\tau}_{1}^{(C)}-\left( M^{(\tau _{1})}+
R_{1}^{(\tau _{1})}\right) /\left[1-(1-n/N)\pi _{\mathbf{0}}^{(1)}
\left( \boldsymbol{\hat{\theta}}_{1}^{(C)}\right) \right] }{\tau _{1}^{1/2}
\pi_{\mathbf{0}}^{(1)}\left(\boldsymbol{\hat{\theta}}_{1}^{(C)}\right)}\right. \notag \\
&&+\left. \tau _{1}^{1/2}\frac{\left( M^{(\tau _{1})}/\tau _{1}\right)
\left( 1-n/N\right) \left[ 1-\pi _{\mathbf{0}}^{(1)}\left( \boldsymbol{\hat{%
\theta}}_{1}^{(C)}\right) \right] -\left( R_{1}^{(\tau _{1})}/\tau
_{1}\right) (n/N)}{\left[ 1-\left( 1-n/N\right) \pi _{\mathbf{0}%
}^{(1)}\left( \boldsymbol{\hat{\theta}}_{1}^{(C)}\right) \right] \left[ 1-\pi _{%
\mathbf{0}}^{(1)}\left( \boldsymbol{\hat{\theta}}_{1}^{(C)}\right) \right] }%
\right\}.   \label{cmle-no-cond-iii}
\end{eqnarray}}%
From (\ref{cmlet1}) and the fact that $\pi _{\mathbf{0}}^{(1)}\left( \boldsymbol{%
\hat{\theta}}_{1}^{(C)}\right) \overset{P}{\rightarrow }\pi _{\mathbf{0}%
}^{(1)}(\boldsymbol{\theta }_{1}^{\ast })$, it follows that the order of
magnitude of the first term in the curly brackets of (\ref{cmle-no-cond-iii}%
) is $O_{p}\left( \tau _{1}^{-1/2}\right) $. On the other hand, since $%
M^{(\tau _{1})}/\tau _{1}=n/N+O_{p}\left( \tau _{1}^{-1/2}\right) $, $%
R_{1}^{(\tau _{1})}/\tau _{1}=(1-n/N)\left[ 1-\pi _{\mathbf{0}}^{(1)}(%
\boldsymbol{\theta }_{1}^{\ast })\right] +O_{p}\left( \tau _{1}^{-1/2}\right) $
and, as we will show in the next paragraph, $\boldsymbol{\hat{\theta}}_{1}^{(C)}=%
\boldsymbol{\theta }_{1}^{\ast }+O_{p}\left( \tau _{1}^{-1/2}\right) $, it
follows that the order of the second term in the curly brackets of (\ref%
{cmle-no-cond-iii}) is $O_{p}\left( 1\right) $; therefore (\ref%
{cmle-no-cond-iii}) does not converge to zero in probability.

Nevertheless, although $\left[ \tau _{1}^{-1/2}\left( \hat{\tau}%
_{1}^{(C)}-\tau _{1}\right) ,\tau _{1}^{1/2}\left( \boldsymbol{\hat{\theta}}%
_{1}^{(C)}-\boldsymbol{\theta }_{1}^{\ast }\right) \right] $ does not have an
asymptotic multivariate normal distribution, $\tau _{1}^{1/2}\left( \boldsymbol{%
\hat{\theta}}_{1}^{(C)}-\boldsymbol{\theta }_{1}^{\ast }\right) $ does have. To
prove this, we will show that conditions (i) and (ii) of Theorem 2 are
satisfied. In the previous section we proved that $\boldsymbol{\hat{\theta}}%
_{1}^{(C)}$ satisfies condition (i), and from (\ref{derivlnL11L0}) we have
that $\boldsymbol{\hat{\theta}}_{1}^{(C)}$ satisfies condition (ii). Thus by
Theorem 2, $\tau _{1}^{1/2}\left( \boldsymbol{\hat{\theta}}_{1}^{(C)}\!-\boldsymbol{%
\theta }_{1}^{\ast }\right)\!\!\overset{D}{\rightarrow }$ $N_{q_{1}}\left(\mathbf{%
0},\mathbf{\Psi }_{1}\right)$.

Now, $\tau _{1}^{-1/2}\left( \hat{\tau}_{1}^{(C)}-\tau _{1}\right) $ has
also an asymptotic normal distribution because in addition that $\boldsymbol{\hat{%
\theta}}_{1}^{(C)}$ satisfies conditions (i) and (ii) of Theorem 2, $\hat{%
\tau}_{1}^{(C)}$ satisfies condition (iii). Thus by Theorem 2, $\tau
_{1}^{-1/2}\left( \hat{\tau}_{1}^{(C)}-\tau _{1}\right)\overset{D}{%
\rightarrow}$ $N(0,\sigma _{1C}^{2})$, where $\sigma _{1C}^{2}$ is given by (%
\ref{sigma12}).

It is worth noting that the asymptotic marginal distributions of $\tau _{1}^{-1/2}
\Big(\hat{\tau}_{1}^{(C)}-\tau_{1}\Big)$ and $\tau _{1}^{1/2}\left(\boldsymbol{\hat{\theta}}
_{1}^{(C)}\!-\boldsymbol{\theta }_{1}^{\ast}\right)$ are not the same as those of
$\tau _{1}^{-1/2}\Big(\hat{\tau}_{1}^{(U)}-\tau_{1}\Big)$ and $\tau _{1}^{1/2}\left
(\boldsymbol{\hat{\theta}}_{1}^{(U)}\!-\boldsymbol{\theta}_{1}^{\ast}\right)$. To show
this, we will firstly prove that
\begin{equation}
\mathbf{\Psi }_{1}^{-1}=\mathbf{\Sigma }_{1_{22}}^{-1}-\frac{1-n/N}{\pi _{%
\mathbf{0}}^{(1)}\left( \boldsymbol{\theta }_{1}^{\ast }\right) \left[ 1-\pi _{%
\mathbf{0}}^{(1)}\left( \boldsymbol{\theta }_{1}^{\ast }\right) \right] }\left[
\nabla \pi _{\mathbf{0}}^{(1)}\left( \boldsymbol{\theta }_{1}^{\ast }\right) %
\right] \left[ \nabla \pi _{\mathbf{0}}^{(1)}\left( \boldsymbol{\theta }%
_{1}^{\ast }\right) \right] ^{\prime },  \label{psi-1sigma-1}
\end{equation}%
where $\mathbf{\Psi }_{1}^{-1}$ is \ the $q_{1}\times q_{1}$ matrix defined
in the statement of Theorem 2 and $\mathbf{\Sigma }_{1_{22}}^{-1}$ is the $%
q_{1}\times q_{1}$ submatrix of the matrix $\mathbf{\Sigma }_{1}^{-1}$,
defined in the statement of Theorem 1, obtained by removing its first row
and first column. Since $\tilde{\pi}_{\mathbf{x}}^{(1)}\left( \boldsymbol{\theta 
}_{1}^{\ast }\right) =\pi _{\mathbf{x}}^{(1)}\left( \boldsymbol{\theta }%
_{1}^{\ast }\right) /\left[ 1-\pi _{\mathbf{0}}^{(1)}\left( \boldsymbol{\theta }%
_{1}^{\ast }\right) \right] $, it follows that 
\begin{eqnarray*}
\frac{\partial \tilde{\pi}_{\mathbf{x}}^{(1)}\left( \boldsymbol{\theta }%
_{1}^{\ast }\right) }{\partial \theta _{j}^{(1)}} &=&\frac{\left[ \partial
\pi _{\mathbf{x}}^{(1)}\left( \boldsymbol{\theta }_{1}^{\ast }\right) /\partial
\theta _{j}^{(1)}\right] \left[ 1-\pi _{\mathbf{0}}^{(1)}\left( \boldsymbol{%
\theta }_{1}^{\ast }\right) \right] +\pi _{\mathbf{x}}^{(1)}\left( \boldsymbol{%
\theta }_{1}^{\ast }\right) \left[ \partial \pi _{\mathbf{0}}^{(1)}\left( 
\boldsymbol{\theta }_{1}^{\ast }\right) /\partial \theta _{j}^{(1)}\right] }{%
\left[ 1-\pi _{\mathbf{0}}^{(1)}\left( \boldsymbol{\theta }_{1}^{\ast }\right) %
\right] ^{2}} \\
&=&\frac{1}{1-\pi _{\mathbf{0}}^{(1)}\left( \boldsymbol{\theta }_{1}^{\ast
}\right) }\frac{\partial \pi _{\mathbf{x}}^{(1)}\left( \boldsymbol{\theta }%
_{1}^{\ast }\right) }{\partial \theta _{j}^{(1)}}+\frac{\pi _{\mathbf{x}%
}^{(1)}\left( \boldsymbol{\theta }_{1}^{\ast }\right) }{\left[ 1-\pi _{\mathbf{0}%
}^{(1)}\left( \boldsymbol{\theta }_{1}^{\ast }\right) \right] ^{2}}\frac{%
\partial \pi _{\mathbf{0}}^{(1)}\left( \boldsymbol{\theta }_{1}^{\ast }\right) }{%
\partial \theta _{j}^{(1)}}.
\end{eqnarray*}%
Then%
{\setlength\arraycolsep{0pt}
\begin{eqnarray*}
&&(1-n/N)\left[ 1-\pi _{\mathbf{0}}^{(1)}\left( \boldsymbol{\theta }_{1}^{\ast
}\right) \right] \sum_{\mathbf{x}\in \Omega -\{\mathbf{0}\}}\frac{1-\pi _{%
\mathbf{0}}^{(1)}\left( \boldsymbol{\theta }_{1}^{\ast }\right) }{\pi _{\mathbf{x%
}}^{(1)}\left( \boldsymbol{\theta }_{1}^{\ast }\right) }\frac{\partial \tilde{\pi%
}_{\mathbf{x}}^{(1)}\left( \boldsymbol{\theta }_{1}^{\ast }\right) }{\partial
\theta _{i}^{(1)}}\frac{\partial \tilde{\pi}_{\mathbf{x}}^{(1)}\left( 
\boldsymbol{\theta }_{1}^{\ast }\right) }{\partial \theta _{j}^{(1)}} \\
&=&(1-n/N)\left[ 1-\pi _{\mathbf{0}}^{(1)}\left( \boldsymbol{\theta }_{1}^{\ast
}\right) \right] \sum_{\mathbf{x}\in \Omega -\{\mathbf{0}\}}\frac{1-\pi _{%
\mathbf{0}}^{(1)}\left( \boldsymbol{\theta }_{1}^{\ast }\right) }{\pi _{\mathbf{x%
}}^{(1)}\left( \boldsymbol{\theta }_{1}^{\ast }\right) }\left[ \frac{1}{1-\pi _{%
\mathbf{0}}^{(1)}\left( \boldsymbol{\theta }_{1}^{\ast }\right) }\frac{\partial
\pi _{\mathbf{x}}^{(1)}\left( \boldsymbol{\theta }_{1}^{\ast }\right) }{\partial
\theta _{i}^{(1)}}\right. \\
&& +\left.\frac{\pi _{\mathbf{x}}^{(1)}\left( \boldsymbol{\theta }%
_{1}^{\ast }\right) }{\left[ 1-\pi _{\mathbf{0}}^{(1)}\left( \boldsymbol{\theta }%
_{1}^{\ast }\right) \right] ^{2}}\frac{\partial \pi _{\mathbf{0}%
}^{(1)}\left( \boldsymbol{\theta }_{1}^{\ast }\right) }{\partial \theta
_{i}^{(1)}}\right]\left[ \frac{1}{1-\pi _{\mathbf{0}}^{(1)}\left( \boldsymbol{\theta }%
_{1}^{\ast }\right) }\frac{\partial \pi _{\mathbf{x}}^{(1)}\left( \boldsymbol{%
\theta }_{1}^{\ast }\right) }{\partial \theta _{j}^{(1)}}+\frac{\pi _{%
\mathbf{x}}^{(1)}\left( \boldsymbol{\theta }_{1}^{\ast }\right) }{\left[ 1-\pi _{%
\mathbf{0}}^{(1)}\left( \boldsymbol{\theta }_{1}^{\ast }\right) \right] ^{2}}%
\frac{\partial \pi _{\mathbf{0}}^{(1)}\left( \boldsymbol{\theta }_{1}^{\ast
}\right) }{\partial \theta _{j}^{(1)}}\right]  \\
&=&(1-n/N)\sum_{\mathbf{x}\in \Omega -\{\mathbf{0}\}}\frac{1}{\pi _{\mathbf{x%
}}^{(1)}\left( \boldsymbol{\theta }_{1}^{\ast }\right) }\frac{\partial \pi _{%
\mathbf{x}}^{(1)}\left( \boldsymbol{\theta }_{1}^{\ast }\right) }{\partial
\theta _{i}^{(1)}}\frac{\partial \pi _{\mathbf{x}}^{(1)}\left( \boldsymbol{%
\theta }_{1}^{\ast }\right) }{\partial \theta _{j}^{(1)}}+\frac{1-n/N}{1-\pi
_{\mathbf{0}}^{(1)}\left( \boldsymbol{\theta }_{1}^{\ast }\right) }\frac{%
\partial \pi _{\mathbf{0}}^{(1)}\left( \boldsymbol{\theta }_{1}^{\ast }\right) }{%
\partial \theta _{j}^{(1)}} \\
&&\times \sum_{\mathbf{x}\in \Omega -\{\mathbf{0}\}}\frac{%
\partial \pi _{\mathbf{x}}^{(1)}\left( \boldsymbol{\theta }_{1}^{\ast }\right) }{%
\partial \theta _{i}^{(1)}}
+\frac{1-n/N}{1-\pi _{\mathbf{0}}^{(1)}\left( \boldsymbol{\theta }_{1}^{\ast
}\right) }\frac{\partial \pi _{\mathbf{0}}^{(1)}\left( \boldsymbol{\theta }%
_{1}^{\ast }\right) }{\partial \theta _{i}^{(1)}}\sum_{\mathbf{x}\in \Omega
-\{\mathbf{0}\}}\frac{\partial \pi _{\mathbf{x}}^{(1)}\left( \boldsymbol{\theta }%
_{1}^{\ast }\right) }{\partial \theta _{j}^{(1)}}+\frac{1-n/N}{\left[ 1-\pi
_{\mathbf{0}}^{(1)}\left( \boldsymbol{\theta }_{1}^{\ast }\right) \right] ^{2}}  \\
&&\times\frac{\partial \pi _{\mathbf{0}}^{(1)}\left( \boldsymbol{\theta }_{1}^{\ast
}\right) }{\partial \theta _{i}^{(1)}}\frac{\partial \pi _{\mathbf{0}%
}^{(1)}\left( \boldsymbol{\theta }_{1}^{\ast }\right) }{\partial \theta
_{j}^{(1)}} \sum_{\mathbf{x}\in \Omega -\{\mathbf{0}\}}\pi _{\mathbf{x}%
}^{(1)}\left( \boldsymbol{\theta }_{1}^{\ast }\right) \\
&=&(1-n/N)\sum_{\mathbf{x}\in \Omega -\{\mathbf{0}\}}\frac{1}{\pi _{\mathbf{x%
}}^{(1)}\left( \boldsymbol{\theta }_{1}^{\ast }\right) }\frac{\partial \pi _{%
\mathbf{x}}^{(1)}\left( \boldsymbol{\theta }_{1}^{\ast }\right) }{\partial
\theta _{i}^{(1)}}\frac{\partial \pi _{\mathbf{x}}^{(1)}\left( \boldsymbol{%
\theta }_{1}^{\ast }\right) }{\partial \theta _{j}^{(1)}}-\frac{2(1-n/N)}{%
1-\pi _{\mathbf{0}}^{(1)}\left( \boldsymbol{\theta }_{1}^{\ast }\right) }\frac{%
\partial \pi _{\mathbf{0}}^{(1)}\left( \boldsymbol{\theta }_{1}^{\ast }\right) }{%
\partial \theta _{i}^{(1)}}\frac{\partial \pi _{\mathbf{0}}^{(1)}\left( 
\boldsymbol{\theta }_{1}^{\ast }\right) }{\partial \theta _{j}^{(1)}} \\
&&+\frac{1-n/N}
{1-\pi _{\mathbf{0}}^{(1)}\left( \boldsymbol{\theta }_{1}^{\ast }\right) }\frac{%
\partial \pi _{\mathbf{0}}^{(1)}\left( \boldsymbol{\theta }_{1}^{\ast }\right) }{%
\partial \theta _{i}^{(1)}}\frac{\partial \pi _{\mathbf{0}}^{(1)}\left( 
\boldsymbol{\theta }_{1}^{\ast }\right) }{\partial \theta _{j}^{(1)}}
\end{eqnarray*}}
{\setlength\arraycolsep{0pt}
\begin{eqnarray*}
&=&(1-n/N)\sum_{\mathbf{x}\in \Omega -\{\mathbf{0}\}}\frac{1}{\pi _{\mathbf{x%
}}^{(1)}\left( \boldsymbol{\theta }_{1}^{\ast }\right) }\frac{\partial \pi _{%
\mathbf{x}}^{(1)}\left( \boldsymbol{\theta }_{1}^{\ast }\right) }{\partial
\theta _{i}^{(1)}}\frac{\partial \pi _{\mathbf{x}}^{(1)}\left( \boldsymbol{%
\theta }_{1}^{\ast }\right) }{\partial \theta _{j}^{(1)}}-\frac{1-n/N}{1-\pi
_{\mathbf{0}}^{(1)}\left( \boldsymbol{\theta }_{1}^{\ast }\right) }\frac{%
\partial \pi _{\mathbf{0}}^{(1)}\left( \boldsymbol{\theta }_{1}^{\ast }\right) }{%
\partial \theta _{i}^{(1)}}\frac{\partial \pi _{\mathbf{0}}^{(1)}\left( 
\boldsymbol{\theta }_{1}^{\ast }\right) }{\partial \theta _{j}^{(1)}} \\
&=&(1-n/N)\sum_{\mathbf{x}\in \Omega }\frac{1}{\pi _{\mathbf{x}}^{(1)}\left( 
\boldsymbol{\theta }_{1}^{\ast }\right) }\frac{\partial \pi _{\mathbf{x}%
}^{(1)}\left( \boldsymbol{\theta }_{1}^{\ast }\right) }{\partial \theta
_{i}^{(1)}}\frac{\partial \pi _{\mathbf{x}}^{(1)}\left( \boldsymbol{\theta }%
_{1}^{\ast }\right) }{\partial \theta _{j}^{(1)}}-\frac{1-n/N}{\pi _{\mathbf{%
0}}^{(1)}\left( \boldsymbol{\theta }_{1}^{\ast }\right) \left[ 1-\pi _{\mathbf{0}%
}^{(1)}\left( \boldsymbol{\theta }_{1}^{\ast }\right) \right] }  \\
&&\times\frac{\partial
\pi _{\mathbf{0}}^{(1)}\left( \boldsymbol{\theta }_{1}^{\ast }\right) }{\partial
\theta _{i}^{(1)}}\frac{\partial \pi _{\mathbf{0}}^{(1)}\left( \boldsymbol{%
\theta }_{1}^{\ast }\right) }{\partial \theta _{j}^{(1)}}.
\end{eqnarray*}}%
Therefore, from the definitions of $\mathbf{\Psi }_{1}^{-1}$ and $\Sigma
_{1_{22}}^{-1}$ we have that%
\begin{equation*}
\left[ \mathbf{\Psi }_{1}^{-1}\right] _{i,j}=\left[ \Sigma _{1_{22}}^{-1}%
\right] _{i,j}-\frac{1-n/N}{\pi _{\mathbf{0}}^{(1)}\left( \boldsymbol{\theta }%
_{1}^{\ast }\right) \left[ 1-\pi _{\mathbf{0}}^{(1)}\left( \boldsymbol{\theta }%
_{1}^{\ast }\right) \right] }\frac{\partial \pi _{\mathbf{0}}^{(1)}\left( 
\boldsymbol{\theta }_{1}^{\ast }\right) }{\partial \theta _{i}^{(1)}}\frac{%
\partial \pi _{\mathbf{0}}^{(1)}\left( \boldsymbol{\theta }_{1}^{\ast }\right) }{%
\partial \theta _{j}^{(1)}},
\end{equation*}%
and (\ref{psi-1sigma-1}) is proved. 

From (\ref{sigma1U2}) and (\ref{psi-1sigma-1}) it follows that $\sigma
_{1C}^{2}\!\neq\!\sigma _{1U}^{2}$, and hence $%
\tau _{1}^{-1/2}\!\left(\hat{\tau}_{1}^{(U)}\!-\!\tau _{1}\right) $ and $\tau
_{1}^{-1/2}\!\Big(\hat{\tau}_{1}^{(C)}\!-$ $\tau _{1}\Big) $ do not have the same
asymptotic normal distribution. In addition, (\ref{sigma122}) and (\ref%
{psi-1sigma-1}) imply that $\mathbf{\Psi }_{1}\neq \mathbf{\Sigma }_{1_{22}}$%
, and consequently that $\tau _{1}^{1/2}\left( \boldsymbol{\hat{\theta}}%
_{1}^{(U)}-\boldsymbol{\theta }_{1}^{\ast }\right) $ and $\tau _{1}^{1/2}\left( 
\boldsymbol{\hat{\theta}}_{1}^{(C)}-\boldsymbol{\theta }_{1}^{\ast }\right) $ do not
have the same asymptotic normal distribution. Notice also that even though the
asymptotic marginal distributions of the UMLEs and CMLEs of $\tau_1$ and 
$\boldsymbol\theta_1^{\ast}$ are not the same, from (\ref{psi-1sigma-1}) it follows
that if $n/N$ were small enough so that $1-(1-n/N)\pi_{\mathbf{0}}^{(1)}\left
(\boldsymbol{\theta}_{1}^{\ast }\right)\approx 1-\pi_{\mathbf{0}}^{(1)}\left
(\boldsymbol{\theta}_{1}^{\ast }\right)$, then $\mathbf{\Psi }_{1}\approx
\mathbf{\Sigma }_{1_{22}}$ and their asymptotic marginal distributions
would be very similar to each other.

\subsection{Asymptotic properties of unconditional and conditional maximum 
likelihood estimators of $(\protect\tau_{2},\boldsymbol{\protect\theta}_{2}^{\ast})$}

The unconditional and conditional maximum likelihood estimators of $(\tau
_{2},\boldsymbol{\theta }_{2}^{\ast })$ are exactly the same as those used in
capture-recapture studies. Sanathanan (1972) assumed conditions similar to 
\textbf{(1)}-\textbf{(4)} and \textbf{(6)} and proved the following results:

\begin{description}
\item[(i)] $\boldsymbol{\hat{\theta}}_{2}^{(U)}\overset{P}{\rightarrow }\boldsymbol{%
\theta }_{2}^{\ast }$ and $\boldsymbol{\hat{\theta}}_{2}^{(C)}\overset{P}{%
\rightarrow }\boldsymbol{\theta }_{2}^{\ast }$ as $\tau _{2}\rightarrow \infty $.

\item[(ii)] $\hat{\tau}_{2}^{(U)}/\tau _{2}\overset{P}{\rightarrow }1$ and $%
\hat{\tau}_{2}^{(C)}/\tau _{2}\overset{P}{\rightarrow }1$ as $\tau _{2}\rightarrow
\infty$.

\item[(iii)] $\left[ \tau _{2}^{-1/2}\!\left( \hat{\tau}_{2}^{(U)}\!-\tau
_{2}\right)\! ,\tau _{2}^{1/2}\!\left( \boldsymbol{\hat{\theta}}_{2}^{(U)}\!-\boldsymbol{%
\theta }_{2}^{\ast }\right)\! \right]\! \overset{D}{\rightarrow }\! N_{q_{2}}\left( 
\mathbf{0},\mathbf{\Sigma }_{2}\right) $ and $\Big[ \tau _{2}^{-1/2}\!\left( 
\hat{\tau}_{2}^{(C)}\!-\tau _{2}\right)\! ,\tau _{2}^{1/2}\Big( \boldsymbol{\hat{%
\theta}}_{2}^{(C)}\!-$ $\boldsymbol{\theta}_{2}^{\ast}\Big)\! \Big]$ $\overset{D}{%
\rightarrow }N_{q_{2}}\left( \mathbf{0},\mathbf{\Sigma }_{2}\right) $ as $%
\tau _{2}\rightarrow \infty$,

where $\mathbf{\Sigma }_{2}$ is the inverse of the $(q_{2}+1)\times
(q_{2}+1) $ matrix $\mathbf{\Sigma }_{2}^{-1}$ defined by%
{\setlength\arraycolsep{0pt}
\begin{eqnarray*}
\left[ \mathbf{\Sigma }_{2}^{-1}\right] _{1,1} &=&\left[ 1-\pi _{\mathbf{0}%
}^{(2)}(\boldsymbol{\theta }_{2}^{\ast })\right] /\pi _{\mathbf{0}}^{(2)}(%
\boldsymbol{\theta }_{2}^{\ast }), \\
\left[ \mathbf{\Sigma }_{2}^{-1}\right] _{1,j+1} &=&\left[ \mathbf{\Sigma }%
_{2}^{-1}\right] _{j+1,1}=-\left[ 1/\pi _{\mathbf{0}}^{(2)}(\boldsymbol{\theta }%
_{2}^{\ast })\right] \left[ \partial \pi _{\mathbf{0}}^{(2)}(\boldsymbol{\theta }%
_{2}^{\ast })/\partial \theta _{j}^{(2)}\right] \text{, \ }j=1,\ldots ,q_{2},
\\
\left[ \mathbf{\Sigma }_{2}^{-1}\right] _{i+1,j+1} &=&\left[ \mathbf{\Sigma }%
_{2}^{-1}\right] _{j+1,i+1}=\sum_{\mathbf{x}\in \Omega }\left[
1/\pi _{\mathbf{x}}^{(2)}(\boldsymbol{\theta }_{2}^{\ast })\right]\! \left[
\partial \pi _{\mathbf{x}}^{(2)}(\boldsymbol{\theta }_{2}^{\ast })/\partial
\theta _{i}^{(2)}\right]\! \left[ \partial \pi _{\mathbf{x}}^{(2)}(\boldsymbol{%
\theta }_{2}^{\ast })/\partial \theta _{j}^{(2)}\right] , \\
&&\hspace{45ex}i,j=1,\ldots ,q_{2},
\end{eqnarray*}}%
and which is assumed to be a non-singular matrix.
\end{description}

Because the proofs of these results are exactly the same as those given by
Sanathanan (1972), we will omit them. It is worth noting that unlike the
CMLE $\left(\tau_{1}^{-1/2}\hat{\tau}_{1}^{(C)},\tau_{1}^{1/2}
\boldsymbol{\hat{\theta}}_{1}^{(C)}\right)$, the estimator $\left(\tau _{2}^{-1/2}
\hat{\tau}_{2}^{(C)},\tau _{2}^{1/2}\boldsymbol{\hat{\theta}}%
_{2}^{(C)}\right) $ does have an asymptotic multivariate normal distribution.

The previous results imply that $\tau _{2}^{-1/2}\!\left( \hat{\tau}%
_{2}^{(U)}\!-\tau _{2}\right)\! \overset{D}{\rightarrow }\! N(0,\sigma _{2}^{2})$
and $\tau _{2}^{-1/2}\!\left( \hat{\tau}_{2}^{(C)}\!-\tau _{2}\right)\! \overset{D}%
{\rightarrow }\! N(0,\sigma _{2}^{2})$, where%
\begin{eqnarray*}
\sigma _{2}^{2} &=&\frac{1}{1-\pi _{\mathbf{0}}^{(2)}(\boldsymbol{\theta }%
_{2}^{\ast })}\left\{ \pi _{\mathbf{0}}^{(2)}(\boldsymbol{\theta }_{2}^{\ast })+%
\frac{1}{1-\pi _{\mathbf{0}}^{(2)}(\boldsymbol{\theta }_{2}^{\ast })}\left[
\nabla \pi _{\mathbf{0}}^{(2)}\left( \boldsymbol{\theta }_{2}^{\ast }\right) %
\right] ^{\prime }\left[ \Sigma _{2_{22}}^{-1}\right. \right. \\
&&\left. \left. -\frac{1}{\pi _{\mathbf{0}}^{(2)}(\boldsymbol{\theta }_{2}^{\ast
})\left[ 1-\pi _{\mathbf{0}}^{(2)}(\boldsymbol{\theta }_{2}^{\ast })\right] }%
\left[ \nabla \pi _{\mathbf{0}}^{(2)}\left( \boldsymbol{\theta }_{2}^{\ast
}\right) \right] \left[ \nabla \pi _{\mathbf{0}}^{(2)}\left( \boldsymbol{\theta }%
_{2}^{\ast }\right) \right] ^{\prime }\right] ^{-1}\left[ \nabla \pi _{%
\mathbf{0}}^{(2)}\left( \boldsymbol{\theta }_{2}^{\ast }\right) \right] \right\},
\end{eqnarray*}%
where $\nabla \pi _{\mathbf{0}}^{(2)}\left( \boldsymbol{\theta }_{2}^{\ast
}\right) $ is the gradient of $\pi _{\mathbf{0}}^{(2)}\left( \boldsymbol{\theta }%
_{2}\right) $ evaluated at $\boldsymbol{\theta }_{2}^{\ast }$ and $\Sigma
_{2_{22}}^{-1}$ is the $q_{2}\times q_{2}$ submatrix of $\mathbf{\Sigma }%
_{2}^{-1}$ obtained by removing its first row and first column.

\begin{sloppypar} \subsection{Consistency and asymptotic normality of the unconditional and
 conditional maximum likelihood  estimators of $\protect\tau =\protect\tau %
_{1}+\protect\tau _{2}$ } \end{sloppypar}

The UMLE and CMLE of $\tau =\tau _{1}+\tau _{2}$ were defined in Subsection
3.3.3 by $\hat{\tau}^{(U)}=\hat{\tau}_{1}^{(U)}+\hat{\tau}_{2}^{(U)}$ and $%
\hat{\tau}^{(C)}=\hat{\tau}_{1}^{(C)}+\hat{\tau}_{2}^{(C)}$. From
assumptions \textbf{A} and \textbf{B} and the previous results we have that $%
\hat{\tau}^{(U)}/\tau =(\tau _{1}/\tau )\left( \hat{\tau}_{1}^{(U)}/\tau
_{1}\right) +(\tau _{2}/\tau )\left( \hat{\tau}_{2}^{(U)}/\tau _{2}\right) 
\overset{P}{\rightarrow }\alpha _{1}\times 1+\alpha _{2}\times 1=1$, as $%
\tau _{1}\rightarrow \infty $ and $\tau _{2}\rightarrow \infty $. Similarly, 
$\hat{\tau}^{(C)}/\tau \overset{P}{\rightarrow }1$ as $\tau _{1}\rightarrow
\infty $ and $\tau _{2}\rightarrow \infty $. Furthermore, $\tau
^{-1/2}\!\left(\hat{\tau}^{(U)}-\tau \right)\!=\!(\tau /\tau _{1})^{-1/2}\tau
_{1}^{-1/2}\!\left(\hat{\tau}_{1}^{(U)}\!-\tau _{1}\right)$ $+(\tau /\tau
_{2})^{-1/2}\tau _{2}^{-1/2}\ \left( \hat{\tau}_{2}^{(U)}-\tau _{2}\right) 
\overset{D}{\rightarrow }N(0,\sigma^{2}_U)$, as $\tau _{1}\rightarrow
\infty $ and $\tau _{2}\rightarrow \infty $, where $\sigma^{2}_U=
\alpha _{1}\sigma _{1U}^{2}+\alpha_{2}\sigma_{2}^{2}$. Likewise, 
$\tau^{-1/2}\left( \hat{\tau}^{(C)}-\tau \right) \overset{D}{\rightarrow }%
N(0,\sigma^{2}_C)$, where $\sigma^{2}_C=\alpha _{1}\sigma _{1C}^{2}+\alpha_{2}
\sigma_{2}^{2}$.

\vspace{3ex}

\section{Estimation of the matrices $\mathbf{\Sigma }_{k}^{-1}$ and $\mathbf{%
\Psi }_{1}^{-1}$}

Although estimates of $\mathbf{\Sigma }_{k}^{-1}$, $k=1,2$, and $\mathbf{%
\Psi }_{1}^{-1}$ can be obtained by replacing the parameters $\boldsymbol{\theta 
}_{k}^{\ast }$ and $\boldsymbol{\theta }_{1}^{\ast }$ by their respective
estimates in the expressions for these matrices, this procedure requires the 
computation of sums of $2^{n}$ terms. This is not a problem if $n$ is small, but 
if $n$ is large enough, say greater than or equal to $20$, the number of these 
terms is very large and the calculation of the estimates of $\mathbf{\Sigma }_
{k}^{-1}$ and $\mathbf{\Psi }_{1}^{-1}$ could be computationally expensive.

A procedure that requires a much smaller number of calculations is based on
estimates of the vectors $\mathbf{V}_{t}^{(k)}$, $t=1,\ldots ,\tau _{k}$, $%
k=1,2$. Vectors $\mathbf{V}_{t}^{(1)}$s were defined in the proofs of
Theorems 1 and 2, whereas vectors $\mathbf{V}_{t}^{(2)}$s are defined in
Sanathanan (1972) and we will give their definition later in this section.
As was shown in the proofs of Theorem 1 and 2, the vectors $\mathbf{V}%
_{t}^{(1)}$s are independent and equally distributed with mean vector equal
to the vector zero and covariance matrix equal to $\mathbf{\Sigma }_{1}^{-1}$
in the case of Theorem 1, and $\mathbf{\Psi }_{1}^{-1}$ in the case of
Theorem 2. The same result holds in the case of the vectors $\mathbf{V}%
_{t}^{(2)}$s, but the covariance matrix is $\mathbf{\Sigma }_{2}^{-1}$.
Therefore, the sample covariance matrix of the vectors $\mathbf{V}_{t}^{(k)}$%
s is an estimate of their covariance matrix ( $\mathbf{\Sigma }_{k}^{-1}$ or 
$\mathbf{\Psi }_{1}^{-1}$) based only on $\tau _{k}$ observations.

To implement this procedure we need to estimate the $\mathbf{V}_{t}^{(k)}$s
(they are unknown because depend on $\boldsymbol{\theta}_{k}^{\ast }$ and $%
\tau _{k}$). In the case of the $\mathbf{V}_{t}^{(1)}$s defined in Theorem 1
an estimate $\mathbf{\hat{V}}_{t}^{(1)}$ of $\mathbf{V}_{t}^{(1)}$ could be
obtained by replacing $\boldsymbol{\theta }_{1}^{\ast }$ by $\boldsymbol{\hat{\theta}%
}_{1}^{(U)}$ in the expression for $\mathbf{V}_{t}^{(1)}$, and $\tau _{1}$
could be estimated by $\hat{\tau}_{1}^{(U)}$. In the case of the $\mathbf{V}%
_{t}^{(1)}$s defined in Theorem 2 estimates of $\mathbf{V}_{t}^{(1)}$ could
be obtained by replacing $\boldsymbol{\theta }_{1}^{\ast }$ by $\boldsymbol{\hat{%
\theta}}_{1}^{(C)}$ in the expression for $\mathbf{V}_{t}^{(1)}$, and $\tau
_{1}$ could be estimated by $\hat{\tau}_{1}^{(C)}$. Estimates of $\mathbf{V}%
_{t}^{(2)}$s and $\tau _{2}$ could be obtained as in the case of Theorem 1,
and in this situation both UMLE and CMLE could be used. Thus, once $\hat{\tau%
}_{k}$ and the vectors $\mathbf{\hat{V}}_{t}^{(k)}$s are obtained, their
sample covariance matrix can be computed and used as an estimate of $\mathbf{%
\Sigma }_{k}^{-1}$ or $\mathbf{\Psi }_{1}^{-1}$.

The vectors $\mathbf{V}_{t}^{(2)}=[V_{t,1}^{(2)},\ldots
,V_{t,q_{2}+1}^{(2)}]^{\prime }$, $t=1,\ldots ,\tau _{2}$, are defined as
follows:

\begin{description}
\item[(a)] $V_{t,1}^{(2)}=1$ and $V_{t,j+1}^{(2)}=[\pi _{\mathbf{x}}^{(2)}(%
\boldsymbol{\theta }_{2}^{\ast })]^{-1}\partial \pi _{\mathbf{x}}^{(2)}(\boldsymbol{%
\theta }_{2}^{\ast })/\partial \theta _{j}^{(2)}$, $j=1,\ldots ,q_{2}$, if
the vector $X_{t}^{(2)}$ of link-indicator variables associated with the $t$%
-th element in $U_{2}$ equals the vector $\mathbf{x}\in \Omega -\{\mathbf{0}%
\}$;

\item[(b)] $V_{t,1}^{(2)}=-\left[ 1-\pi _{\mathbf{0}}^{(2)}(\boldsymbol{\theta }%
_{2}^{\ast })\right] /$ $\pi _{\mathbf{0}}^{(2)}(\boldsymbol{\theta }_{2}^{\ast
})$ and $V_{t,j+1}^{(2)}=[\pi _{\mathbf{0}}^{(2)}(\boldsymbol{\theta }_{2}^{\ast
})]^{-1}\partial \pi _{\mathbf{0}}^{(2)}(\boldsymbol{\theta }%
_{2}^{\ast })/\partial \theta _{j}^{(2)}$, $j=1,\ldots ,q_{2}$, if the
vector $X_{t}^{(2)}$ of link-indicator variables associated with the $t$-th
element in $U_{2}$ equals the vector $\mathbf{0}\in \Omega $.

\end{description}

\section{Conclusions}

Whenever we want to apply the results that we have obtained in this research to
an actual situation we need to determine whether or not the assumed
conditions are reasonably well satisfied by those observed in the actual
scenario. In particular \ we have assumed that the numbers $M_{i}$s of
people found in the sampled sites follow a multinomial distribution with
homogeneous cell probabilities and that the $M_{i}$s go to infinity while
the number of sites $n$ in the sample and $N$ in the frame are fixed. These
assumptions imply that in the actual scenario the $M_{i}$s should be
relatively large and not very variable. However, we do not know how large
they should be so that the results can be safely used. Therefore, Monte
Carlo studies are required to assess the reliability of the asymptotic
results under different scenarios with finite samples and populations. In
addition, although we have assumed a general parametric model for the
link-probabilities which allows the possibility that the parameter depends or
not on the sampled sites, the model precludes that the probabilities
depend on the $M_{i}$s as they go to infinity. Furthermore, this assumption
assures that the estimators of $\tau _{1}$ and $\tau _{2}$ be independent
and not only conditionally independent given the $M_{i}$s.

An alternative asymptotic framework to the one considered in this work is to
assume that the numbers of sites $n$ in the sample and $N$ in the frame go to 
infinity whereas the $M_{i}$s are fixed. However, this would involve dealing 
with multinomial distributions with infinite numbers of cells. An approach that 
could be used to derive asymptotic properties of estimators under this framework 
is the one considered by Rao (1958) who derived asymptotic properties of a maximum 
likelihood estimator of a parameter on which depend the cell probabilities of a 
multinomial distribution with infinite number of cells. However, this is a topic 
of a future research.

\section*{Acknowledgements}

 This research was partially supported by Grant PIFI-2013-25-73-1.4.3-8 
 from the Secretar{\'\i}a de Educaci\'on P\'ublica to Universidad Aut\'onoma de 
 Sinaloa.

\nocite{FelixThompson04}
\nocite{FelixMonjardin06}
\nocite{coullagresti99}
\nocite{agresti02}
\nocite{Felixetal15}
\nocite{Kalton09}
\nocite{Sanathanan72}
\nocite{Magnanietal05}
\nocite{Varadhan08}
\nocite{Spreen92}
\nocite{ThompsonFrank00}
\nocite{JohnstonSabin10}
\nocite{Birch64}
\nocite{Rao73}
\nocite{Rao58}
\nocite{Bishopetal75}
\nocite{Feller68}

\bibliographystyle{achicago}
\bibliography{ref-lts-asymptotics-rt}

\end{document}